\newtheorem{theorem}{Theorem}
\newtheorem{lemma}{Lemma}
\newtheorem{remark}{Remark}
\title{Finding single-source shortest $p$-disjoint paths:\\ fast computation and sparse preservers}
\author{
  Davide~Bilò\\
  Department of Humanities and Social Sciences, University of Sassari, Italy\\
  \texttt{davide.bilo@uniss.it}
  \and
  Gianlorenzo~D'Angelo\\
  Gran Sasso Science Institute, L'Aquila, Italy\\
  \texttt{gianlorenzo.dangelo@gssi.it}
  \and
  Luciano~Gualà\\
  Department of Enterprise Engineering, University of Rome ``Tor Vergata'', Italy\\
  \texttt{guala@mat.uniroma2.it}
  \and
  Stefano~Leucci\\
  Department of Information Engineering, Computer Science and Mathematics,\\ University of L'Aquila, Italy\\
  \texttt{stefano.leucci@univaq.it}
  \and
  Guido~Proietti\\
  Department of Information Engineering, Computer Science and Mathematics,\\ University of L'Aquila, Italy\\
  Institute for System Analysis and Computer Science ``Antonio Ruberti'', Italy\\
  \texttt{guido.proietti@univaq.it}
  \and
  Mirko~Rossi\\
  Gran Sasso Science Institute, L'Aquila, Italy\\
  \texttt{mirko.rossi@gssi.it}
}
\begin{document}

\maketitle

\begin{abstract}
Let $G$ be a directed graph with $n$ vertices, $m$ edges, and non-negative edge costs.
Given  $G$, a fixed source vertex $s$, and a positive integer $p$, we consider the problem of computing, for each vertex $t\neq s$, $p$ edge-disjoint paths of minimum total cost from $s$ to $t$ in $G$.
Suurballe and Tarjan~[Networks, 1984] solved the above problem for $p=2$ by designing a $O(m+n\log n)$ time algorithm which also computes a sparse \emph{single-source $2$-multipath preserver}, i.e., a subgraph
containing $2$ edge-disjoint paths of minimum total cost from $s$ to every other vertex of $G$.  The case $p \geq 3$ was left as an open problem.

We study the general problem ($p\geq 2$) and prove that any graph admits a sparse single-source $p$-multipath preserver with $p(n-1)$ edges. This size is optimal since the in-degree of each non-root vertex $v$ must be at least $p$.
Moreover, we design an algorithm that requires $O(pn^2 (p + \log n))$ time to compute both $p$ edge-disjoint paths of minimum total cost from the source to all other vertices and an optimal-size single-source $p$-multipath preserver.
The running time of our algorithm outperforms that of a natural approach that solves $n-1$ single-pair instances using the well-known
\emph{successive shortest paths} algorithm by a factor of $\Theta(\frac{m}{np})$ and is asymptotically near optimal if $p=O(1)$ and $m=\Theta(n^2)$.
Our results extend naturally to the case of $p$ vertex-disjoint paths.
\end{abstract}
\clearpage
 
\section{Introduction}

Consider a communication network modelled as a directed graph $G$ with $n$ vertices, $m$ edges, and non-negative edge costs. 
Whenever a \emph{source} vertex $s$ needs to send a message to a \emph{target} vertex $t$, we are faced with the problem of finding a \emph{good} path connecting $s$ and $t$ in $G$.
Typically, this path is chosen with the aim of minimizing the communication cost, i.e., the sum of the costs of the path's edges.
In a scenario where some edges of the network might be congested or faulty, it is useful to introduce some degree of redundancy in order to improve the communication reliability.
One of the possible approaches that aims to formalize the above requirements asks to find $p$ edge-disjoint paths from $s$ to $t$ in $G$ for some integer $p \ge 2$. Quite naturally, similarly to the case of the shortest path, we would like to minimize the sum of the costs of the edges in the selected paths.

This is equivalent to the problem of computing a minimum-cost flow of value $p$ from $s$ to $t$ in the unit-capacity network $G$ and can be solved in time $O(p(m+n\log n))$ using the \emph{successive shortest paths} (SSP) algorithm \cite{network_flows,erickson_mincost_flow}.

In this paper we focus on the \emph{single-source} case, in which a fixed source vertex $s$ wants to communicate with
every other vertex $t$ using $p$ edge-disjoint paths.
We distill the above discussion into the following two problems:
\begin{description}
    \item[Single-source $p$-multipath preserver problem:] We want to find a \emph{sparse} subgraph $H$ of $G$ such that, for every vertex $t \neq s$, $H$ contains $p$ edge-disjoint paths of minimum total cost from $s$ to $t$ in $G$. We will refer to such a subgraph $H$ as a \emph{single-source $p$-multipath preserver}.
    Among all possible feasible solutions, we aim at computing the one of minimum  \emph{size}, i.e., having the minimum number of edges.    
    \item[Shortest $p$ edge-disjoint paths problem:] For every vertex $t \neq s$, we want to compute a subset $S^t$ of edges from $G$ that induce $p$ edge-disjoint paths of minimum total cost from $s$ to $t$ in $G$.
\end{description}

Observe that the if the graph $G$ is not sufficiently connected, the single-source $p$-multipath preserver $H$ and some of the sets $S^t$ defined above might not exist.
To avoid this issue, we assume that $G$ is $p$-edge-outconnected from $s$, i.e., given any vertex $t \neq s$, $G$ contains $p$ edge-disjoint paths from $s$
to $t$.\footnote{It is possible to check whether a graph is $p$-edge-outconnected from $s$ in $O(pm \log \frac{n^2}{m})$ time \cite{Gabow95}.}

The above problems have been addressed by Suurballe and Tarjan for the special case $p=2$ in \cite{DBLP:journals/networks/SuurballeT84}, where they provide an algorithm requiring time $O(m + n \log n)$ to compute both a single-source $p$-multipath preserver of size  $2(n-1)$ and (a compact representation of) all sets $S^t$ of the shortest $p$ edge-disjoint paths problem.\footnote{After the execution of their algorithm, it is possible to compute each set $S^t$ in time $O(|S^t|)$.}
In their paper, the authors mention the case $p > 2$ as an important open problem.

\noindent In this paper we provide the following results:
\begin{itemize}
    \item We prove that any graph $G$ always admits a single-source $p$-multipath preserver of size $p(n-1)$. This size is optimal since the in-degree of each non-source vertex $t$ in $H$ needs to be at least $p$, even to preserve the $p$-edge-outconnectivity from $s$ to $t$. 
    \item We design an algorithm that requires $O(pn^2 (p + \log n))$ time to solve the shortest $p$ edge-disjoint paths problem. This improves over the natural algorithm that computes the sets $S^t$ with $n-1$ independent invocations of the SSP algorithm, which would require $O(pnm+ pn^2\log n)$ time. Up to logarithm factors, our algorithm is $\Theta(\frac{m}{np})$ times faster than the above algorithm based on SSP. Moreover, for $p = O(1)$ and $m = \Theta(n^2)$, the time complexity of our algorithm is optimal up to logarithmic factors. 
    Finally, our algorithm also computes a single-source $p$-multipath preserver $H$ of optimal size that contains all the edges in the sets $S^t$.
\end{itemize}

We point out that a modification of our algorithm allows us to handle graphs $G$ that are not $p$-edge-outconnected from $s$.
In this case, our algorithm computes, for each vertex $t \neq s$, a set of edges $S^t$ that induce $\sigma(t)$ edge-disjoint paths from $s$ to $t$ of minimum total cost in $G$, where $\sigma(t)$ is the minimum between $p$ and the maximum number of edge-disjoint paths from $s$ to $t$.
Moreover, the algorithm also returns a subgraph $H$ of $G$ of optimal size where each vertex $t \neq s$ has exactly $\sigma(t)$ incoming edges. As in the previous case, $H$ contains all edges in the sets $S^t$. The running time of our algorithm is asymptotically unaffected.

We also discuss a variant of the problem in which, instead of minimizing the overall cost of $p$ edge-disjoint paths, we aim at minimizing the cost of the path with maximum cost. We show that our algorithm provides an optimal $p$-approximation, unless $\mathrm{P}=\mathrm{NP}$.

Finally, our results can be extended to the case of vertex-disjoint paths via a standard transformation of the input graph \cite{DBLP:journals/networks/SuurballeT84}. 
All the above modifications and variants are discussed in Section~\ref{sec:extensions}.
Some of the proofs are deferred to the Appendix.

\subparagraph*{Related work.}
As already mentioned, the closest related work is the paper by Suurballe and Tarjan that studies the case $p=2$~\cite{DBLP:journals/networks/SuurballeT84}.

The single-source $p$-multipath preserver problem falls within a broad class of problems 
with a long-standing research tradition.
Here we are given a graph $G$ and we want to select a sparse subgraph $H$ of $G$ which maintains, either in an exact or in an approximate sense, some distance-related property of interest.
The goal is that of understanding the best trade-offs that can be attained between the size of $H$ and the accuracy of the maintained properties.
As a concrete example, if we focus on the cost of a single path (i.e., $p=1$) between pairs of vertices, a well-known notion adopted is that of \emph{graph spanners}, which has been introduced by Peleg and Sch\"{a}ffer~\cite{PelegS89}. A spanning subgraph $H$ is an $\alpha$-spanner of $G$ if the distance of each pair of vertices in $H$ is at most $\alpha$ times the corresponding distance in $G$. If $G$ is undirected then it is possible to compute, for any integer $k \ge 1$, a $(2k-1)$-spanner of size $O(n^{1+\frac{1}{k}})$ \cite{AlthoferDDJS93} (if we assume the Erd\H{o}s Girth Conjecture \cite{erd1965extremal}, this trade-off is asymptotically optimal), while there exist directed graphs for which any $\alpha$-spanner has size $\Omega(n^2)$.

When $\alpha=1$ and hence $H$ retains the exact distances of $G$, a $1$-spanner is usually called a \emph{preserver}. 
While $\Omega(n^2)$ edges might be necessary to preserve all-to-all distances, better trade-offs can be obtained if we only care to preserve distances between some pairs of vertices. For example, a shortest-path tree can be seen as a sparse \emph{single-source} preserver. More significant trade-offs can be obtained for different choices of the pairs of interest (see, e.g.,~\cite{Bodwin17,BodwinW16}). For more related results on the vast area of spanners and preservers, we refer the interested reader to the survey in \cite{spanner_survey}. 

Concerning the case of multiple paths ($p>1$), 
Gavoille et al.~\cite{GavoilleGV10} introduced the notion of
\emph{$p$-multipath spanner} of a weighted graph $G$, from which we borrow the term \emph{multipath}. A $p$-multipath $\alpha$-spanner of $G$ is a spanning subgraph $H$ of $G$ containing, for each pair of vertices $u,v$, $p$ edge-disjoint paths from $u$ to $v$ of total cost at most $\alpha$ times the cost of the cheapest $p$ edge-disjoint paths from $u$ to $v$ in $G$. Among other results, the authors of \cite{GavoilleGV10} prove the existence, for any choice of $p$, of a $p$-multipath $p(2k-1)$-spanner of size $O(p n^{1+\frac{1}{k}})$ for undirected graphs. 
Following \cite{GavoilleGV10}, there has been
further work on multipath spanners \cite{ChechikGP12,GavoilleGV11}. All of the above papers, however, focus on approximated costs, in the all-pairs setting on undirected graphs.
Since our focus is on $p$-multipath $\alpha$-spanners for directed graphs, in the single-source case, and for $\alpha=1$, such results cannot be directly compared to the one in this work. 

As discussed above, edge-disjoint paths can be seen as a strategy to achieve \emph{fault-tolerance} through redundancy.
Other approaches to address faults in networks, which aim at (approximately) preserving the length of the surviving shortest paths from a source vertex $s$, are captured by the notion of single-source fault-tolerant spanners and preservers \cite{BiloGLP18,ParterP16,ParterP18,Parter15,GuptaK17,BodwinGPW17,BaswanaCR18,BiloG0P16,BaswanaCHR20}.

\section{Preliminaries}
\label{sec:successive_shortest_path}
We denote by $V(G)$, $E(G)$, and $c:E(G)\rightarrow \mathbb{R}^+$, the set of vertices, the set of edges, and the cost function, respectively.
With a slight abuse of notation, if $S$ is a set of edges (resp. $\pi$ is a path), we denote by $c(S)$ (resp. $c(\pi)$) the sum of the costs $c(e)$ for $e \in S$ (resp. $e \in E(\pi)$).

In order to lighten the notation, in the rest of the paper we will assume that the graph is anti-symmetric, i.e., if $(u,v)\in E(G)$, then $(v,u)\not\in E(G)$. 
We make this assumption as we will define auxiliary graphs on the vertex set $V(G)$ in which some edge $(u,v)\in E(G)$ might appear in the reversed direction $(v,u)$ and  therefore, a non anti-symmetric graph $G$ may cause the presence of two parallel edges in the auxiliary graphs. It is easy to remove this assumption by distinguishing the two possible parallel edges with unique identifiers.

\subparagraph{Relation with the $s$-$t$-min-cost flow problem.}
\begin{figure}
    \centering
    \includegraphics[scale=1.1]{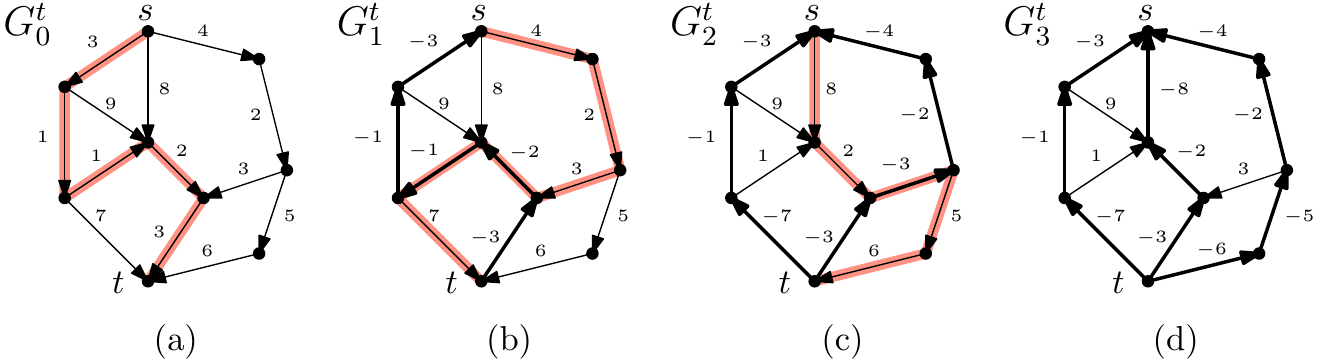}
    \caption{\small An execution of the successive shortest paths algorithm on a graph $G$. Figures (a), (b), (c), and (d) respectively show the residual networks $G_0^t = G$, $G_1^t$, $G_2^t$, and $G_3^t$. The shortest paths computed by the algorithm are highlighted in red. The edges that appear in the opposite orientation w.r.t.\ $G$ are shown in bold. If we orient the bold edges in $G_i^t$ as in $G$, we obtain the edges in $S_i^t$, i.e., those belonging to $i$ edge-disjoint paths of minimum total cost from $s$ to $t$ in $G$.}
    \label{fig:augment}
\end{figure}
For a fixed pair of vertices $s,t \in V(G)$, the problem of finding $p$ edge-disjoint paths of minimum total cost from $s$ to $t$  is a special case of the \emph{$s$-$t$-min-cost flow problem} where edges have unit capacities and the goal is to send $p$ units of flow from $s$ to $t$ at minimum total cost. 
\emph{Successive shortest path} (SSP) \cite{network_flows, erickson_mincost_flow} is a well-known algorithm that solves the $s$-$t$-min-cost flow problem. We now give a brief description of SSP for the special case of unit edge capacities.

The algorithm sends $p$ units of flow from $s$ to $t$ by iteratively pushing
one new unit of flow through a shortest path from $s$ to $t$ in the \emph{residual network} associated with the current flow.
More precisely, let the initial residual network be $G_0^t = G$. In the generic $i$-th iteration, SSP finds a shortest path $\pi_t$ from $s$ to $t$ in $G_{i-1}^t$, and uses it to compute a residual network $G_i^t$.

The residual network $G_i^t$ is obtained from $G^t_{i-1}$ by \emph{reversing} all the edges in $\pi_t$, where \emph{reversing} an edge $(u,v)$ of cost $c(u,v)$ means replacing $(u,v)$  with the edge $(v,u)$ of cost $c(v,u) = -c(u,v)$.
See Figure~\ref{fig:augment} for an example.

At the end of the $i$-th iteration, the $i$ units of flow are sent through the edges of $G$ that are reversed in the residual network $G_i^t$. We denote by $S_i^t$ the set of such edges, which contains exactly the edges of $i$ edge-disjoint paths from $s$ to $t$ of minimum total cost. Therefore, once the $p$-th iteration is completed, $S_p^t$ is a solution for the problem of finding $p$ edge-disjoint paths of minimum total cost from $s$ to $t$. An interesting observation that we will use later on is the following.
\begin{remark}\label{remark:oplus}
The set $S_i^t$ can be computed from $S_{i-1}^t$ and $\pi_t$ in $O(|S_i^t|+n)$ time by first setting $S_i^t=S_{i-1}^t$ and then by (i) deleting from $S_i^t$ all edges $(u,v) \in S_i^t$ that are reversed in $E(\pi_t)$, and (ii) adding to $S_i^t$ all edges $(u,v) \in E(\pi_t) \cap E(G)$.
\end{remark}

A straightforward implementation of the above algorithm requires time $O(pnm)$ since it computes $p$ shortest paths using the Bellman-Ford algorithm (notice, indeed, that the edge costs in the residual networks might be negative).
The above time complexity can be improved to $O(p(m+n\log n))$ by suitably re-weighting the residual network so that edge costs are non-negative and shortest paths are preserved, allowing the Dijkstra algorithm to be used in place of Bellman-Ford~\cite{erickson_mincost_flow}.

We can solve $n-1$ separated instances of $s$-$t$-min-cost flow (one for each node $t$) and obtain (i) the solution for the shortest $p$ edge-disjoint paths problem, i.e., the sets $S^t_p$ for each $t \in V(G) \setminus \{s\}$; (ii) a single-source $p$-multipath preserver by  making the union of all solutions $S^t_p$ obtained. However, the resulting single-source $p$-multipath preserver may not be sparse and the total running time needed to solve both problems is  $O(pn(m+n\log n))$. 

In Section $\ref{sec:size}$ we show the existence of a single-source $p$-multipath preserver of optimal size $p(n-1)$ and in Section $\ref{sec:algorithm}$ we design an algorithm that solves both our problems in time $O(pn^2(p+\log n))$.

\section{An optimal-size single-source \texorpdfstring{$p$}{p}-multipath preserver}
\label{sec:size}
In this section we show that it is possible to compute a single-source $p$-multipath preserver having size $p(n-1)$. 

\begin{figure}
    \centering
    \includegraphics[width=\textwidth]{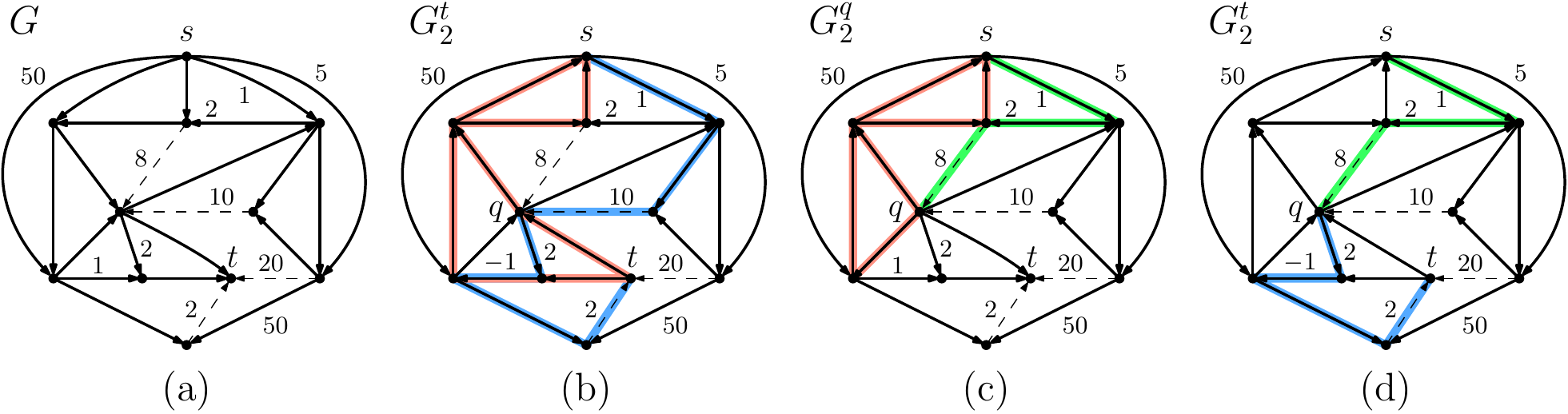}
    \caption{\small An example of the suboptimality property of Lemma~\ref{lemma:suboptimality} for $i=3$.
    (a) The graph $G$, in which the edges in $H_2$ are solid and the edges in $E(G) \setminus E(H_2)$ are dashed. Unlabelled edges cost $0$. For graphical convenience, $G$ is $2$-edge-outconnected from $s$ and can be made $3$-edge-outconnected by a suitable addition of costly edges. (b) The graph $G^t_2$ in which the edges that appear in $S_2^t$ in their opposite orientation are highlighted in red. A shortest path $\pi_t \in \Pi(s,t; G_2^t)$ is highlighted in blue and traverses vertex $q = q(\pi_t)$. (c) The graph $G_2^q$ where the (reversed versions of) the edges in $S_2^q$ are highlighted in red, and a shortest path $\pi_q \in \Pi(s,q; G_2^q)$ is highlighted in green. (d) The graph $G_2^t$ where the path $\pi_q \circ \pi_t[q:t]$ is highlighted in green and blue and belongs to $\Pi(s,t; G_2^t)$.}
    \label{fig:suboptimality_composition}
\end{figure}

We compute such a preserver iteratively: we start with an empty graph $H_0 = (V(G), \emptyset)$ and, during the $i$-th iteration, we construct a $i$-multipath preserver\footnote{In the following we might shorten \emph{single-source $i$-multipath preserver} to \emph{$i$-multipath preserver} or, when $i$ is clear from the context, simply \emph{preserver}.} $H_i$ of $G$ by adding to $H_{i-1}$ a single new edge $e_t$ entering in $t$ for each vertex $t \in V(G) \setminus \{s\}$.
This process stops at the end of the $p$-th iteration. We will show that $H_p$ is a sparse single-source  $p$-multipath preserver. Notice that, by construction, vertex $s$ has in-degree $0$ in $H_i$ and each other vertex has in-degree $i$, therefore $H_p$ has size $p(n-1)$.

We will prove by induction on $i$ that $H_i$ contains $i$ edge-disjoint paths of minimum total cost (in $G$) from $s$ to all vertices $t \in V(G) \setminus \{s\}$.
Since this is trivially true when $i=0$, in the rest of the section we 
assume that the induction hypothesis is true for $H_{i-1}$ with $1 \le i < p$,
and we focus on proving that it remains true for $H_i$.

Following the notation of Section~\ref{sec:successive_shortest_path},
we denote by $S_{i-1}^t$ the set of edges belonging to any $i-1$ edge disjoint paths from $s$ to $t$ of minimum total cost in $H_{i-1}$ (and hence in $G$).
We let $G_{i-1}^t$ be the residual network obtained from $G$ by reversing the edges in $S_{i-i}^t$.

It will be convenient to define distances as pairs of elements from $\mathbb{R} \cup \{+\infty\}$. Given $d = (d_1, d_2)$ and $d' = (d'_1, d'_2)$ we denote by $d+d'$ the pair $(d_1 + d'_1, d_2 + d'_2)$. We also compare distances lexicographically, and write $d \prec d'$ to denote that the pair $d$ precedes $d'$ in the lexicographical order. Similarly, $d \preceq d'$ if $d \prec d'$ or $d = d'$.
Given any path $\pi$, let $\eta(\pi)$ be the number of edges of $\pi$ that are in $E(G)\setminus E(H_{i-1})$.
We can associate $\pi$  with a pair $|\pi| = (c(\pi), \eta(\pi))$. With a slight abuse of notation, we can therefore extend the above linear order 
to paths: for two paths $\pi$ and $\pi'$, we write $\pi \prec \pi'$ (resp. $\pi \preceq \pi'$) as a shorthand for $|\pi| \prec |\pi'|$ (resp. $|\pi| \preceq |\pi'|$).
Intuitively, when we compare paths w.r.t.\ $\preceq$, the values of $\eta(\cdot)$ serve as tie-breakers between paths having the same cost.
In the following $\Pi(u, v; G')$ will denote the set of paths from $u$ to $v$ in $G'$ that are shortest w.r.t.\ the total order relation $\preceq$. When $\Pi(u, v; G')$ contains a single path we denote by $\pi(u, v; G')$ the sole path in $\Pi(u, v; G')$.
Given a path $\pi_1$ from $v_0$ to $v_1$ and a path $\pi_2$ from $v_1$ to $v_2$, we denote by $\pi_1 \circ \pi_2$ the path from $v_0$ to $v_2$ that is obtained by composing $\pi_1$ and $\pi_2$. Given a path $\pi$ from $v_0$ and $v_1$, and two distinct vertices $u$ and $v$ of $\pi$ such that  $\pi$ traverses $u$ and $v$ in this order, we denote by $\pi[u:v]$ the subpath of $\pi$ from $u$ to $v$.

The edge $e_t$ entering $t$ selected by the algorithm is the last edge of an arbitrarily chosen path in $\Pi(s, t; G_{i-1}^t)$.
For $\pi \in \Pi(s, t; G_{i-1}^t)$, we define $q(\pi)$ as the last \emph{internal} vertex of $\pi$ such that its incoming edge in $\pi$ belongs to $E(G) \setminus E(H_{i-1})$. If no such vertex exists, we let $q(\pi)=s$ (see Figure~\ref{fig:suboptimality_composition}~(b)).

\begin{figure}
    \centering
    \includegraphics{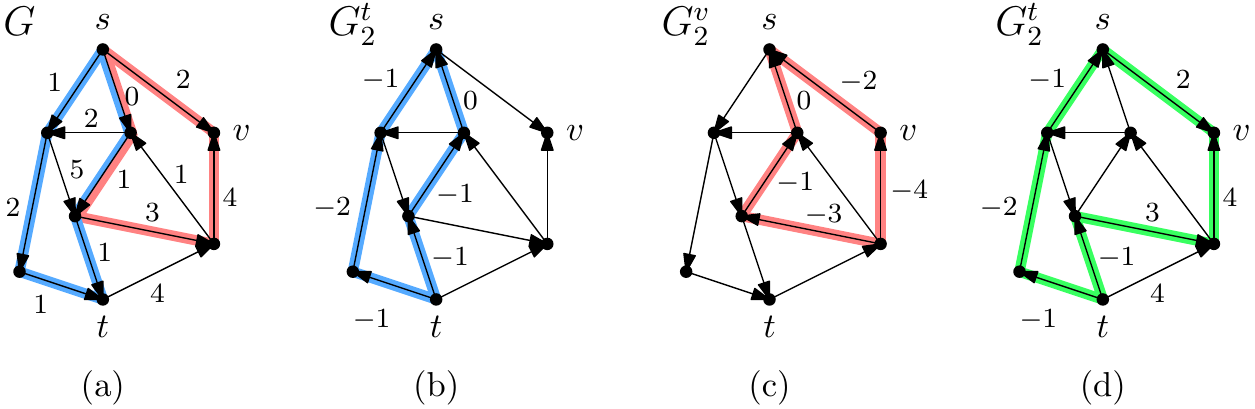}
    \caption{\small (a) A graph $G$ with non-negative costs. The edges in $S_2^t$ (resp. $S_2^v$) are highlighted in blue (resp. red) and induce two edge-disjoint paths of minimum total cost from $s$ to $t$ (resp. $v$) in $G$. 
    The edges in $S^v_2$ are exactly the ones used by the flow $f'$ in the proof of Lemma~\ref{lemma:brown}.
    (b) The graph $G_2^t$ obtained from $G$ by reversing the edges in $S^t_{2}$. The reversed edges (highlighted in blue)  correspond to those used by the flow $f$ in the proof of Lemma~\ref{lemma:brown}. (c) The graph $G_2^v$ obtained from $G$ by reversing the edges in $S^v_{2}$.
The reversed edges are highlighted in red. (d) The graph $G_2^t$ in which the edges in $\Delta(t,v)$ are highlighted in green and induce $2$ edge-disjoint paths of minimum total cost from $t$ to $v$ in $G_2^t$. These edges are the ones used by the flow $f''$ in the proof of Lemma~\ref{lemma:brown}.
The edge costs that are missing in (b), (c), or (d) match those of the corresponding edges in (a). 
}
    \label{fig:delta}
\end{figure}

The main technical ingredient of the result in this section is a suboptimality property, which will be given formally in Lemma~\ref{lemma:suboptimality}.
Intuitively, if $q = q(\pi_t)$ for some path $\pi_t \in \Pi(s,t; G_{i-1}^t)$, then this property ensures that the composition $\pi_q \circ \pi_t[q:t]$ of any shortest path $\pi_q \in \Pi(s,q; G_{i-1}^q)$ with the suffix $\pi_t[q:t]$ of $\pi_t$ is also a shortest path in $\Pi(s,t, G^t_{i-1})$.
Since (up to the orientation of its edges) $\pi_t[q:t]$ contains a single edge $e_t$ not already in $H_{i-1}$ (i.e., the one entering in $t$), this property allows to reuse the edges in $H_{i-1}$ and in $S_i^q \setminus E(H_{i-1})$ to build $S_i^t \setminus \{ e_t \}$.  See Figure \ref{fig:suboptimality_composition}~(d) for an example.
The rest of this section formalizes the above intuition. 

Given any two nodes $t,v \in V(G)$, we denote with $\Delta(t,v)$ the set of edges of $G_{i-1}^t$ that appear in the opposite orientation in $G_{i-1}^v$ (see Figure~\ref{fig:delta}.). Formally, $(x,y) \in \Delta(t,v)$ iff $(x,y) \in E(G_{i-1}^t)$ and $(y,x) \in E(G_{i-1}^v)$.
Equivalently, $(x,y) \in \Delta(t,v)$ iff exactly one of the following conditions hold: (i) $(y,x) \in S_{i-1}^t$; (ii) $(x,y) \in S_{i-1}^v$.
It follows from the above observation that $(x,y) \in \Delta(t,v)$ iff $(y,x) \in \Delta(v,t)$. Moreover,  $E(G) \cap \Delta(t,v) \subseteq S_{i-1}^v \subseteq E(H_{i-1})$. 
The next three lemmas will be instrumental to prove Lemma~\ref{lemma:suboptimality}.

\begin{lemma}
\label{lemma:brown}
The edges in $\Delta(t,v)$ are exactly those belonging to $i-1$ edge disjoint paths of minimum total cost from $t$ to $v$ in $G_{i-1}^t$.
\end{lemma}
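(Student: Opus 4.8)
The plan is to cast everything as a min-cost flow computation and to read $\Delta(t,v)$ as the support of the ``difference'' of the two flows underlying $S_{i-1}^t$ and $S_{i-1}^v$, following the flows $f,f',f''$ drawn in Figure~\ref{fig:delta}. I regard $S_{i-1}^t$ as a $0/1$ flow $f$ of value $i-1$ from $s$ to $t$ in $G$, and $S_{i-1}^v$ as a $0/1$ flow $f'$ of value $i-1$ from $s$ to $v$; by the definition of these sets and by the induction hypothesis both are minimum-cost flows for their value and sink. The residual network of $f$ is exactly $G_{i-1}^t$, where a reversed edge carries the negated cost; I write $c^t(\cdot)$ (resp.\ $c^v(\cdot)$) for costs measured with these residual costs in $G_{i-1}^t$ (resp.\ $G_{i-1}^v$), so that the residual cost of any flow routed in $G_{i-1}^t$ equals the change it induces on $c(f)$.

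First I would identify $\Delta(t,v)$ with a flow $f''$ in $G_{i-1}^t$. The characterization that $(x,y)\in\Delta(t,v)$ iff the underlying edge lies in exactly one of $S_{i-1}^t,S_{i-1}^v$ says precisely that $\Delta(t,v)$ is the support of the residual flow $f''=f'\ominus f$: edges of $S_{i-1}^v\setminus S_{i-1}^t$ appear forward, edges of $S_{i-1}^t\setminus S_{i-1}^v$ appear reversed, and edges common to both cancel. Computing divergences vertex by vertex, $f''$ has net outflow $i-1$ at $t$, net inflow $i-1$ at $v$, and is balanced elsewhere, so $f''$ is a $0/1$ flow of value $i-1$ from $t$ to $v$ in $G_{i-1}^t$ whose residual cost is $c^t(f'')=c(f')-c(f)$. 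I would then prove that $f''$ is a \emph{minimum}-cost such flow by an exchange argument: if some flow $h$ of value $i-1$ from $t$ to $v$ in $G_{i-1}^t$ had $c^t(h)<c^t(f'')$, then, since $G_{i-1}^t$ is the residual of $f$, superposing $f$ and $h$ would yield a feasible flow of value $i-1$ from $s$ to $v$ in $G$ of cost $c(f)+c^t(h)<c(f)+\big(c(f')-c(f)\big)=c(f')$, contradicting the minimality of $f'=S_{i-1}^v$. Hence $\Delta(t,v)$ carries $i-1$ units of minimum-cost flow from $t$ to $v$.

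It remains to upgrade ``minimum-cost flow of value $i-1$'' to ``exactly $i-1$ edge-disjoint paths'', i.e.\ to show that the support $\Delta(t,v)$ decomposes into $i-1$ edge-disjoint $t$–$v$ paths with no leftover edges. Flow decomposition writes $f''$ as $i-1$ edge-disjoint $t$–$v$ paths together with a collection of edge-disjoint cycles, so I must rule out the cycles. Any such cycle $C$ lies in $G_{i-1}^t$, which has no negative cycle because $f$ is min-cost, giving $c^t(C)\ge 0$; its reversal $C^R$ lies in $G_{i-1}^v$ (using $(x,y)\in\Delta(t,v)\Leftrightarrow(y,x)\in\Delta(v,t)\subseteq E(G_{i-1}^v)$) and satisfies $c^v(C^R)=-c^t(C)$, and since $f'$ is min-cost this forces $c^t(C)\le 0$. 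Thus every cycle in $\Delta(t,v)$ has cost exactly $0$.

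I expect this last cycle-elimination step to be the main obstacle, because the statement genuinely fails for arbitrary min-cost flows: two equal-cost path systems can have a cyclic symmetric difference (e.g.\ replacing a direct edge $s\to a$ by a detour $s\to c\to a$ of the same cost inserts a spurious zero-cost cycle into $\Delta(t,v)$). The resolution is that $S_{i-1}^t$ and $S_{i-1}^v$ are not arbitrary min-cost flows but the canonical ones induced by the $\preceq$-shortest paths, where the lexicographic refinement $|\pi|=(c(\pi),\eta(\pi))$ penalizes edges outside $H_{i-1}$ and thereby pins down a unique flow. Concretely, pushing one unit around a cost-$0$ cycle $C\subseteq E(G_{i-1}^t)$ yields a distinct $s$–$t$ flow of the same cost whose support trades $H_{i-1}$-edges for non-$H_{i-1}$-edges, contradicting the canonicity (uniqueness) of the $\preceq$-selected flow $f$. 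This is exactly where the tie-breaking order introduced before the lemma earns its keep, so I would present steps one through three cleanly via flows and concentrate the care on this final acyclicity argument.
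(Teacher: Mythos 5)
Your first three steps are essentially the paper's own proof. The paper likewise realizes $\Delta(t,v)$ as the support of a flow $f''$ of value $i-1$ from $t$ to $v$ in $G_{i-1}^t$ obtained by composing the flow carried by $S_{i-1}^t$ with the one carried by $S_{i-1}^v$; the only difference is how minimality of $f''$ is certified. You use a superposition/exchange argument against the minimality of the flow on $S_{i-1}^v$, while the paper observes that the residual network of $f''$ w.r.t.\ $G_{i-1}^t$ is exactly $G_{i-1}^v$, which contains no negative-cost cycle, and invokes the negative-cycle optimality criterion. Both certificates are standard and correct; yours is marginally more self-contained, the paper's marginally shorter.

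The divergence is your fourth step. You are right that flow decomposition yields $i-1$ paths \emph{plus possibly cycles}, and your two-sided argument that any such cycle has cost exactly $0$ (no negative cycles in $G_{i-1}^t$, and the reversed cycle lives in $G_{i-1}^v$ with negated cost) is sound; the paper simply does not address this and identifies ``support of a min-cost flow of value $i-1$'' with ``edges of $i-1$ edge-disjoint min-cost paths''. However, the resolution you propose does not work. Every edge of $\Delta(t,v)$ already corresponds to an edge of $H_{i-1}$: the paper notes $E(G)\cap\Delta(t,v)\subseteq S_{i-1}^v\subseteq E(H_{i-1})$, and the reversed edges of $\Delta(t,v)$ are reversals of edges of $S_{i-1}^t\subseteq E(H_{i-1})$. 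Hence rerouting the $s$--$t$ flow around a zero-cost cycle of $\Delta(t,v)$ exchanges $H_{i-1}$-edges for other $H_{i-1}$-edges, leaves $\eta$ unchanged, and produces a flow that is just as good under $\preceq$: there is no ``trade of $H_{i-1}$-edges for non-$H_{i-1}$-edges'' and no contradiction. Moreover, the ``canonicity (uniqueness)'' of the $\preceq$-selected flow that you invoke is never established anywhere --- $\Pi(\cdot,\cdot;\cdot)$ may contain several paths and the algorithm picks one arbitrarily. So either you adopt the paper's reading, in which case your step three already finishes the proof and step four should be dropped, or you insist on literal acyclicity of $\Delta(t,v)$, in which case your argument has a genuine hole precisely where you predicted the difficulty would lie.
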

\begin{proof}
Consider $G_{i-1}^t$ as an instance of min-cost flow with unit capacity where we want to send $i-1$ units of flow from $t$ to $v$. We define a first flow assignment $f$ that sends $i-1$ units of flow from $t$ to $s$ in $G_{i-1}^t$ using the edges in $S_{i-1}^t$ in the reverse direction (see Figure~\ref{fig:delta}~(b)).
More precisely, $\forall (x,y) \in E(G_{i-1}^t)$, $f(x,y) = 1$ if $(y,x) \in S_{i-1}^t$, and $f(x,y)=0$ otherwise. Notice that $f$ is a flow of value $|f|=i-1$ in $G_{i-1}^t$ and that the associated residual graph is $G$.  
We now consider a minimum-cost flow $f'$ that pushes $i-1$ units of flow from $s$ to $v$ in $G$ using the edges in $S^v_{i-1}$ (see Figure~\ref{fig:delta}~(a) where the edges used by $f'$ are highlighted in red). In particular, we define  $f'(e) = 1$ if $e \in S_{i-1}^v$, and $f(e)=0$ otherwise.
The residual graph associated with $f'$ (w.r.t.\ $G$) is $G_{i-1}^v$ and, since $f'$ is a minimum-cost flow, $G_{i-1}^v$ does not contain any negative-cost cycle~\cite{erickson_mincost_flow}.

We can obtain a flow $f''$ from $t$ to $v$ in  $G_{i-1}^t$ with $|f''|=i-1$ by composing $f$ and $f'$: we first push $i-1$ units of flow from $t$ to $s$  in $G_{i-1}^t$ according to $f$ and then push  $i-1$ units of flow from $s$ to $t$ in the residual network $G$ according to $f'$ (see Figure~\ref{fig:delta}~(d)).
More precisely, the resulting net flow $f''$ is defined as follows: given $(x,y) \in E(G_{i-1}^t)$, $f''(x,y) =1$ iff either (i) $f(x,y)=1$ and $f'(y,x)=0$, or (ii) $f(x,y)=0$ and $f'(x,y)=1$. 
The residual network associated with $f''$ (w.r.t.\ $G_{i-1}^t$) is exactly $G_{i-1}^v$ and, since it contains no negative-cost cycles, $f''$ is also a minimum-cost flow.

To conclude the proof it suffices to notice that the edges $(x,y)$ for which $f''(x,y)=1$ are exactly those in $\Delta(t,v)$.
\end{proof}

\begin{figure}[t]
    \centering
    \includegraphics[scale=1]{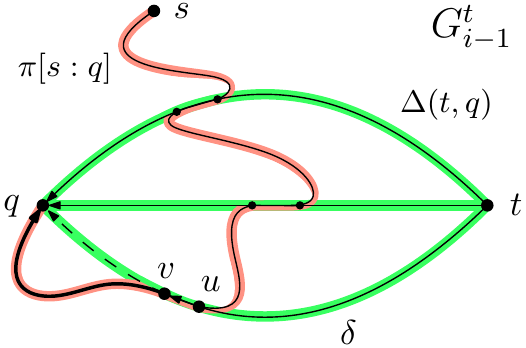}
    \caption{\small A qualitative representation of the proof of Lemma~\ref{lemma:feasible}. We are supposing towards a contradiction that $\pi[s:q]$ (highlighted in red) is not entirely contained in $G_{i-1}^q$ and hence traverses an edge $(u,v)$ belonging to the set $\Delta(t,q)$ (highlighted in green).
    The subpath of $\pi$ (resp. $\delta$)  from $v$ to $q$ is shown in bold (resp. is dashed).}
    \label{fig:suboptimality_containment}
\end{figure}

\begin{lemma}
\label{lemma:feasible}
For every $t \in V(G)$,  let $\pi \in \Pi(s,t;G_{i-1}^t)$ and $q=q(\pi)$.  The subpath $\pi[s:q]$ is entirely contained in $G_{i-1}^{q}$.
\end{lemma}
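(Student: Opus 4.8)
The goal is to show that the prefix $\pi[s:q]$ of a shortest path $\pi \in \Pi(s,t;G_{i-1}^t)$ (where $q=q(\pi)$) survives entirely inside the residual network $G_{i-1}^q$, even though $G_{i-1}^q$ may reverse edges differently from $G_{i-1}^t$. The plan is to argue by contradiction: suppose $\pi[s:q]$ is \emph{not} contained in $G_{i-1}^q$, and derive a path strictly better than $\pi$ in $G_{i-1}^t$, contradicting $\pi \in \Pi(s,t;G_{i-1}^t)$.

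\medskip

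First I would pin down where $\pi[s:q]$ can fail to lie in $G_{i-1}^q$. The two residual networks $G_{i-1}^t$ and $G_{i-1}^q$ are both obtained from $G$ by reversing edge sets $S_{i-1}^t$ and $S_{i-1}^q$, respectively; hence an edge of $\pi[s:q]$ (which lives in $G_{i-1}^t$) is missing from $G_{i-1}^q$ exactly when it appears reversed there, i.e.\ precisely when it lies in $\Delta(t,q)$. So if containment fails, $\pi[s:q]$ must traverse some edge $(u,v)\in\Delta(t,q)$. I would take $(u,v)$ to be, say, the \emph{last} such edge along $\pi[s:q]$, so that $\pi[v:q]$ is entirely inside $G_{i-1}^q$.

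\medskip

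The crucial leverage comes from Lemma~\ref{lemma:brown}, applied with the roles of the two targets swapped: the edges of $\Delta(t,q)$ form $i-1$ edge-disjoint minimum-cost paths from $t$ to $q$ in $G_{i-1}^t$, and symmetrically (since $(x,y)\in\Delta(t,q)$ iff $(y,x)\in\Delta(q,t)$) the reversed edges $\Delta(q,t)$ give $i-1$ minimum-cost $q$-to-$t$ paths in $G_{i-1}^q$. From the optimality of these flows one extracts that $G_{i-1}^q$ contains no negative-cost cycles and, more usefully, that there is a ``cheap return route'' $\delta$ realizing the reversal of $(u,v)$: because $(u,v)\in\Delta(t,q)$ means $(v,u)\in\Delta(q,t)$, the edge $(v,u)$ participates in these optimal $q\to t$ paths, giving a handle to build a detour $\delta$ from $v$ back toward $q$ that lives in $G_{i-1}^q$ and whose cost is controlled. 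The idea is then to splice: replace the portion of $\pi[s:q]$ that uses the offending edge $(u,v)$ by the subpath $\pi[s:v]$ followed by $\delta$ (dashed in Figure~\ref{fig:suboptimality_containment}), reaching $q$ while avoiding the edge that left $G_{i-1}^q$. Since $\delta$ lies in $G_{i-1}^q$ and the swapped-out portion of $\pi$ included the edge $(u,v)\in E(G)\setminus E(H_{i-1})$ entering $v$ (recall $E(G)\cap\Delta(t,q)\subseteq S_{i-1}^q\subseteq E(H_{i-1})$ constrains which edges can appear, and $q$ was chosen as the \emph{last} internal vertex whose incoming edge is outside $H_{i-1}$), the new route uses strictly fewer non-preserver edges or strictly lower cost. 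This yields a path $\pi'$ from $s$ to $q$ with $|\pi'|\prec|\pi[s:q]|$, and hence $\pi'\circ\pi[q:t]\prec\pi$ in $G_{i-1}^t$, contradicting the optimality of $\pi$.

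\medskip

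The main obstacle I anticipate is making the cost bookkeeping of the splice genuinely strict in the lexicographic order $\preceq$. The composed flow argument of Lemma~\ref{lemma:brown} only guarantees \emph{no negative cycles}, i.e.\ that the detour $\delta$ is no more expensive in $c$-cost than the segment it replaces; strict improvement in the first coordinate need not hold. So the real work is to show that either the cost strictly drops, or the cost is preserved while the second coordinate $\eta(\cdot)$ (the count of edges outside $H_{i-1}$) strictly decreases — this is exactly why the tie-breaker $\eta$ was built into the order in the first place. I would lean on the definition of $q=q(\pi)$ as the \emph{last} internal vertex entered by an edge of $E(G)\setminus E(H_{i-1})$ to guarantee that no such edge lies on $\pi[q:t]$, and then argue that the spliced detour $\delta$, being drawn from $\Delta(q,t)$ whose $G$-edges lie in $S_{i-1}^q\subseteq E(H_{i-1})$, cannot introduce new non-preserver edges, forcing the strict decrease in $\eta$ whenever the $c$-cost ties. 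Handling this tie-break cleanly is where I expect to spend most of the care.
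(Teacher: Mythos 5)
Your skeleton matches the paper's: assume $\pi[s:q]$ leaves $G_{i-1}^q$, note that this forces it to cross an edge of $\Delta(t,q)$, take the \emph{last} such edge $(u,v)$, invoke Lemma~\ref{lemma:brown} to obtain a minimum-cost $t$-to-$q$ path $\delta$ through $(u,v)$ built from $\Delta(t,q)$, splice $\delta[v:q]$ in place of $\pi[v:q]$, and let the $\eta$-tie-breaker deliver strictness. However, the step that is supposed to produce the strict decrease is wrong as stated. You claim the swapped-out portion contains ``the edge $(u,v)\in E(G)\setminus E(H_{i-1})$ entering $v$,'' but an edge of $\Delta(t,q)$ can never lie in $E(G)\setminus E(H_{i-1})$: it is either a reversed copy of an edge of $S_{i-1}^t$ (hence not in $E(G)$ at all) or an edge of $S_{i-1}^q\subseteq E(H_{i-1})$. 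The edge that forces $\eta(\pi[v:q])\ge 1$ is the edge of $\pi$ \emph{entering $q$}, which lies in $E(G)\setminus E(H_{i-1})$ by the definition of $q=q(\pi)$; and precisely because it lies outside $E(H_{i-1})$ it cannot belong to $\Delta(t,q)$, which is what guarantees $v\neq q$, i.e.\ that the segment $\pi[v:q]$ being swapped out is nonempty and actually contains that edge. (Your appeal to the definition of $q$ via the suffix $\pi[q:t]$ is aimed at the wrong subpath.) Without this, the tie-breaking argument has no edge to point at.

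The cost comparison also needs repair. Suboptimality of $\pi$ in $G_{i-1}^t$ gives $c(\pi[v:q])\le c(\delta[v:q])$, so the detour may be strictly \emph{more} expensive, in which case your splice yields a worse path and no contradiction. The paper disposes of the case $c(\pi[v:q])<c(\delta[v:q])$ by splicing in the opposite direction --- substituting $\pi[v:q]$ for $\delta[v:q]$ inside $\delta$ --- to contradict the flow-optimality of $\Delta(t,q)$ from Lemma~\ref{lemma:brown}. Your ``no negative cycles'' bound $c(\delta[v:q])\le c(\pi[v:q])$ can be made to work as an alternative, but only if you identify the right cycle in the right graph: the cycle $\pi[v:q]$ followed by the reverse of $\delta[v:q]$, in $G_{i-1}^q$ (which is the residual graph of the flow in Lemma~\ref{lemma:brown}, hence negative-cycle-free); this uses that $\pi[v:q]$ lies in $G_{i-1}^q$ by the choice of $(u,v)$ as last, and that the reverses of $\Delta(t,q)$-edges lie in $G_{i-1}^q$. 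Note also that $\delta$ itself lives in $G_{i-1}^t$, not in $G_{i-1}^q$ as you assert --- the spliced path $\pi[s:v]\circ\delta[v:q]$ must live in $G_{i-1}^t$, since the contradiction is with $\pi[s:q]$ being $\preceq$-shortest there.
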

\begin{proof}
If $s=q$ the subpath $\pi[s:q]$ is empty and the claim is trivially true. We therefore consider $s \neq q$ and suppose towards a contradiction that $\pi[s:q]$ is not entirely contained in $G_{i-1}^q$.
Then, $\pi[s:q]$ traverses at least one edge in $\Delta(t,q)$. Let $(u,v)$ be the last edge traversed by $\pi[s:q]$ that belongs to $\Delta(t,q)$.
By Lemma~\ref{lemma:brown}, the edges in $\Delta(t,q)$ induce  $i-1$ edge disjoints paths of minimum total cost from $t$ to $q$ in the subgraph of $G_{i-1}^t$. Let $\delta$ one such such path traversing $(u,v)$.

Since, by definition of $q$, the edge $e$ of $\pi[s:q]$ entering in $q$ is in $E(G) \setminus E(H_{i-1})$, we have $e \not\in \Delta(t,q)$.
Then, the subpath $\pi[v:q]$ of $\pi[s:q]$ is not empty and, by our choice of $(u,v)$ does not traverse any edge in $\Delta(t,q)$.

By the suboptimality property of shortest paths, $\pi[v:q] \preceq \delta[v:q]$ and hence $c(\pi[v:q]) \leq c(\delta[v:q])$. If $c(\pi[v:q]) < c(\delta[v:q])$, we can replace $\delta[v:q]$ with $\pi[v:q]$ in $\delta$ to obtain a path $\delta'$ from $t$ to $q$ in $G_{i-1}^t$ with $c(\delta') < c(\delta)$.
This contradicts Lemma~\ref{lemma:brown} since it implies the existence of $i-1$ edge-disjoint paths from $t$ to $q$ in $G_{i-1}^t$ with a total cost smaller than $c(\Delta(t,q))$ (see Figure~\ref{fig:suboptimality_containment}).

If $c(\pi[v:q]) = c(\delta[v:q])$, we can replace $\pi[v:q]$ with $\delta[v:q]$ in $\pi[s:q]$ to obtain a path $\pi'$ from $s$ to $q$ in $G_{i-1}^t$ satisfying $c(\pi') = c(\pi[s:q])$. 
Since all edges of $E(G) \cap E(\delta[v:q]) \subseteq \Delta(t,v)$ are in $H_{i-1}$, $\pi[v:q]$ contains more edges in $E(G) \setminus E(H_{i-1})$ than $\delta[v:q]$, thus $\pi' \prec \pi[s:q]$.
This is a contradiction since, by the suboptimality property of shortest paths and by our choice of $\pi \in \Pi(s,t; G_{i-1}^t)$, $\pi[s:q]$ must be a shortest path from $s$ to $q$ in $G_{i-1}^t$ w.r.t.\ $\preceq$.
\end{proof}

\begin{lemma}
\label{lemma:feasible2}
Let $t,q \in V(G) \setminus \{s\}$, and let $\pi$ be a simple path from $s$ to $q$ in $G_{i-1}^q$ such that the edge of $\pi$ entering in $q$ is in $E(G) \setminus E(H_{i-1})$. If $\pi$ is not entirely contained in $G_{i-1}^t$, then there exists a path $\pi'$ from $s$ to $q$ in $G_{i-1}^t$ such that $\pi' \prec \pi$.
\end{lemma}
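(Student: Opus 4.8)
The plan is to exploit the min-cost-flow structure uncovered by Lemma~\ref{lemma:brown}. Since $G_{i-1}^q$ and $G_{i-1}^t$ share the same underlying (undirected) graph, the hypothesis that $\pi$ is not contained in $G_{i-1}^t$ is equivalent to saying that $\pi$ traverses at least one edge of $\Delta(q,t)$. Recall from the proof of Lemma~\ref{lemma:brown} that $G_{i-1}^q$ is precisely the residual network of the min-cost flow $f''$ that routes $i-1$ units from $t$ to $q$ in $G_{i-1}^t$ along $\Delta(t,q)$; in particular $G_{i-1}^t$ contains no negative-cost cycle and $f''$ is a min-cost $t$-$q$ flow of value $i-1$. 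I would fix a potential $\phi$ certifying the optimality of $f''$, so that the reduced cost $c_\phi(x,y)=c(x,y)+\phi(x)-\phi(y)$ is non-negative on every edge of the residual $G_{i-1}^q$ and non-positive on every flow-carrying edge, i.e.\ on $\Delta(t,q)$. The key preliminary observation is that every edge of $\Delta(t,q)$ (and of $\Delta(q,t)$) has $\eta(\cdot)=0$: indeed $E(G)\cap\Delta(t,q)\subseteq S_{i-1}^q\subseteq E(H_{i-1})$, while the remaining edges of $\Delta(t,q)$ are reversed and hence lie outside $E(G)$. Thus these edges never contribute to the $\eta$-coordinate of $|\cdot|$.

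The core of the argument is a single \emph{exchange} performed at the last edge $(u,v)$ of $\pi$ that belongs to $\Delta(q,t)$. Since the edge of $\pi$ entering $q$ is in $E(G)\setminus E(H_{i-1})$ it cannot lie in $\Delta(q,t)$, so $(u,v)$ precedes $q$ and the suffix $\sigma=\pi[v:q]$ is non-empty, traverses only edges common to the two residual networks (hence residual, non-flow edges), and contains the last edge of $\pi$, so that $\eta(\sigma)\ge 1$. Let $\gamma$ be the $t$-$q$ flow path of $f''$ through the reversed edge $(v,u)\in\Delta(t,q)$ and consider its suffix $\gamma[u:q]$, which lies entirely in $G_{i-1}^t$ and satisfies $\eta(\gamma[u:q])=0$. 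A short reduced-cost computation --- $c_\phi(\gamma[u:q])\le 0$ because its edges carry flow, whereas $c_\phi((u,v))\ge 0$ and $c_\phi(\sigma)\ge 0$ because $\sigma$ avoids flow edges --- telescopes to $c(\gamma[u:q])\le c((u,v))+c(\sigma)=c(\pi[u:q])$. Hence replacing $\pi[u:q]$ with $\gamma[u:q]$ does not increase the cost and, in case of a tie, strictly decreases $\eta$ (since $\eta(\gamma[u:q])=0<1\le\eta(\pi[u:q])$); this produces an $s$-$q$ walk that is $\prec\pi$ and that uses one fewer edge of $\Delta(q,t)$.

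To finish, I would clear the remaining $\Delta(q,t)$-edges by iterating analogous replacements (now only needing to preserve $\preceq$, as the strict gain is already banked) until the walk lies entirely in $G_{i-1}^t$, and then shortcut it to a simple path, which cannot increase $|\cdot|$ because $G_{i-1}^t$ has no negative-cost cycle; the resulting path is the desired $\pi'\prec\pi$. The step I expect to be the main obstacle is precisely this iteration: after the first exchange the current walk contains the flow edges $\gamma[u:q]$ in its suffix, so the clean sign pattern that made the reduced-cost bound work (the suffix avoiding flow edges) is no longer available and must be re-established. The robust way to carry out all the replacements simultaneously is to augment $f''$ by the whole path $\pi$ inside $G_{i-1}^t$ and extract the $s$-$q$ path from a decomposition of the resulting flow, bounding its $|\cdot|$ by $|f''|+|\pi|$ minus the ($\preceq$-optimal) contribution of the $t$-$q$ paths and the (non-negative) contribution of the cycles; the delicate point there is to argue that the decomposition can be chosen so that the last, $\eta$-charged edge of $\pi$ is accounted to the $t$-$q$ portion, which is what forces the strict inequality rather than a mere $\preceq$.
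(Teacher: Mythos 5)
Your opening exchange is sound (the potential/reduced-cost computation is just a rephrasing of the fact that $G_{i-1}^q$, being the residual network of the min-cost flow $f''$, contains no negative-cost closed walk), but the proof as a whole has a genuine gap, and you have in fact located it yourself. Anchoring the exchange at the \emph{last} edge $(u,v)$ of $\pi$ in $\Delta(q,t)$ only cleans the suffix: the prefix $\pi[s:u]$ may still use edges of $\Delta(q,t)$, and, as you observe, iterating the exchange destroys the sign pattern your reduced-cost bound relies on. The fallback you sketch --- superimposing $\pi$ on $f''$ and extracting an $s$-$q$ path $P_0$ from a path/cycle decomposition --- does yield a path in $G_{i-1}^t$ with $c(P_0)\le c(\pi)$ (the $t$-$q$ paths of the decomposition cost at least $c(\Delta(t,q))$ by Lemma~\ref{lemma:brown} and the cycles are non-negative), but the lemma requires the \emph{strict} relation $\pi'\prec\pi$: in the tie case $c(P_0)=c(\pi)$ you must show $\eta(P_0)<\eta(\pi)$, i.e.\ that the $\eta$-charged last edge of $\pi$ can be assigned to one of the $t$-$q$ paths rather than to $P_0$. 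You flag this as ``the delicate point'' but do not prove it, and it is not automatic: $q$ receives $i$ units of flow in the superposition and nothing a priori forbids every decomposition from routing the last edge of $\pi$ onto the unique $s$-$q$ path. So the statement is not established.

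The paper's proof sidesteps both difficulties with one change of anchor: it takes the \emph{first} edge $(u,v)\in\Delta(q,t)$ traversed by $\pi$. Then the prefix $\pi[s:u]$ is automatically contained in $G_{i-1}^t$, and the \emph{entire} suffix $\pi[u:q]$ is replaced in a single step by the reversal $\delta'$ of the portion $\delta[q:u]$ of a $\Delta(q,t)$-path through $(u,v)$. The path $\delta'$ uses only edges of $\Delta(t,q)\subseteq E(G_{i-1}^t)$, so $\pi'=\pi[s:u]\circ\delta'$ lies in $G_{i-1}^t$; its cost satisfies $c(\delta')=-c(\delta[q:u])\le c(\pi[u:q])$ because $\pi[u:q]\circ\delta[q:u]$ is a closed walk in the negative-cycle-free graph $G_{i-1}^q$; and $\eta(\delta')=0<\eta(\pi[u:q])$ because the last edge of $\pi$ lies in $E(G)\setminus E(H_{i-1})$ while $E(G)\cap\Delta(t,q)\subseteq E(H_{i-1})$. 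This gives $\pi'\prec\pi$ directly, with no iteration and no decomposition. If you wish to keep your flow-theoretic framing you must supply the missing routing argument for the last edge of $\pi$; otherwise, switching your exchange from the last to the first $\Delta(q,t)$-edge is the shortest repair.
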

\begin{proof}
If $\pi$ is not entirely contained in $G_{i-1}^t$ then $\pi$ traverses some edge in $\Delta(q,t)$. Consider the first edge $(u,v) \in \Delta(q,t)$ traversed by $\pi$, and let $\delta$ be a simple path, from $q$ to $t$ that traverses $(u,v)$ in the subgraph of $G_{i-1}^q$ induced by $\Delta(q,t)$.
Since in $G_{i-1}^q$ there are no negative cycles~\cite{erickson_mincost_flow}, we have that $c(\pi[u:q]) + c( \delta[q:u])  \geq  0$ and hence $c(\pi[u:q]) \ge - c( \delta[q:u])$. By reversing the edges in the subpath $\delta[q:u]$ we obtain a path $\delta'$ from $u$ to $q$ that uses only edges in  $\Delta(t,q)$ and has cost $c(\delta') = - c(\delta(u,q))$.
We can then select $\pi' = \pi[s:u] \circ \delta'$. Notice indeed that $c(\pi') = c(\pi[s:u])+ c(\delta') = c(\pi[s:u]) - c(\delta[q:u]) \le c(\pi[s:u]) + c(\pi[u:q]) = c(\pi)$ (see Figure~\ref{fig:alternative_path}). Moreover, $\delta'$ does not use any edge in $E(G) \setminus E(H_{i-1})$ while the last edge in $\pi[u:q]$ is in $E(G) \setminus E(H_{i-1})$. This shows that $\pi' \prec \pi$ and concludes the proof.
\end{proof}

 \begin{lemma}[Suboptimality property]
\label{lemma:suboptimality}
Fix $t \in V(G)$, let $\pi_t \in \Pi(s,t;G_{i-1}^t)$, $q=q(\pi_t)$, and $\pi_q \in \Pi(s, q; G_{i-1}^q)$. We have that $\pi_q \circ  \pi_t[q:t] \in \Pi(s, t; G_{i-1}^t)$.
\end{lemma}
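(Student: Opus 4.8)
The plan is to show that replacing the prefix $\pi_t[s:q]$ of $\pi_t$ by the shortest path $\pi_q$ (computed in the \emph{different} residual network $G_{i-1}^q$) neither breaks feasibility in $G_{i-1}^t$ nor increases the lexicographic length $|\cdot|$. The case $q=s$ is immediate: then $\pi_t[s:q]$ and $\pi_q$ are both empty and $\pi_q\circ\pi_t[q:t]=\pi_t\in\Pi(s,t;G_{i-1}^t)$. So assume $q\neq s$. First I would record the cost inequality that follows from Lemma~\ref{lemma:feasible}: the prefix $\pi_t[s:q]$ lies entirely in $G_{i-1}^q$, hence it is a legal $s$-$q$ path there, and since $\pi_q\in\Pi(s,q;G_{i-1}^q)$ is shortest, $|\pi_q|\preceq|\pi_t[s:q]|$. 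Because both $c(\cdot)$ and $\eta(\cdot)$ are additive over concatenation and adding a fixed pair preserves the lexicographic order, this gives
\[
|\pi_q\circ\pi_t[q:t]|=|\pi_q|+|\pi_t[q:t]|\preceq|\pi_t[s:q]|+|\pi_t[q:t]|=|\pi_t|.
\]
It then remains to prove (a) that $\pi_q\circ\pi_t[q:t]$ is actually a walk in $G_{i-1}^t$, and (b) that no walk can beat $|\pi_t|$, so that the displayed bound is an equality and the concatenation is shortest.

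For (b) I would use that $G_{i-1}^t$, being the residual network of a min-cost flow, has no negative-cost cycle. Decomposing any $s$-$t$ walk $W$ into a simple path $P$ plus cycles, every cycle has $c(\cdot)\ge 0$ and $\eta(\cdot)\ge 0$, whence $|P|\preceq|W|$ componentwise and therefore $|W|\succeq|\pi_t|$ for \emph{every} walk $W$ in $G_{i-1}^t$. This is the tool that turns every later ``a walk shorter than $\pi_t$'' situation into a contradiction.

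The heart of the argument, and the step I expect to be the main obstacle, is (a): $\pi_q$ is shortest in $G_{i-1}^q$, a residual network that in general differs from $G_{i-1}^t$, so a priori $\pi_q$ might use edges absent from $G_{i-1}^t$. To control this I would invoke Lemma~\ref{lemma:feasible2}, whose hypothesis requires the last edge of $\pi_q$ (the one entering $q$) to lie in $E(G)\setminus E(H_{i-1})$. I would verify this from the structure of the construction: the $i-1$ edge-disjoint paths forming $S_{i-1}^q$ contribute exactly $i-1$ distinct edges entering $q$, and since $q$ has in-degree $i-1$ in $H_{i-1}$ these coincide with all incoming edges of $q$ in $H_{i-1}$; as no edge of $S_{i-1}^q$ \emph{leaves} $q$ (it is the common endpoint of those simple paths), after reversing $S_{i-1}^q$ the only edges still entering $q$ in $G_{i-1}^q$ are original edges of $G$ that are not incoming edges of $q$ in $H_{i-1}$, hence lie in $E(G)\setminus E(H_{i-1})$. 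Now suppose for contradiction that $\pi_q$ is not contained in $G_{i-1}^t$. Lemma~\ref{lemma:feasible2} then yields a path $\pi'$ from $s$ to $q$ in $G_{i-1}^t$ with $\pi'\prec\pi_q$, and concatenating with the suffix (itself a subpath of $\pi_t$, hence in $G_{i-1}^t$) gives a walk $\pi'\circ\pi_t[q:t]$ in $G_{i-1}^t$ of length
\[
|\pi'|+|\pi_t[q:t]|\prec|\pi_q|+|\pi_t[q:t]|\preceq|\pi_t[s:q]|+|\pi_t[q:t]|=|\pi_t|,
\]
contradicting (b).

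Hence $\pi_q$ is entirely contained in $G_{i-1}^t$, so $\pi_q\circ\pi_t[q:t]$ is a genuine walk in $G_{i-1}^t$; combining the displayed upper bound with (b) forces $|\pi_q\circ\pi_t[q:t]|=|\pi_t|$, so it is shortest and lies in $\Pi(s,t;G_{i-1}^t)$. A final minor point I would dispatch is that this concatenation is a simple path: any repeated vertex would create a cycle, which by the equality of lengths must have both $c(\cdot)=0$ and $\eta(\cdot)=0$ and can be discarded without changing $|\cdot|$.
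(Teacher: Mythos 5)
Your proof is correct and follows essentially the same route as the paper's: it uses Lemma~\ref{lemma:feasible} to obtain $\pi_q \preceq \pi_t[s:q]$, and Lemma~\ref{lemma:feasible2} to derive a contradiction with the optimality of $\pi_t$ if $\pi_q$ were not entirely contained in $G_{i-1}^t$. The only differences are that you explicitly verify the hypothesis of Lemma~\ref{lemma:feasible2} (that the edge of $\pi_q$ entering $q$ lies in $E(G)\setminus E(H_{i-1})$, via the in-degree of $q$ in $H_{i-1}$) and you handle the walk-versus-simple-path issue through the no-negative-cycle property of the residual network --- two details the paper's proof leaves implicit.
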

\begin{proof}
We start by showing that $\pi_q$ must  be entirely contained in $G_{i-1}^t$.
To this aim suppose towards a contradiction that $\pi_q$ is not entirely contained in $G_{i-1}^t$.
By Lemma~\ref{lemma:feasible2} there exists a path $\pi'$ in  $G_{i-1}^t$ such that $\pi' \prec \pi_q$ moreover by Lemma~\ref{lemma:feasible}, we know that
$\pi_t[s:q]$ is entirely contained in $G_{i-1}^q$ and since $\pi_q \in \Pi(s,q, G_{i-1}^q)$, we must have $c(\pi_q)  \preceq c(\pi_t[s:q])$.
Thus, we can replace $\pi_t[s,q]$ with $\pi'$ in $\pi_t$ and obtain a new path $\pi'' \prec \pi_t$ from $s$ to $t$ in $G_{i-1}^t$, contradicting $\pi_t \in \Pi(s,t, G_{i-1}^t)$.

Then, the path $\pi = \pi_q \circ \pi_t[q:t]$ obtained by replacing $\pi_t[s:q]$ with $\pi_q$ in $\pi_t$ is entirely contained in $G_{i-1}^t$ and must satisfy $\pi \preceq \pi_t$.
Since $\pi_t$ is a shortest path in $G_{i-1}^t$ w.r.t. $\preceq$, so is $\pi$.
\end{proof}

\begin{figure}
    \centering
    \includegraphics[scale=1]{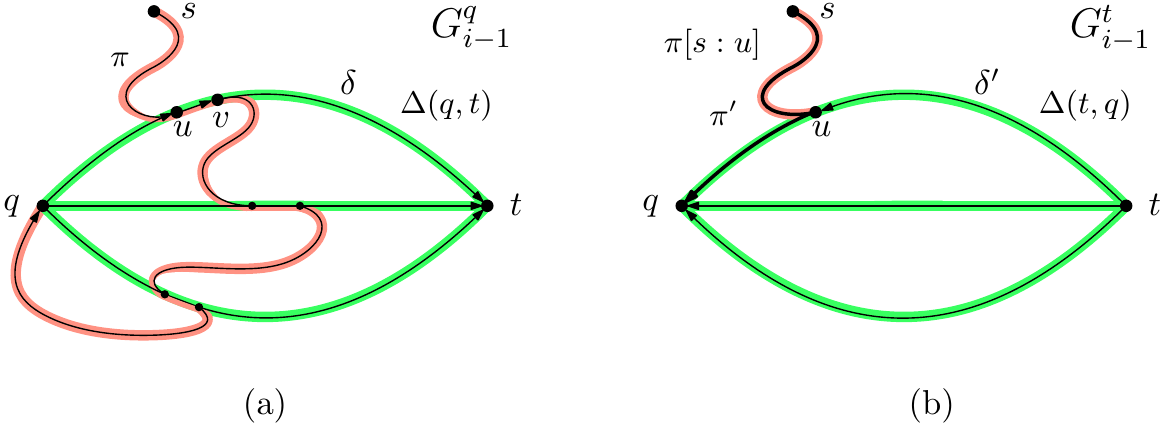}
    \caption{\small A qualitative representation of the proof of Lemma \ref{lemma:feasible2}. We are assuming that $\pi$ (highlighted in red) is a path from $s$ to $q$ in $G_{i-1}^q$ entering in $q$ with and edge of $E(G)\setminus E(H_{i-1})$. (a) $\pi$ intersects a path $\delta$ among the $i-1$ edge-disjoint paths induced by the edges in $\Delta(q,t)$ (highlighted in green). (b)
    The edges of path $\delta'$ obtained by reversing the edges in $\delta$ belong to $\Delta(t,q)$ (highlighted in green).
    Then, the path $\pi' = \pi[s:u] \circ \delta'[u:q]$ (shown in bold) is entirely contained in $G_{i-1}^t$ and satisfies $\pi' \prec \pi$. }
    \label{fig:alternative_path}
\end{figure}

Next lemma  uses the suboptimality property to show that, for each $t \neq s$, there exists a shortest path $\delta$ from $s$ to $t$ in $G_{i-1}^t$ such that, when we orient the edges of $\delta$ in the same direction as in $G$, the resulting set of edges is entirely contained in $H_i$.

\begin{lemma}
\label{lemma:shortest_path_in_H_i}
For each $t \in V(G) \setminus \{s\}$, there exists a path $\delta \in \Pi(s,t, G_{i-1}^t)$ such that $E(\delta) \cap E(G) \subseteq E(H_i)$. 
\end{lemma}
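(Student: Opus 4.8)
The plan is to prove the statement by induction on the potential $\eta^*(t)$, defined as the number of edges of $E(G)\setminus E(H_{i-1})$ appearing on an (arbitrary) path of $\Pi(s,t;G_{i-1}^t)$. This is well defined, since all paths of $\Pi(s,t;G_{i-1}^t)$ realize the same pair $|\cdot|$ and hence the same value of $\eta(\cdot)$. Throughout, I let $\hat\pi_t$ denote the path of $\Pi(s,t;G_{i-1}^t)$ selected by the algorithm, i.e.\ the one whose last edge is $e_t$, and I write $q=q(\hat\pi_t)$. The driving force is the suboptimality property of Lemma~\ref{lemma:suboptimality}, which lets me detach the final ``new'' edge $e_t$ and fall back on the same claim for a strictly smaller vertex $q$.

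First I would record a structural fact about $e_t$. Since every non-source vertex has in-degree exactly $i-1$ in $H_{i-1}$ and those in-edges are precisely the edges of $S_{i-1}^t$ entering $t$, they are all reversed in $G_{i-1}^t$; consequently no reversed edge of $G_{i-1}^t$ enters $t$, so every edge of $G$ that still enters $t$ in $G_{i-1}^t$ belongs to $E(G)\setminus E(H_{i-1})$. In particular $e_t\in E(G)\setminus E(H_{i-1})$ and is genuinely added when forming $H_i$. Combined with the definition of $q$, this yields $\eta(\hat\pi_t[q:t])=1$: the maximality of $q$ forces every edge of $\hat\pi_t[q:t]$ entering an internal vertex to avoid $E(G)\setminus E(H_{i-1})$, leaving $e_t$ as the only contribution.

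For the inductive step, if $q=s$ then the only edge of $\hat\pi_t$ in $E(G)\setminus E(H_{i-1})$ is $e_t$, every other forward edge lies in $E(H_{i-1})$, and reversed edges do not belong to $E(\hat\pi_t)\cap E(G)$ by the anti-symmetry assumption on $G$; hence $\delta=\hat\pi_t$ already satisfies $E(\delta)\cap E(G)\subseteq E(H_{i-1})\cup\{e_t\}\subseteq E(H_i)$, settling the base case $\eta^*(t)=1$. If instead $q\neq s$, I would invoke the induction hypothesis on $q$ to obtain $\delta_q\in\Pi(s,q;G_{i-1}^q)$ with $E(\delta_q)\cap E(G)\subseteq E(H_i)$ and set $\delta=\delta_q\circ\hat\pi_t[q:t]$; Lemma~\ref{lemma:suboptimality} (applied with $\pi_t=\hat\pi_t$ and $\pi_q=\delta_q$) certifies $\delta\in\Pi(s,t;G_{i-1}^t)$. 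Splitting $E(\delta)\cap E(G)$ along the composition, the $\delta_q$-part lies in $E(H_i)$ by induction, while on $\hat\pi_t[q:t]$ the forward edges entering internal vertices lie in $E(H_{i-1})$ (maximality of $q$) and the last edge is $e_t\in E(H_i)$, so all of $E(\delta)\cap E(G)$ sits in $E(H_i)$.

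The step I expect to be the main obstacle is verifying that the induction is well-founded, i.e.\ that $\eta^*(q)<\eta^*(t)$ when $q\neq s$. The crux is to show $\hat\pi_t[s:q]\in\Pi(s,q;G_{i-1}^q)$. Lemma~\ref{lemma:feasible} already places $\hat\pi_t[s:q]$ inside $G_{i-1}^q$; and if some $\pi_q\prec\hat\pi_t[s:q]$ existed in $G_{i-1}^q$, then by Lemma~\ref{lemma:suboptimality} the path $\pi_q\circ\hat\pi_t[q:t]$ would be a path of $G_{i-1}^t$ with $\pi_q\circ\hat\pi_t[q:t]\prec\hat\pi_t$ (using that $|\cdot|$ is additive under composition and that the lexicographic order is preserved by adding the common suffix $\hat\pi_t[q:t]$), contradicting $\hat\pi_t\in\Pi(s,t;G_{i-1}^t)$. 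Thus $\eta^*(q)=\eta(\hat\pi_t[s:q])$, and since $\eta(\hat\pi_t[q:t])=1$ we obtain $\eta^*(t)=\eta(\hat\pi_t)=\eta(\hat\pi_t[s:q])+1=\eta^*(q)+1$, the desired strict decrease, which closes the induction.
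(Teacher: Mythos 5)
Your proof is correct and follows essentially the same route as the paper's: both peel off the suffix $\hat\pi_t[q:t]$ (whose only edge of $E(G)\setminus E(H_{i-1})$ is $e_t$) via the suboptimality property of Lemma~\ref{lemma:suboptimality} and recurse on $q$, relying on the tie-breaker $\eta$ for termination. The only difference is presentational: the paper unrolls the recursion into an explicit chain $q_0=t,q_1,\dots,q_k=s$ and proves the $q_j$ are pairwise distinct by an $\eta$-counting contradiction, whereas you obtain well-foundedness directly from the strict decrease $\eta^*(q)=\eta^*(t)-1$, which is the same counting argument packaged as a potential.
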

\begin{proof}
Define $q_0 = t$.
For $j \ge 0$ and  $q_j \neq s$,
let $\pi_j \in \Pi(s, q_j; G_{i-1}^{q_j})$ be the shortest from $s$ to $q_j$ selected by the algorithm and define $q_{j+1}=q(\pi_j)$ (see Figure~\ref{fig:path_delta}).

 We now show that all $q_j$ are distinct, hence  there exists a $k$ for which $q_k = s$.  By contradiction, consider the smallest index $j' > j$ such that $q(\pi_{j'}) = q_j$. We will construct two paths towards $q_{j'}$ in $G_{i-1}^{q_{j'}}$ that have different lengths, yet they must both be shortest paths, thus providing the sought contradiction.

By Lemma~\ref{lemma:suboptimality}, we know that $\pi_{j+1} \circ \pi_{j}[q_{j+1}:q_j] \in \Pi(s,q_j;G_{i-1}^{q_j})$. We can repetitively apply Lemma~\ref{lemma:suboptimality}, until we get $\delta_j = \pi_{j'} \circ \pi_{j'-1}[q_{j'}:q_{j'-1}] \circ \ldots \circ \pi_{j}[q_{j+1}:q_{j}] \in \Pi(s,q_j;G_{i-1}^{q_j})$. Since $q(\pi_{j'}) = q_j$, by Lemma~\ref{lemma:suboptimality} we have that $\delta_{j'} = \delta_j \circ \pi_{j'}[q_j:q_{j'}] \in \Pi(s,q_{j'};G_{i-1}^{q_{j'}})$. 

Observe that both $\pi_{j'}$ and $\delta_{j'}$ belong to $\Pi(s,q_{j'};G_{i-1}^{q_{j'}})$, hence must have the same length. However $|\delta_{j'}| = |\delta_j| + |\pi_{j'}[q_j:q_{j'}]| = |\pi_{j'}| +|\delta_j[q_{j'}:q_j]| + |\pi_{j'}[q_j:q_{j'}]|$. As consequence, $|\delta_{j'} | \neq |\pi_{j'}|$ since $\eta(\pi_{j'}[q_j:q_{j'}])=1$ and hence $\eta(\delta_{j'})>\eta(\pi_{j'})$.

Define $\delta = \pi_{k-1}[q_k : q_{k-1}] \circ \pi_{k-2}[q_{k-1} : q_{k-2}] \circ \pi_{k-3}[q_{k-2} : q_{k-3}] \circ \ldots \circ \pi_{0}[q_{1} : q_{0}]$.
We prove by reverse induction on $j=k, \dots, 0$
that (i) $\delta[s:q_j]$ is a shortest path from $s$ to $q_j$ in $G_{i-1}^{q_j}$, and (ii) all edges in $E(\delta[s:q_j]) \cap E(G)$ belong to $H_{i-1}$.
The claim is trivially true for $j=k$ since $q_k=s$ and $\delta[s,q_k]$ is the empty path.
For $j<k$, consider the path $\pi_j$ and notice that $q_{j+1}=q(\pi_j)$ by definition.
By induction hypothesis, we have that $\delta[s:q_{j+1}] \in \Pi(s, q_{j+1}, G_{i-1}^{q_{j+1}})$. Then, by Lemma~\ref{lemma:suboptimality}, 
$\delta[s:q_j] = \delta[s:q_{j+1}] \circ \pi_j[q_{j+1}, q_j] \in \Pi(s, q_j, G_{i-1}^{q_j})$, which proves (i).

As far as (ii) is concerned, we only need to argue about $\pi_j[q_{j+1}, q_j]$ since $\delta[s:q_j] = \delta[s:q_{j+1}] \circ \pi_j[q_{j+1}, q_j]$ and, by induction hypothesis, we know that all edges in $E(\delta[s:q_{j+1}]) \cap E(G)$ are in $E(H_i)$.
Let $(u, q_j)$ be the last edge of $\pi_j[q_{j+1}, q_j]$ and notice that, since $(u, q_j)$ is also the last edge of $\pi_j$, our algorithm adds $(u, q_j)$ to $H_{i}$ when $q_j$ is considered.
Moreover, by the choice of $q_{j+1}=q(\pi_j)$, the path $\pi_j[q_{j+1}: u]$ contains no edges in $E(G) \setminus E(H_{i-1})$. This means that $E(\pi_j[q_{j+1}: u]) \cap E(G)$ lies entirely in $H_{i-1}$ and hence in $H_{i}$.
This shows that all edges of $E(\pi_j[q_{j+1}, q_j]) \cap E(G)$ belong to $E(H_{i})$ and proves (ii).
\end{proof}

\begin{figure}
    \centering
    \includegraphics[scale=1.4]{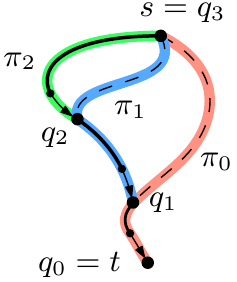}
    \caption{\small A qualitative representation of the path $\delta$ constructed in the proof of Lemma~\ref{lemma:shortest_path_in_H_i} when $k=3$. 
    The path $\delta$ is drawn with solid lines and the portions belonging to $E(H_{i-1})$ are shown in bold.
    The shortest paths $\pi_0$, $\pi_1$, and $\pi_2$ from $s$ to $q_0$, $q_1$, and $q_2$ in $G^{q_0}_{i-1}$, $G^{q_1}_{i-1}$, and $G^{q_2}_{i-1}$ are highlighted in red, blue, and green, respectively. The last edge of each $\pi_j$ (which belongs to $E(G) \setminus E(H_{i-1})$) is drawn as a solid thin line.}
    \label{fig:path_delta}
\end{figure}

\noindent The above lemma easily implies that $H_i$ is a $i$-multipath preserver of $G$. 

\begin{lemma}\label{lemma:H_i_is_spanner}
$H_{i}$ contains $i$ edge-disjoint paths of minimum total cost from $s$ to every $t \in V(G) \setminus \{s\}$.
\end{lemma}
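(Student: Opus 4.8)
The plan is to close the induction by exhibiting, for each target $t \neq s$, a concrete edge set $S_i^t \subseteq E(H_i)$ that induces $i$ edge-disjoint paths of minimum total cost from $s$ to $t$ in $G$. The point of Lemma~\ref{lemma:shortest_path_in_H_i} is precisely to hand us the augmenting path we need, so the remaining argument should be short.

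First I would invoke Lemma~\ref{lemma:shortest_path_in_H_i} to obtain a path $\delta \in \Pi(s,t; G_{i-1}^t)$ with $E(\delta) \cap E(G) \subseteq E(H_i)$. Since the order $\preceq$ refines the ordering by cost, $\delta$ is in particular a minimum-cost path from $s$ to $t$ in the residual network $G_{i-1}^t$, hence a legitimate choice of augmenting path for the $i$-th iteration of SSP applied to the single-pair instance with target $t$. By the SSP framework recalled in Section~\ref{sec:successive_shortest_path}, augmenting $S_{i-1}^t$ along $\delta$ produces a set $S_i^t$ consisting exactly of the edges of $i$ edge-disjoint paths of minimum total cost from $s$ to $t$ in $G$.

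It then remains to check that this particular $S_i^t$ lives inside $H_i$. For this I would use the explicit description of the augmentation in Remark~\ref{remark:oplus}: $S_i^t$ is obtained from $S_{i-1}^t$ by removing some edges and adding the edges of $E(\delta) \cap E(G)$, so in particular $S_i^t \subseteq S_{i-1}^t \cup (E(\delta) \cap E(G))$. The first set is contained in $E(H_{i-1}) \subseteq E(H_i)$ by the induction hypothesis together with the fact that $H_i$ is obtained from $H_{i-1}$ by only adding edges; the second set is contained in $E(H_i)$ by Lemma~\ref{lemma:shortest_path_in_H_i}. Hence $S_i^t \subseteq E(H_i)$, and the $i$ edge-disjoint minimum-cost paths induced by $S_i^t$ are all present in $H_i$, which completes the inductive step.

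Since the heavy lifting --- showing that a $\preceq$-shortest augmenting path can be assembled entirely from edges already committed to $H_{i-1}$ together with the single new edge $e_t$ --- was already carried out in Lemma~\ref{lemma:shortest_path_in_H_i}, I do not anticipate a genuine obstacle here. The only point that warrants care is bridging the $\preceq$-order used throughout the section and the plain cost order that governs SSP, i.e.\ confirming that $\delta$ is a bona fide minimum-cost augmenting path so that the standard SSP optimality guarantee for $S_i^t$ applies verbatim.
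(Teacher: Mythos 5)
Your proposal is correct and follows essentially the same route as the paper's own proof: invoke Lemma~\ref{lemma:shortest_path_in_H_i} to get a $\preceq$-shortest path $\delta$ in $G_{i-1}^t$ whose $G$-oriented edges lie in $H_i$, apply Remark~\ref{remark:oplus} to build $S_i^t$ from $S_{i-1}^t$ and $\delta$, and conclude containment from the induction hypothesis. You merely spell out two steps the paper leaves implicit (that $\preceq$ refines the cost order so $\delta$ is a legitimate SSP augmenting path, and the explicit containment $S_i^t \subseteq S_{i-1}^t \cup (E(\delta)\cap E(G))$), which is fine.
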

\begin{proof}
    Fix a vertex $t \in V(G) \setminus \{s\}$. By induction hypothesis all edges in $S_{i-1}^t$ belong to $H_{i-1}$. By Lemma~\ref{lemma:shortest_path_in_H_i}, there is a path  $\delta \in \Pi(s,t, G_{i-1}^t)$ such that $E(\delta) \cap E(G) \subseteq E(H_i)$. We now use Remark \ref{remark:oplus} to build $S_{i}^t$ from $S_{i-1}^t$ and $\delta$. It is easy to see that $S_i^t$ must be entirely contained in $H_i$. 
\end{proof}

The combination of Lemma~\ref{lemma:H_i_is_spanner} with the discussion on the size of $H_p$ at the beginning of Section~\ref{sec:size}, immediately results in the following theorem.

\begin{theorem}\label{thm:size_spanner}
$H_p$ is a single-source $p$-multipath preserver of size $p(n-1)$. More precisely, $s$ has in-degree $0$ in $H_p$ while each other vertex has in-degree $p$.
\end{theorem}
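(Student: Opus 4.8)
The plan is to prove Theorem~\ref{thm:size_spanner} by assembling the two ingredients that have already been established in this section, namely the correctness guarantee of Lemma~\ref{lemma:H_i_is_spanner} and the structural observation about in-degrees made at the beginning of the section. Concretely, the argument I would give is a short induction wrapper around Lemma~\ref{lemma:H_i_is_spanner}, followed by a direct degree count.

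First I would recall the inductive construction: we set $H_0 = (V(G), \emptyset)$ and, for each $1 \le i \le p$, we form $H_i$ from $H_{i-1}$ by adding exactly one new incoming edge $e_t$ for every vertex $t \in V(G) \setminus \{s\}$, and no new edge entering $s$. The main content — that $H_i$ actually contains $i$ edge-disjoint paths of minimum total cost from $s$ to every $t \neq s$ — is precisely Lemma~\ref{lemma:H_i_is_spanner}, which already incorporates the inductive step (assuming the hypothesis for $H_{i-1}$ and proving it for $H_i$). Applying this at $i = p$ immediately yields that $H_p$ is a single-source $p$-multipath preserver of $G$; this is the ``correctness'' half of the statement and requires no new work beyond invoking the lemma.

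Second, I would verify the claimed in-degrees and hence the size. By construction each iteration adds a single edge entering each non-source vertex and never adds an edge entering $s$, so a trivial induction on $i$ shows that in $H_i$ vertex $s$ has in-degree $0$ and every other vertex has in-degree exactly $i$. Setting $i = p$ gives in-degree $0$ for $s$ and in-degree $p$ for each of the remaining $n-1$ vertices. Summing the in-degrees counts every edge exactly once (each edge contributes to the in-degree of its unique head), so
\[
|E(H_p)| = \sum_{v \in V(G)} \deg^{-}_{H_p}(v) = 0 + p(n-1) = p(n-1),
\]
which is the asserted size.

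I do not expect any genuine obstacle here, since all the difficulty has been absorbed into Lemmas~\ref{lemma:brown}--\ref{lemma:shortest_path_in_H_i} and packaged in Lemma~\ref{lemma:H_i_is_spanner}; the only point demanding a line of care is confirming that the construction never creates duplicate incoming edges (so that the in-degree increases by exactly one, not potentially less), which follows because the added edge $e_t$ is the last edge of a shortest path in the residual network $G_{i-1}^t$ and such an edge, oriented as in $G$, is an edge of $E(G) \setminus E(H_{i-1})$ by the definition of $q(\pi_t)$ and is thus genuinely new. With that observation the theorem follows by combining the correctness of Lemma~\ref{lemma:H_i_is_spanner} at $i=p$ with the degree count above.
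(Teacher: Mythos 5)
Your proof is correct and follows the same route as the paper: the preserver property is exactly Lemma~\ref{lemma:H_i_is_spanner} instantiated at $i=p$, and the size follows from the in-degree count already noted at the start of Section~\ref{sec:size}. The one point you raise about $e_t$ being genuinely new is indeed needed and true (every edge of $H_{i-1}$ entering $t$ already lies in $S_{i-1}^t$ and is therefore reversed in $G_{i-1}^t$, so the last edge of $\pi_t$ must come from $E(G)\setminus E(H_{i-1})$), though strictly speaking this follows from that degree/residual-network observation rather than from the definition of $q(\pi_t)$, which only concerns internal vertices of the path.
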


\section{An efficient algorithm for finding \texorpdfstring{$p$}{p} edge-disjoint shortest paths}
\label{sec:algorithm}
\begin{algorithm}[t] \small
   	\caption{\small Computes a $p$-multipath preserver of a graph $G$ and $p$-edge disjoint paths of minimum total cost from $s$ to $t$, for every $t \in V(G) \setminus \{s\}$.}
  	\label{alg:1}
  	
    \SetKwInOut{Input}{Input}
    \SetKwInOut{Output}{Output}

    \Input{A graph $G=(V(G),E(G))$, a source vertex $s \in V(G)$, $p \in \mathbb{N}^+$;}
    \Output{$p$ edge-disjoint paths $S_p^t$ from $s$ to $t$ of minimum total cost, $\forall t \in V(G)$;}
    \Output{a $p$-multipath preserver $H_p$ of $G$ with source $s$;}
  	
  	\BlankLine
  	
  	$H_1 \gets $ shortest path tree of $G$ rooted at $s$\label{ln:h1}\tcp*{$H_1$ is a $1$-multipath preserver}
  	\lForEach{$t \in V(G)$}{$S_{1}^t \gets$  path from $s$ to $t$ in $H_1$\label{ln:path_h1}}
    
  	\BlankLine
  	\For(\tcp*[f]{Compute $H_i$ and all $S_i^t$}){$i \gets 2, \dots, p$}
  	{
  	    \ForEach{$t \in V(G) \setminus \{s\}$\label{ln:loop_T_start}}
  	    {  
  	      	$H_{i-1}^t \gets $ Graph obtained from $H_{i-1}$ by reversing the edges in $S_{i-1}^t$ and adding the edges incident to $t$ in $E(G) \setminus S_{i-1}^t$ \label{ln:h_t}\;

  	    $T^{t}_{i-1} \gets $ reverse SPT towards $t$ in $H_{i-1}^t$ \label{ln:reweighted_dijkstra}\label{ln:loop_T_end}\tcp*{$\pi(u,t; T^{t}_{i-1})$ is the sole path in $\Pi(u,t; T^{t}_{i-1})$}
   	    }

        \BlankLine
        \tcp{Initialize distances and priority queue}
        $d(s) \gets (0, 0)$; \quad $\pi_s \gets $ Empty path\label{ln:main_dijkstra_start}\;
        \lForEach{$t \in V(G) \setminus \{s\}$}
        {$d(t) \gets (+\infty, +\infty)$}
        $Q \gets $ initialize a priority queue with values in $V(G)$ and keys $d(\cdot)$\;

        \BlankLine
  	    
  	    $H_i \gets H_{i-1}$\;
  	  	\While{$Q$ is not empty\label{ln:while_begin}}
  	  	{
            $q \gets $ Extract the minimum from $Q$\;
            
            \BlankLine
            
            \If{$q \neq s$}
            {
                $\pi_q \gets \pi_{\rho(q)} \circ \pi(\rho(q), q; T^q_{i-1})$\label{ln:update_path_q}\tcp*{$\pi_q \in \Pi(s, q; G_{i-1}^q)$}
                $e_q \gets$ last edge of $\pi_q$\;
                $E(H_i) \gets E(H_i) \cup \{ e_{q} \}$\label{ln:add_edge_to_H_i}\tcp*{Update the $i$-multipath preserver}
                Compute $S_i^{q}$ from $S_{i-1}^{q}$ and $\pi_{q}$ as explained in Remark \ref{remark:oplus}\label{ln:update_solution_q}
            }
            
            \BlankLine
            
            \ForEach{$t \in Q$}
            {
                \tcp{Check whether $\pi_q \circ \pi(q,t;T^t_{i-1})$ is shorter than $d(t)$}
                \If{$q \in V(T^t_{i-1})$ \, \textbf{\textup{and}} \, $d(q) + |\pi(q,t;T^t_{i-1})| \prec d(t)$}
                {
                    $\rho(t) \gets q$\tcp*{We found a shorter path to $t$ in $G_{i-1}^t$ (via $q$)}
                    $d(t) \gets d(q) + |\pi(q,t;T^t_{i-1})|$\tcp*{Relax $d(t)$}
                    Decrease the key of vertex $t$ in $Q$ to $d(t)$\label{ln:while_end}\;
                }
            }
  	 }  	
  	}
\end{algorithm}

In this section we describe an algorithm (whose pseudocode is given in Algorithm~\ref{alg:1}) running in time $O(p^2n^2+pn^2 \log n)$ that computes: (i) $p$ edge disjoint paths $S_p^t$ of minimum total cost from $s$ to $t$; (ii) a single-source $p$-multipath preserver $H_p$ of size $p(n-1)$ (as stated in Theorem \ref{thm:size_spanner}). Our algorithm also guarantees that each $S_p^t$ is contained in $H_p$.

More precisely, the algorithm will compute along the way all single-source $i$-multipath presevers $H_i$, for $i=1,\dots,p$, as defined in the previous section (recall that $H_i$ has size $i(n-1)$). In this sense, the algorithm can be seen as an efficient implementation of the one described in Section~\ref{sec:size}.

The algorithm works in phases.
The generic $i$-th phase will compute a $i$-multipath preserver $H_i$ from the $(i-1)$-th multipath preserver $H_{i-1}$ computed by the previous phase.
The algorithm also maintains, for each vertex $t$, a solution $S_{i}^t$ consisting of $i$ edge-disjoint paths of minimum total cost from $s$ to $t$. Similarly to $H_i$, $S_{i}^t$ is computed from $S_{i-1}^t$ during phase $i$.

Initially, $H_1$ and all $S_1^t$ are simply a \emph{shortest-path tree} (SPT) of $G$ rooted at $s$, and the (unique) path from $s$ to $t$ in $H_1$.
In each phase $i \ge 2$, the algorithm aims to find a shortest path $\pi_t  \in \Pi(s,t; G^t_{i-1})$. Since a direct computation of $\pi_t$ would be too time-consuming, the idea is that of
exploiting the suboptimality property of Lemma~\ref{lemma:suboptimality}
to consider $\pi_t$ as the composition of two subpaths $\pi_q$ and $\pi_t[q:t]$, where $q = q(\pi_t)$ and $\pi_q$ is a shortest path from $s$ to $q$ in $G_{i-1}^q$.

To this aim, we follow a Dijkstra-like approach (see lines~\ref{ln:main_dijkstra_start}--\ref{ln:while_end}). More precisely, once we have computed $\pi_q$, we attempt to extend it towards every other vertex $t$ by concatenating $\pi_q$ with a shortest path from $q$ to $t$ in $G_{i-1}^t$.

As we have discussed in Section~\ref{sec:successive_shortest_path}, once we have $\pi_t$, we can easily compute $S_{i}^t$ from $S_{i-1}^t$ and $\pi_t$ according to Remark~\ref{remark:oplus}. Moreover, as seen in Section~\ref{sec:size}, we compute $H_i$ by adding to $H_{i-1}$ all the last edges of each $\pi_t$.

There are, however, three caveats that need to be carefully handled.
The first two concerns the algorithm's correctness:
\begin{itemize}
    \item Whenever a path $\pi_q$ is extended towards a vertex $t$, the resulting path may not necessarily exist in $G_{i-1}^t$ (since $\pi_q$ lies in $G_{i-1}^q$ which differs from $G_{i-1}^t$). 
However, this is not an issue since, as we will prove in the following (see Lemma~\ref{lemma:inequality}), when $\pi_q$ does not exist in $G_{i-1}^t$, the length of the resulting path is always an upper bound to the length of $\pi_t$.

    \item In order for the Dijkstra-like approach to work, the vertices $q$ need to be considered in non-decreasing order of $|\pi_q|$, and hence the shortest path from $q$ to $t$ in $G_{i-1}^t$ used to extend $\pi_q$ must have non-negative costs. As we will show, this is indeed the case (see Lemma~\ref{lemma:positive}).
\end{itemize}

The last critical aspect concerns the complexity of the algorithm:
a direct computation of the needed shortest path from $q$ to $t$ in $G_{i-1}^t$ would be too time-consuming.

Instead, we (pre-)compute it in a suitable sparse subgraph of $G_{i-1}^t$, referred as $H_{i-1}^t$ in the pseudocode (see lines~\ref{ln:loop_T_start}--\ref{ln:loop_T_end}). 

\subsection{Proof of correctness}

We prove the correctness of Algorithm~\ref{alg:1} by induction on $i\geq 1$. 
In particular we will shows that, at the end of phase $i$, the following three properties will be satisfied: (i) for $t \in V(G) \setminus \{s\}$, the edges in $S_i^t$ induce $i$ edge-disjoint paths of minimum total cost from $s$ to $t$ in $G$; 
(ii) $H_i$ is a $i$-multipath preserver for $G$ with source $s$; and
(iii) for $t \in V(G) \setminus \{s\}$, $S_i^t$ is entirely contained in $E(H_i)$.

The base case $i=1$ is trivially true since $H_1$ is a shortest path-tree from $s$ in $G$, and $S_i^t$ is the (unique) path from $s$ to $t$ in $H_1$.
We hence assume (i), (ii), and (iii) for $i-1$ and focus on phase $i \ge 2$.

For each $t$, let $G_{i-1}^t$ be the residual network obtained from $G$ by reversing the edges of $S_{i-1}^t$. 
The rest of the proof is organized as follows: we first prove that Algorithm~\ref{alg:1} correctly computes a shortest path $\pi_t \in \Pi(s,t; G_{i-1}^t)$. Then, we will argue that this implies properties (i), (ii), and (iii).

\begin{lemma}
\label{lemma:positive}
Let $t \in V(G)$, and consider the $i$-th phase of Algorithm~\ref{alg:1}. For every $q \in V(T_{i-1}^t)$, we have $c(\pi(q,t;T^t_{i-1})) \ge 0$.
\end{lemma}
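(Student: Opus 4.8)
The plan is to reduce the statement to a property of the residual network $G_{i-1}^t$ and then to establish that property through a min-cost-flow exchange argument.

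First I would observe that $\pi(q,t;T^t_{i-1})$ is, by construction, a shortest path from $q$ to $t$ in $H^t_{i-1}$, and that $H^t_{i-1}$ is a subgraph of $G^t_{i-1}$: indeed every edge of $H^t_{i-1}$ is either a forward edge of $H_{i-1}\subseteq G$ that is not reversed, a reversed edge of $S^t_{i-1}$, or a forward edge of $G$ incident to $t$, and all of these appear in $G^t_{i-1}$. Moreover, since by the inductive hypothesis $S^t_{i-1}$ is a minimum-cost flow of value $i-1$ from $s$ to $t$ in $G$, the network $G^t_{i-1}$ contains no negative-cost cycle~\cite{erickson_mincost_flow}, so distances in it are well defined. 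Hence $\pi(q,t;T^t_{i-1})$ is a $q$-$t$ path in $G^t_{i-1}$ and it suffices to prove the stronger claim $\mathrm{dist}(q,t;G^t_{i-1})\ge 0$, i.e.\ that $G^t_{i-1}$ has no negative-cost path ending at $t$.

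To prove this, I would argue by contradiction and assume there is a simple path $P$ from $q$ to $t$ in $G^t_{i-1}$ with $c(P)<0$. Writing $f$ for the flow $S^t_{i-1}$ (of value $k=i-1$), augmenting $f$ by one unit along $P$ yields a non-negative integral flow $g$ in $G$ whose only unbalanced vertices are $s$, $q$, and $t$, with net supply $k$ at $s$, net supply $1$ at $q$, and net demand $k+1$ at $t$; by the standard augmentation identity its cost is $c(g)=c(f)+c(P)$. The crux is then a flow-decomposition step: I would decompose $g$ into $k$ edge-disjoint $s$-$t$ paths $R_1,\dots,R_k$, a single $q$-$t$ path $R_q$, and a (possibly empty) collection of cycles, all living in $G$ with its non-negative costs. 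Discarding the cycles can only decrease the cost, so $\sum_i c(R_i)+c(R_q)\le c(g)=c(f)+c(P)$. Since $R_1,\dots,R_k$ constitute a value-$k$ flow from $s$ to $t$, optimality of $f$ gives $\sum_i c(R_i)\ge c(f)$, whence $c(R_q)\le c(P)<0$. This contradicts the fact that $R_q$ is a path in $G$, where all costs are non-negative.

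The main obstacle is precisely this last exchange: a negative $q$-$t$ path carries an ``extra'' unit of supply at $q$ that prevents a direct comparison with $f$, and the role of the flow decomposition is exactly to peel off this surplus unit as the path $R_q$ while leaving a genuine value-$k$ $s$-$t$ flow to which the optimality of $f$ can be applied. I expect the only points needing care to be the verification that augmenting along $P$ produces a bona fide flow with the claimed divergences and cost (so that the decomposition into $k$ source-paths, one $q$-path, and non-negative cycles is justified), and the observation that the inductive hypothesis indeed guarantees that $S^t_{i-1}$ is a minimum-cost $s$-$t$ flow, so that no negative cycle is present and the optimality inequality $\sum_i c(R_i)\ge c(f)$ holds.
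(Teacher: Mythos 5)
Your proof is correct, and it reaches the conclusion by a genuinely different route than the paper. The paper also argues by contradiction from a negative-cost $q$-$t$ path $\pi$ in $H_{i-1}^t \subseteq G_{i-1}^t$, but it stays entirely inside the residual network: since original costs are non-negative, $\pi$ must use a reversed edge; taking the \emph{first} reversed edge $(x,y)$ on $\pi$, the suffix $\pi[x:t]$ still has cost at most $c(\pi)<0$, and because the reversed edges of $S_{i-1}^t$ form $i-1$ paths from $t$ to $s$ in $G_{i-1}^t$, there is an all-reversed (hence non-positive-cost) path from $t$ back to $x$; concatenating the two gives a negative-cost closed walk, hence a negative cycle in $G_{i-1}^t$, contradicting the standard negative-cycle optimality criterion for the min-cost flow $S_{i-1}^t$. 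You instead augment $f=S_{i-1}^t$ along the negative path and re-decompose the resulting pseudoflow in $G$ itself, peeling off a negative-cost $q$-$t$ path in a graph whose costs are all non-negative. Both arguments rest on the same underlying fact --- that the induction hypothesis makes $S_{i-1}^t$ an optimal value-$(i-1)$ flow --- but yours trades the black-box no-negative-cycle criterion for an explicit augmentation-plus-decomposition exchange, which is more self-contained at the price of some bookkeeping (the divergence and cost accounting for $g$, and the edge-disjointness of the decomposition paths, which indeed holds because $g$ is $0$-$1$ valued). The paper's version is shorter but relies on the structural observation that every reversed edge lies on a reversed $t$-to-$s$ flow path, which is exactly what lets it close the walk; your version never needs that observation.
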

\begin{proof}
By contradiction, assume that for some $t,q \in V(G)$, $c(\pi(q,t;T^t_{i-1})) < 0$ . $c(\pi(q,t;T^t_{i-1}))$ is the cost of a shortest path $\pi$ from $q$ to $t$ in $H_{i-1}^t$. If $c(\pi) < 0$, it contains edges that are reversed w.r.t.\ $G$. The set of reversed edges are those belonging to $i-1$ edge disjoint paths from $t$ to $s$ in $G_{i-1}^t$. Let $(x,y)$ the first reversed edge traversed in $\pi$. Consider the subpath $\pi[x:t]$. It holds that $c(\pi[x:t]) \leq c(\pi)$. Consider path $\pi'$ from $t$ to $x$ in $G_{i-1}^t$, that consists in only reversed edges. Thus $\pi[x:t] \circ \pi'$ is a closed walk of negative total cost in $G_{i-1}^t$ that does not contain negative cycles~\cite{erickson_mincost_flow}.
\end{proof}

\begin{lemma}
\label{lemma:inequality}
Let $t \in V(G)$, $\pi \in \Pi(s,t;G_{i-1}^t)$ and consider the $i$-th phase of Algorithm~\ref{alg:1}.
The path $\pi_t$ computed by
line~\ref{ln:update_path_q} after $t$ is extracted from $Q$ satisfies $|\pi| \preceq |\pi_t|$.
\end{lemma}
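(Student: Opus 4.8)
The plan is to prove the statement by induction on the order in which vertices are extracted from $Q$ during phase $i$, interpreting it as the usual ``no underestimation'' invariant of a Dijkstra-like computation. A direct inspection of the relaxation step and of line~\ref{ln:update_path_q} shows that whenever a vertex $x$ is extracted one has $d(x)=|\pi_x|$, and that every tree path $\pi(u,x;T^x_{i-1})$ is contained in $H^x_{i-1}\subseteq G^x_{i-1}$. Writing $q=\rho(t)$ and $\sigma=\pi(q,t;T^t_{i-1})$, the relaxation that set $\rho(t)=q$ guarantees $q\in V(T^t_{i-1})$, so $\sigma$ is a genuine $q$-to-$t$ path of $G^t_{i-1}$ and $|\pi_t|=|\pi_q|+|\sigma|=d(t)$. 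Since $q$ is extracted before $t$, the inductive hypothesis yields $|\pi^*_q|\preceq|\pi_q|$, where $\pi^*_q$ is a (simple) path in $\Pi(s,q;G^q_{i-1})$; note that $G^q_{i-1}$ has no negative-cost cycle, so a simple $\preceq$-shortest path indeed exists. As $\preceq$ is translation-invariant under the componentwise sum of distances, it therefore suffices to prove $|\pi|\preceq|\pi^*_q|+|\sigma|$, i.e.\ to ``transplant'' the optimum of $G^q_{i-1}$ into $G^t_{i-1}$ and pay only $\sigma$ to reach $t$.

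The bridge between the two residual networks is Lemma~\ref{lemma:feasible2}, whose hypothesis I would establish as follows. If $q=s$ the path $\pi^*_q$ is empty, hence trivially contained in $G^t_{i-1}$, and the argument below simplifies. If $q\neq s$, recall that $q$ has in-degree $i-1$ in $H_{i-1}$ and that $S^q_{i-1}$ consists of $i-1$ edge-disjoint $s$-$q$ paths entirely contained in $H_{i-1}$; these saturate all $i-1$ edges entering $q$ in $H_{i-1}$. Consequently every edge entering $q$ in $G^q_{i-1}$, being of the form $(u,q)\in E(G)\setminus S^q_{i-1}$, cannot lie in $E(H_{i-1})$ and hence belongs to $E(G)\setminus E(H_{i-1})$. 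In particular the last edge of $\pi^*_q$ entering $q$ is a non-preserver edge, which is precisely the hypothesis required by Lemma~\ref{lemma:feasible2}.

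With this in hand I would transplant $\pi^*_q$: if $\pi^*_q$ is already contained in $G^t_{i-1}$ set $\tilde\pi=\pi^*_q$, otherwise Lemma~\ref{lemma:feasible2} produces a path $\tilde\pi$ from $s$ to $q$ in $G^t_{i-1}$ with $\tilde\pi\prec\pi^*_q$. In both cases $\tilde\pi$ lies in $G^t_{i-1}$ and $|\tilde\pi|\preceq|\pi^*_q|$, so $W=\tilde\pi\circ\sigma$ is a walk from $s$ to $t$ in $G^t_{i-1}$ with $|W|=|\tilde\pi|+|\sigma|\preceq|\pi^*_q|+|\sigma|$. Because $G^t_{i-1}$ is the residual network of a min-cost flow it contains no negative-cost cycle, and since $\eta(\cdot)\ge 0$ always, deleting the cycles of $W$ decreases both $c(\cdot)$ and $\eta(\cdot)$ by non-negative amounts; hence the resulting simple $s$-$t$ path $P$ satisfies $|P|\preceq|W|$ in the lexicographic order. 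Since $\pi$ is $\preceq$-shortest among $s$-$t$ paths of $G^t_{i-1}$, we conclude $|\pi|\preceq|P|\preceq|W|\preceq|\pi^*_q|+|\sigma|\preceq|\pi_q|+|\sigma|=d(t)=|\pi_t|$, which closes the induction.

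The main obstacle, and the reason the induction is carried out on the extraction order rather than in one shot, is that the path $\pi_q$ actually built by the algorithm need not be simple and, more seriously, need not even be contained in $G^q_{i-1}$ (its prefix $\pi_{\rho(q)}$ was assembled inside $G^{\rho(q)}_{i-1}$). Thus one cannot feed $\pi_q$ into Lemma~\ref{lemma:feasible2} directly; the fix is to never manipulate $\pi_q$ itself but to route the estimate through the genuine optimum $\pi^*_q$ of $G^q_{i-1}$, exploiting only the numerical bound $|\pi^*_q|\preceq|\pi_q|$ supplied by the inductive hypothesis. The only other delicate point is the in-degree saturation fact of the second paragraph, which is exactly what certifies the non-preserver hypothesis of Lemma~\ref{lemma:feasible2}.
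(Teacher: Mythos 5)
Your proof is correct and follows essentially the same route as the paper's: induct on the extraction order (the paper phrases this as a first-counterexample contradiction), pass from $\pi_q$ to the true optimum $\pi^*_q\in\Pi(s,q;G_{i-1}^q)$ via the inductive hypothesis, case-split on whether $\pi^*_q$ survives in $G_{i-1}^t$, and invoke Lemma~\ref{lemma:feasible2} in the bad case before appending $\pi(q,t;T^t_{i-1})$. You additionally verify two points the paper leaves implicit --- that the in-degree saturation of $q$ in $H_{i-1}$ certifies the ``last edge in $E(G)\setminus E(H_{i-1})$'' hypothesis of Lemma~\ref{lemma:feasible2}, and that cycles in the concatenated walk can be removed without increasing $|\cdot|$ --- both of which are accurate and strengthen the argument.
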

\begin{proof}
By contradiction, consider first extracted node $t$ for which  $  \pi  \npreceq \pi_t $. 
Let $q=\rho(t)$ be the last node that relaxed node $t$ during phase $i$. Then $\pi_t = \pi_q \circ \pi(q,t;T^t_{i-1})$.

Consider $\pi' \in \Pi(s,q;G_{i-1}^q)$, by hypothesis $\pi' \preceq \pi_q$. There are two cases: (i) $\pi'$ exists in $G_{i-1}^t$, (ii) $\pi'$ does not exists in $G_{i-1}^t$. 
In the first case, $\pi' \circ \pi(q,t;T^t_{i-1})$ exists in $G_{i-1}^t$ then,  $\pi \preceq \pi' \circ \pi(q,t;T^t_{i-1}) \preceq \pi_q \circ \pi(q,t;T^t_{i-1}) = \pi_t$.
In the second case by Lemma \ref{lemma:feasible2}, there exists a path $\pi''$ from $s$ to $q$ in $G_{i-1}^t$ such that $\pi'' \prec \pi'$. Observe that $\pi'' \circ \pi(q,t;T^t_{i-1})$ is an existing path in $G_{i-1}^t$. Then, $\pi \preceq \pi'' \circ \pi(q,t;T^t_{i-1}) \prec \pi' \circ \pi(q,t;T^t_{i-1}) \preceq \pi_q \circ \pi(q,t;T^t_{i-1}) = \pi_t$.
\end{proof}
Since, for each node $t$, the value of $d(t)$ is initialized to $(+\infty, +\infty)$ and it is only decreased during the while loop, the above lemma implies that $d(t)$ is an upper bound on the value $|\pi|$, with $\pi \in \Pi(s,t;G_{i-1}^t)$.

\begin{lemma}
\label{lemma:suitable_subgraph}
Let $t \in V(G)$, $\pi \in \Pi(s,t;G_{i-1}^t)$ and consider the $i$-th phase of Algorithm \ref{alg:1}. The subpath $\pi[q(\pi):t]$ is entirely contained in $H_{i-1}^t$.
\end{lemma}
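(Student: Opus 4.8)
The plan is to examine the subpath $\pi[q(\pi):t]$ edge by edge and certify that each of its edges belongs to $H_{i-1}^t$. Write $q = q(\pi)$ and let the vertices traversed by $\pi[q:t]$ be $q = v_0, v_1, \dots, v_k = t$. Recall that, by its construction in line~\ref{ln:h_t}, $H_{i-1}^t$ is obtained from $H_{i-1}$ by reversing the edges of $S_{i-1}^t$ and then adding every edge of $E(G) \setminus S_{i-1}^t$ incident to $t$; in particular, since $S_{i-1}^t \subseteq E(H_{i-1})$ by the induction hypothesis, $H_{i-1}^t$ is itself a subgraph of $G_{i-1}^t$, so it suffices to verify membership in $H_{i-1}^t$. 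First I would split the edges of $\pi[q:t]$ into those entering an internal vertex $v_j$ with $1 \le j \le k-1$ and the final edge $(v_{k-1}, t)$ entering $t$, which behaves differently.

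For an edge $(v_{j-1}, v_j)$ with $1 \le j \le k-1$, the head $v_j$ is an internal vertex of $\pi$ lying strictly after $q$. By the very definition of $q = q(\pi)$ as the \emph{last} internal vertex whose incoming edge in $\pi$ belongs to $E(G) \setminus E(H_{i-1})$, the edge $(v_{j-1}, v_j)$ cannot lie in $E(G) \setminus E(H_{i-1})$. I would then distinguish whether $(v_{j-1}, v_j)$ is an original or a reversed edge of $G_{i-1}^t$. If it is an original edge, then $(v_{j-1}, v_j) \in E(G) \setminus S_{i-1}^t$, and since it is not in $E(G) \setminus E(H_{i-1})$ it must belong to $E(H_{i-1})$, hence it survives in $H_{i-1}^t$. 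If it is a reversed edge, then $(v_j, v_{j-1}) \in S_{i-1}^t \subseteq E(H_{i-1})$, so its reversal is produced exactly when $S_{i-1}^t$ is reversed in the construction of $H_{i-1}^t$. Either way $(v_{j-1}, v_j) \in E(H_{i-1}^t)$.

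It remains to treat the last edge $(v_{k-1}, t)$, again splitting into the two cases. If $(v_{k-1}, t)$ is an original edge of $G_{i-1}^t$, then it lies in $E(G) \setminus S_{i-1}^t$ and is incident to $t$, so it is precisely one of the edges explicitly re-inserted into $H_{i-1}^t$. If instead it is a reversed edge, then $(t, v_{k-1}) \in S_{i-1}^t$ and its reversal appears in $H_{i-1}^t$ through the reversal step. This covers all edges of $\pi[q:t]$ and establishes the claim; the degenerate case $q = s$ is subsumed, since there the same reasoning applies verbatim to the whole path (every internal vertex fails the defining condition of $q$).

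The routine part is the bookkeeping of original-versus-reversed edges, which is harmless thanks to the anti-symmetry of $G$ (a reversed edge never coincides with an edge of $E(G)$, so the two cases are genuinely disjoint). The one point requiring real care — and the main obstacle — is the terminal edge into $t$: it is exactly the edge that may belong to $E(G) \setminus E(H_{i-1})$ and is therefore not guaranteed to survive the restriction to $H_{i-1}$, which is precisely why line~\ref{ln:h_t} explicitly re-inserts all in-edges of $t$. Recognizing that $q(\pi)$ is defined so that \emph{no} internal vertex strictly after it has such an exceptional incoming edge is what forces the remainder of the path to stay inside $H_{i-1}^t$.
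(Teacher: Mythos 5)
Your proof is correct and follows essentially the same route as the paper's: both arguments walk along $\pi[q(\pi):t]$, use the definition of $q(\pi)$ to conclude that every edge before the last one is either an edge of $E(H_{i-1})$ or a reversal of an edge in $S_{i-1}^t \subseteq E(H_{i-1})$, and handle the final edge into $t$ via the explicit re-insertion in line~\ref{ln:h_t}. Your version is merely a more explicit case analysis (original versus reversed, internal versus terminal edge) of the same observation.
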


\begin{lemma}
\label{lemma:inequality2}
Let $t \in V(G)$, $\pi \in \Pi(s,t;G_{i-1}^t)$ and consider the $i$-th phase of Algorithm~\ref{alg:1}.
The path $\pi_t$ computed by
line~\ref{ln:update_path_q} after $t$ is extracted from $Q$ satisfies $|\pi_t| \preceq |\pi|$.
\end{lemma}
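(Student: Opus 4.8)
The plan is to prove $|\pi_t| \preceq |\pi|$ for $\pi \in \Pi(s,t; G_{i-1}^t)$, which together with Lemma~\ref{lemma:inequality} (giving $|\pi| \preceq |\pi_t|$) will establish that $\pi_t$ is a genuine shortest path in $G_{i-1}^t$ w.r.t.\ $\preceq$. I would argue by induction on the order in which vertices are extracted from $Q$ during phase $i$, mirroring a standard Dijkstra correctness proof but adapted to our residual-network and tie-breaking setting. The key structural fact I intend to exploit is the suboptimality property (Lemma~\ref{lemma:suboptimality}): a shortest path $\pi$ decomposes as $\pi_{q} \circ \pi[q:t]$ where $q = q(\pi)$, and by the induction hypothesis the algorithm has already correctly computed a path to $q$ whose value is at most $|\pi[s:q]|$.

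The main steps I would carry out are as follows. First, by contradiction, let $t$ be the \emph{first} vertex extracted from $Q$ for which $|\pi_t| \npreceq |\pi|$ (i.e.\ $|\pi| \prec |\pi_t|$, using Lemma~\ref{lemma:inequality} to rule out incomparability). Consider such a shortest path $\pi \in \Pi(s,t; G_{i-1}^t)$ and set $q = q(\pi)$. By Lemma~\ref{lemma:suboptimality}, $\pi[s:q]$ is itself a shortest path from $s$ to $q$ in $G_{i-1}^q$, so $|\pi[s:q]| = |\pi_q|$ for the path $\pi_q$ the algorithm computes at $q$. Second, I would argue that $q$ is extracted \emph{before} $t$: since $|\pi[q:t]| \succeq (0,\cdot)$ — here Lemma~\ref{lemma:positive} guarantees the suffix computed through $T^t_{i-1}$ has non-negative cost, and Lemma~\ref{lemma:suitable_subgraph} guarantees $\pi[q:t]$ actually lives inside $H_{i-1}^t$ so that $\pi(q,t; T^t_{i-1})$ is well-defined and no longer than $\pi[q:t]$ — we have $|\pi_q| \preceq |\pi| \prec |\pi_t| \preceq d(t)$ at the moment $t$ is extracted, and since $q$ was extracted before $t$ the induction hypothesis applies to $q$, giving $|\pi_q| \preceq |\pi[s:q]|$. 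Third, when $q$ is extracted, the algorithm's relaxation loop examines $t$ (still in $Q$) and attempts the extension $d(q) + |\pi(q,t; T^t_{i-1})|$; by Lemma~\ref{lemma:suitable_subgraph} this extension corresponds to a valid path in $G_{i-1}^t$ of value at most $|\pi_q| + |\pi[q:t]| = |\pi[s:q]| + |\pi[q:t]| = |\pi|$. Hence after this relaxation $d(t) \preceq |\pi|$, and since $d(t)$ only decreases, the value $|\pi_t|$ recorded when $t$ is finally extracted satisfies $|\pi_t| \preceq |\pi|$ — contradicting the choice of $t$.

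The delicate point, and the step I expect to be the main obstacle, is handling the fact that $\pi_q$ was computed in $G_{i-1}^q$ rather than in $G_{i-1}^t$: the extension $\pi_q \circ \pi(q,t; T^t_{i-1})$ need not be a genuine path in $G_{i-1}^t$. I would neutralize this exactly as in Lemma~\ref{lemma:inequality}: either $\pi_q$ already exists in $G_{i-1}^t$ (and the argument goes through directly), or by Lemma~\ref{lemma:feasible2} there is an alternative $\pi'' \prec \pi_q$ from $s$ to $q$ genuinely inside $G_{i-1}^t$, and concatenating $\pi''$ with $\pi(q,t; T^t_{i-1})$ yields a valid path in $G_{i-1}^t$ whose value is no larger than $|\pi|$. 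The remaining care is ensuring the tie-breaking component $\eta(\cdot)$ is tracked consistently through each concatenation, since $\preceq$ is lexicographic and the whole argument must respect the second coordinate, not merely the cost; Lemma~\ref{lemma:suitable_subgraph} is what lets me equate the suffix through $T^t_{i-1}$ with the true suffix $\pi[q:t]$ in both coordinates.
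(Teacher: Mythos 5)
Your proposal follows essentially the same route as the paper's proof: a first-failure contradiction, the decomposition $\pi = \pi' \circ \pi[q:t]$ at $q = q(\pi)$ via Lemma~\ref{lemma:suboptimality}, Lemmas~\ref{lemma:positive} and~\ref{lemma:suitable_subgraph} to identify the suffix with $\pi(q,t;T^t_{i-1})$ and force $q$ to be extracted before $t$, and then the relaxation from $q$ to conclude $|\pi_t| \preceq |\pi|$. The only cosmetic difference is that you fold the ``$\pi_q$ may not live in $G_{i-1}^t$'' concern (via Lemma~\ref{lemma:feasible2}) into this proof, whereas the paper defers that to Lemmas~\ref{lemma:inequality} and~\ref{lemma:exists}; the substance is the same.
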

\begin{proof}
By contradiction, take the first extracted node $t$ for which  $ \pi_t  \npreceq \pi $. For simplicity let $q = q(\pi)$.

By the suboptimality property (Lemma \ref{lemma:suboptimality}), we have that $\pi = \pi' \circ \pi[q:t]$, where $\pi' \in \Pi(s,q;G_{i-1}^q)$.
By Lemma \ref{lemma:suitable_subgraph} $\pi[q:t]$ exists in $H_{i-1}^t$. Notice that since $H_{i-1}^t \subseteq G_{i-1}^t$, then $|\pi[q:t]| = |\pi(q,t;T^t_{i-1})|$ and by Lemma \ref{lemma:positive}, $c(\pi[q:t]) \geq 0$. Moreover $\pi[q:t]$ contains one edge in $E(G) \setminus E(H_{i-1})$  thus $\pi' \prec \pi$. By hypothesis, $\pi_q \preceq \pi'$ and by Lemma $\ref{lemma:inequality}$ $\pi \preceq \pi_t$, hence $\pi_q \prec \pi_t$ and node $q$ is extracted before $t$. Because of relax check for node $t$ w.r.t.\ $q$ it holds that $\pi_t \preceq \pi_q \circ \pi(q,t;T^t_{i-1}) \preceq \pi' \circ \pi(q,t;T^t_{i-1}) = \pi$.
\end{proof}

\begin{lemma}
\label{lemma:exists}
Let $t \in V(G)$, $\pi \in \Pi(s,t;G_{i-1}^t)$ and consider the $i$-th phase of Algorithm~\ref{alg:1}.
The path $\pi_t$ computed by
line~\ref{ln:update_path_q} after $t$ is extracted from $Q$ is entirely contained in $G_{i-1}^t$.
\end{lemma}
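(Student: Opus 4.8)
The plan is to argue by induction on the order in which vertices are extracted from $Q$ during phase $i$. Suppose for contradiction that $t$ is the \emph{first} extracted vertex for which the computed path $\pi_t$ leaves $G_{i-1}^t$, and write $\pi_t = \pi_q \circ \pi(q,t;T_{i-1}^t)$ with $q=\rho(t)$. If $q=s$ then $\pi_q$ is empty and $\pi_t = \pi(s,t;T_{i-1}^t)$ lives in $T_{i-1}^t \subseteq H_{i-1}^t \subseteq G_{i-1}^t$, so there is nothing to prove; hence I may assume $q\neq s$. The suffix $\pi(q,t;T_{i-1}^t)$ is contained in $H_{i-1}^t \subseteq G_{i-1}^t$ by construction, so the whole difficulty is concentrated in the prefix: I must show that $\pi_q$ is entirely contained in $G_{i-1}^t$.

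Two facts about $\pi_q$ will drive the argument. First, since $q$ is extracted before $t$ and $t$ is the first offending vertex, the inductive hypothesis gives that $\pi_q$ is entirely contained in $G_{i-1}^q$; combined with the equality $|\pi_q| = |\pi'|$ for $\pi'\in\Pi(s,q;G_{i-1}^q)$ (which follows from Lemmas~\ref{lemma:inequality} and~\ref{lemma:inequality2} applied with the target $q$), this yields $\pi_q \in \Pi(s,q;G_{i-1}^q)$. Second --- and this is the structural crux --- the last edge $e_q$ of $\pi_q$ belongs to $E(G)\setminus E(H_{i-1})$. Here I would use a degree-counting observation: by construction $H_{i-1}$ has exactly $i-1$ edges entering $q$ (one is added in each previous phase), while $S_{i-1}^q$ --- being $i-1$ edge-disjoint $s$-$q$ paths contained in $H_{i-1}$ --- already supplies $i-1$ distinct edges entering $q$. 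Hence the in-edges of $q$ in $H_{i-1}$ are \emph{exactly} those of $S_{i-1}^q$. When $H_{i-1}^q$ is built at line~\ref{ln:h_t}, all of these are reversed, so any edge still entering $q$ in $H_{i-1}^q$ lies in $E(G)\setminus S_{i-1}^q$ and, by the same counting, cannot belong to $H_{i-1}$. Since $e_q$ is the last edge of a path in $T_{i-1}^q \subseteq H_{i-1}^q$ and therefore enters $q$ in $H_{i-1}^q$, it follows that $e_q \in E(G)\setminus E(H_{i-1})$.

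With these two facts in hand I would close the argument exactly as in the proof of Lemma~\ref{lemma:suboptimality}. Assume $\pi_q$ is not entirely contained in $G_{i-1}^t$. Since $\pi_q \in \Pi(s,q;G_{i-1}^q)$ enters $q$ through $e_q \in E(G)\setminus E(H_{i-1})$, Lemma~\ref{lemma:feasible2} provides a path $\pi''$ from $s$ to $q$ in $G_{i-1}^t$ with $\pi'' \prec \pi_q$. Concatenating with the suffix yields a walk $\pi'' \circ \pi(q,t;T_{i-1}^t)$ from $s$ to $t$ entirely inside $G_{i-1}^t$ whose length satisfies $|\pi'' \circ \pi(q,t;T_{i-1}^t)| \prec |\pi_q| + |\pi(q,t;T_{i-1}^t)| = |\pi_t|$, because the lexicographic order is preserved under adding a common length. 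But Lemmas~\ref{lemma:inequality} and~\ref{lemma:inequality2} give $|\pi_t| = |\pi|$ for $\pi \in \Pi(s,t;G_{i-1}^t)$, so this walk beats the shortest $s$-$t$ length in $G_{i-1}^t$; as $G_{i-1}^t$ contains no negative-cost cycle, it admits a simple $s$-$t$ path of length $\preceq$ that of the walk, contradicting $\pi \in \Pi(s,t;G_{i-1}^t)$. Therefore $\pi_q$, and hence $\pi_t$, is entirely contained in $G_{i-1}^t$, contradicting the choice of $t$ and completing the induction.

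The main obstacle is precisely the structural claim $e_q \in E(G)\setminus E(H_{i-1})$: without it the hypothesis of Lemma~\ref{lemma:feasible2} fails and the exchange step cannot be performed. Everything else is a routine Dijkstra-style argument once the length equality $|\pi_t| = |\pi|$ and the absence of negative cycles are available. A secondary point to handle carefully is that the concatenation $\pi'' \circ \pi(q,t;T_{i-1}^t)$ may fail to be simple; this is harmless since the no-negative-cycle property lets me extract from it a simple $s$-$t$ path of no greater length with respect to $\preceq$.
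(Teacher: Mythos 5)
Your proof is correct and follows essentially the same route as the paper's: take the first extracted vertex $t$ whose computed path $\pi_t = \pi_q \circ \pi(q,t;T^t_{i-1})$ escapes $G_{i-1}^t$, observe that the suffix lives in $H_{i-1}^t$ so only $\pi_q$ can be at fault, invoke Lemma~\ref{lemma:feasible2} to replace $\pi_q$ by a strictly shorter path inside $G_{i-1}^t$, and contradict Lemma~\ref{lemma:inequality2}. Your additional verification that the last edge of $\pi_q$ lies in $E(G)\setminus E(H_{i-1})$ --- which is a genuine hypothesis of Lemma~\ref{lemma:feasible2} that the paper's proof applies without comment --- is sound (the in-edges of $q$ in $H_{i-1}$ are exactly those of $S_{i-1}^q$, all of which are reversed in $H_{i-1}^q$), and is a welcome tightening rather than a different approach.
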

\begin{proof}
By contradiction, take first extracted node $t$ for which  $ \pi_t $ does not exists in $G_{i-1}^t$.
Let $q$ be the last node that performed a relaxation for $t$, we have that $\pi_t = \pi_q \circ \pi(q,t;T^t_{i-1})$.
Since $\pi(q,t;T^t_{i-1})$ exists in $G_{i-1}^t$ then $\pi_q$ does not. By hypothesis $\pi_q$ exists in $G_{i-1}^q$ and by Lemma $\ref{lemma:feasible2}$ there exists a path $\pi'$ in $G_{i-1}^t$ such that $\pi' \prec \pi_q$. The path $ \pi' \circ \pi(q,t;T^t_{i-1})$ gives us an existing path in $G_{i-1}^t$, such that $\pi \preceq \pi' \circ \pi(q,t;T^t_{i-1}) \prec \pi_t$, where $\pi \in \Pi(s,t;G_{i-1}^t)$. This contradicts Lemma \ref{lemma:inequality2} for which $\pi_t \preceq \pi$.
\end{proof}

\noindent We are now ready to establish the correctness of the algorithm as summarized by the following lemma.

\begin{lemma}\label{lm:algorithm_correctness}
For all $i=1,\dots,p$ and $t \in V(G) \setminus \{s\}$, Algorithm~\ref{alg:1} computes a single-source $i$-multipath preserver $H_i$ of $G$ and a set $S^t_i$, with $S_i^t \subseteq E(H)$, inducing $i$ edge-disjoint paths of minimum total cost from $s$ to $t$ in $G$.
\end{lemma}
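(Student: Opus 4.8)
The plan is to prove Lemma~\ref{lm:algorithm_correctness} by induction on the phase index $i$, where the key work is to combine the auxiliary lemmas already established (Lemmas~\ref{lemma:positive}, \ref{lemma:inequality}, \ref{lemma:inequality2}, and \ref{lemma:exists}) into a proof that the Dijkstra-like loop correctly computes, for each $t$, a genuine shortest path $\pi_t \in \Pi(s,t;G_{i-1}^t)$. The base case $i=1$ is immediate: $H_1$ is a shortest-path tree and each $S_1^t$ is the unique $s$-$t$ path in it, so properties (i)--(iii) hold. For the inductive step I would assume (i)--(iii) for $i-1$ and argue about phase $i$.

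The heart of the argument is establishing that the path $\pi_t$ computed on line~\ref{ln:update_path_q} is exactly a shortest path in $\Pi(s,t;G_{i-1}^t)$. First I would note that the preprocessing loop (lines~\ref{ln:loop_T_start}--\ref{ln:loop_T_end}) builds the sparse subgraph $H_{i-1}^t$ and its reverse SPT $T_{i-1}^t$ toward $t$; by Lemma~\ref{lemma:suitable_subgraph} any shortest path's suffix $\pi[q(\pi):t]$ lives in $H_{i-1}^t$, so the stored distances $|\pi(q,t;T_{i-1}^t)|$ faithfully capture the relevant suffix lengths, and by Lemma~\ref{lemma:positive} these suffix costs are non-negative, which is precisely what makes the outer Dijkstra-like extraction-by-increasing-$|\pi_q|$ valid. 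Combining Lemma~\ref{lemma:inequality} (which gives $|\pi| \preceq |\pi_t|$, so $\pi_t$ is no shorter than an optimum) with Lemma~\ref{lemma:inequality2} (which gives $|\pi_t| \preceq |\pi|$, so $\pi_t$ is no longer than an optimum) yields $|\pi_t| = |\pi|$, and Lemma~\ref{lemma:exists} guarantees $\pi_t$ is a genuine path in $G_{i-1}^t$. Together these establish $\pi_t \in \Pi(s,t;G_{i-1}^t)$.

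Once this is in hand, the three target properties follow as in Section~\ref{sec:size}. For (i): since $\pi_t$ is a shortest $s$-$t$ path in the residual network $G_{i-1}^t$, pushing one additional unit of flow along it is exactly the $i$-th SSP augmentation, so applying the update of Remark~\ref{remark:oplus} on line~\ref{ln:update_solution_q} turns $S_{i-1}^t$ into a set $S_i^t$ inducing $i$ edge-disjoint paths of minimum total cost from $s$ to $t$ in $G$. For (ii) and (iii) I would observe that $e_q$, the last edge of $\pi_q$, is added to $H_i$ on line~\ref{ln:add_edge_to_H_i} when $q$ is extracted; since $q(\pi_t)$ is the last vertex whose incoming edge lies in $E(G)\setminus E(H_{i-1})$, the prefix $\pi_t[s:q]$ uses only edges already in $H_{i-1}$ (hence in $H_i$ by the inductive hypothesis (iii)), and the single new edge entering $t$ is exactly $e_t$, just added. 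Thus $E(\pi_t)\cap E(G) \subseteq E(H_i)$, which by Remark~\ref{remark:oplus} forces $S_i^t \subseteq E(H_i)$, giving (iii), while (ii) follows because each $S_i^t \subseteq E(H_i)$ certifies that $H_i$ preserves the $i$ cheapest edge-disjoint paths to every $t$.

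The main obstacle I anticipate is not any single inequality but the \emph{interdependence} of the ordering argument: the correctness of extending $\pi_q$ toward $t$ relies on vertices being extracted in non-decreasing $|\pi_q|$ order, yet that ordering is itself only justified because the suffix costs are non-negative (Lemma~\ref{lemma:positive}) and because shortest paths decompose at $q(\pi)$ via the suboptimality property (Lemma~\ref{lemma:suboptimality}). Care is needed to make the induction \emph{within} phase $i$ (over the order of extraction) mesh cleanly with the outer induction over $i$; in particular Lemmas~\ref{lemma:inequality2} and~\ref{lemma:exists} each take ``the first extracted node violating the claim,'' so I must ensure these per-node inductions are consistent and that the relaxation check on the inner loop genuinely records the best predecessor $\rho(t)$. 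Since all of these technical points are already discharged by the cited lemmas, the proof of Lemma~\ref{lm:algorithm_correctness} itself should reduce to assembling them in the correct logical order rather than introducing new ideas.
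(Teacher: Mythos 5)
Your overall architecture matches the paper's: combine Lemmas~\ref{lemma:inequality2} and~\ref{lemma:exists} (and optionally Lemma~\ref{lemma:inequality}) to conclude $\pi_t \in \Pi(s,t;G_{i-1}^t)$, invoke Remark~\ref{remark:oplus} for property~(i), and derive~(ii) from~(iii). However, your argument for property~(iii) contains a genuine error. You claim that ``since $q(\pi_t)$ is the last vertex whose incoming edge lies in $E(G)\setminus E(H_{i-1})$, the prefix $\pi_t[s:q]$ uses only edges already in $H_{i-1}$.'' The definition of $q(\pi)$ gives you exactly the opposite: it is the portion of $\pi_t$ \emph{after} $q$ (excluding the final edge $e_t$ into $t$) that is guaranteed to avoid $E(G)\setminus E(H_{i-1})$, while the prefix $\pi_t[s:q]$ \emph{ends} with an edge of $E(G)\setminus E(H_{i-1})$ (the one entering $q$) and may contain several more such edges further back, one entering each intermediate vertex in the chain $q_0=t, q_1, q_2, \dots$ of Lemma~\ref{lemma:shortest_path_in_H_i}. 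None of those edges is in $H_{i-1}$, so the inductive hypothesis~(iii) for phase $i-1$ cannot absorb them, and your containment argument breaks exactly on the prefix.

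The paper closes this hole with an additional induction \emph{within} phase $i$, on the number of edges of the computed path: writing $\pi_t = \pi_{\rho(t)} \circ \pi(\rho(t),t;T^t_{i-1})$, the suffix lives in $H^t_{i-1}$ whose only $G$-edges outside $H_{i-1}$ enter $t$, and its last edge is precisely the one added to $H_i$ at line~\ref{ln:add_edge_to_H_i}; the prefix $\pi_{\rho(t)}$ is itself a path computed by the algorithm with strictly fewer edges, so $E(\pi_{\rho(t)})\cap E(G)\subseteq E(H_i)$ holds by that inner induction (each offending edge of the prefix is the edge $e_q$ added when the corresponding vertex $q$ was extracted earlier). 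You do gesture at this mechanism when you note that $e_q$ is added upon extraction of $q$, but as written your proof asserts the prefix is already in $H_{i-1}$ rather than unwinding it recursively into $H_i$; making that recursion explicit is what is needed to complete the argument.
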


\subsection{Analysis of the computational complexity}\label{sec:time_complexity_algorithm}

In order to bound the time complexity of our algorithm we first argue on how,
during phase $i$ of Algorithm~\ref{alg:1}, it is possible to
implement Line~\ref{ln:reweighted_dijkstra} in time $O(in + n \log n)$.

For any fixed phase $i$ of the algorithm, and for any target vertex $t$, Line~\ref{ln:reweighted_dijkstra} computes a (reverse) shortest path tree towards $t$ in  $H_{i-1}^t$. As the edge costs in $H_{i-1}^t$ can be negative, a naive implementation using the Bellman-Ford algorithm would require $\Theta(in^2)$ time (since $H_{i-1}^t$ has size $\Theta(in)$). Consequently, the overall time needed to compute all trees $T_{i-1}^t$, for every $i$ and every $t$, would be $\Theta(p^2n^3)$.

To reduce the time complexity of this step, we use a technique similar to the one employed in the successive shortest path algorithm: we re-weight the edges of $H_{i-1}^t$ so that (i) shortest paths are preserved, and (ii) all edge costs are non-negative. Then, after such a re-weighting, a SPT towards $t$ in $H_{i-1}^t$ can be found in $O(in + n \log n)$ time using Dijkstra's algorithm.

We will employ a well-known re-weighting scheme in which the edge costs are completely determined by some function $h : V \to \mathbb{R}$ (see, e.g., \cite[Ch 25.3]{cormen}).
Given $h$, the \emph{new cost} $c'(u,v)$ of an edge $(u,v)$ is defined as $c(u,v) + h(u) - h(v)$.
Notice that the cost of any path $\pi$ from $x$ to $y$ w.r.t.\ $c'$ is exactly $c(\pi) + h(x) - h(y)$, thus the set of shortest paths w.r.t.\ $c'$ coincides with the corresponding set w.r.t.\ $c$.
Therefore, the above re-weighting scheme immediately satisfies (i), and hence we will only need to argue about (ii).

Suppose that, at the beginning of phase $i$ (where $i$ ranges from $2$ to $p$), we already know a re-weighting function $h_{i-2}^t$ such that the graph $G_{i-2}^t$ re-weighted according to $h_{i-2}^t$ has no negative-cost edges.\footnote{Observe that, in the first phase $i=2$, such a function $h_0^t$ is trivially known. Indeed, since the edge costs of $G$ are already non-negative we can simply choose $h_0^t(v) = 0$ for each $v \in V(G)$.}
We will show how to use $h_{i-2}^t$ to obtain a new re-weighting function $h_{i-1}^t$ such that the graph $G_{i-1}^t$ re-weighted according to $h_{i-1}^t$ has no negative-cost edges. 
Since $H_{i-1}^t$ is a subgraph of $G_{i-1}^t$, $h_{i-1}^t$ also satisfies (ii).
The re-weighting induced by $h_{i-1}^t$ can then be immediately used to implement Line~\ref{ln:reweighted_dijkstra} of the algorithm in time $O(in + n \log n)$ using Dijkstra's algorithm.

Let $\bar{H}$ be a graph obtained from $H_{i-1}$ by reversing the edges in $S_{i-2}^t$ (i.e., $i-2$ edge-disjoint paths of minimum total cost from $s$ to $t$), and
notice that $\bar{H}$ is a subgraph of $G_{i-2}^t$.
We compute all the distances from $s$ to the vertices in $\bar{H}$ using Dijkstra algorithm (where edges are re-weighted w.r.t.\ $h_{i-2}^t$) and we let $h(v)$ be the distance from $s$ to $v$. Finally, for each $v \in V(G)$, we define $h_{i-1}^t(v) = h_{i-2}^t(v) + h(v)$.
Lemma~\ref{lemma:h_positive_edge_weights} in the following proves that, when $G_{i-1}^t$ is re-weighted according to $h_{i-1}^t$, all edge costs will be non-negative.

\begin{lemma}
\label{lemma:h_positive_edge_weights}
For any $(u,v) \in E(G_{i-1}^t)$, we have $c(u,v)+h_{i-1}^t(u)-h_{i-1}^t(v) \ge 0$.
\end{lemma}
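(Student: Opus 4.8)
The plan is to prove that the re-weighting function $h_{i-1}^t(v) = h_{i-2}^t(v) + h(v)$ makes all edges of $G_{i-1}^t$ non-negative, where $h(v)$ is the distance from $s$ to $v$ in $\bar{H}$ re-weighted by $h_{i-2}^t$. The key structural observation is that the edge set of $G_{i-1}^t$ is obtained from that of $G_{i-2}^t$ by the single augmenting step of the SSP algorithm: namely, reversing the edges of the shortest path used to go from $i-2$ to $i-1$ disjoint paths. Equivalently, the edges that $\bar{H}$ reverses (those in $S_{i-2}^t$) together with the edges in $G$ not yet in $S_{i-2}^t$ generate $G_{i-2}^t$, and $\bar{H}$ itself sits inside $G_{i-2}^t$ as a subgraph whose distances from $s$ we have just computed with Dijkstra.

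First I would establish the elementary fact about the distances $h(\cdot)$ computed in $\bar{H}$: by correctness of Dijkstra on the $h_{i-2}^t$-reweighted graph (whose edges are non-negative by the inductive assumption on $h_{i-2}^t$), for every edge $(u,v) \in E(\bar{H})$ we have the triangle inequality $h(v) \le h(u) + c(u,v) + h_{i-2}^t(u) - h_{i-2}^t(v)$, which rearranges to $c(u,v) + h_{i-1}^t(u) - h_{i-1}^t(v) \ge 0$. This settles the claim for all edges lying in $\bar{H}$, and in particular for all the reversed edges coming from $S_{i-2}^t$ as well as for every edge of $H_{i-1}$ that survives in $\bar{H}$. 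The substance of the proof is therefore to handle the edges of $G_{i-1}^t$ that are \emph{not} in $\bar{H}$.

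Next I would characterize these remaining edges. An edge $(u,v) \in E(G_{i-1}^t)$ either already lies in $G_{i-2}^t$ or it appears because of the last augmentation; and $G_{i-1}^t$ differs from $G_{i-2}^t$ only along the shortest path $\pi$ in $G_{i-2}^t$ that was reversed to produce $S_{i-1}^t$ from $S_{i-2}^t$. For an edge $(u,v)$ already present in $G_{i-2}^t$, the inductive hypothesis on $h_{i-2}^t$ gives $c(u,v) + h_{i-2}^t(u) - h_{i-2}^t(v) \ge 0$, and I must combine this with the $h(\cdot)$ term; here the crucial point is that along a shortest path from $s$ the distances $h(\cdot)$ satisfy the tight equality $h(v) = h(u) + c(u,v) + h_{i-2}^t(u) - h_{i-2}^t(v)$ on tree edges, so reversing such an edge exactly negates the contribution and keeps the reweighted cost at zero, while non-tree edges remain non-negative. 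In effect the new potential $h_{i-1}^t$ is the potential associated with the residual network after one more SSP augmentation, and the standard invariant of SSP — that residual edges stay non-negative under the running sum of shortest-path potentials — is exactly what I am re-deriving.

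The main obstacle I anticipate is the bookkeeping for edges whose orientation flips between $G_{i-2}^t$ and $G_{i-1}^t$: a reversed edge $(v,u)$ of $G_{i-1}^t$ corresponds to a forward edge $(u,v)$ that lay on the last augmenting shortest path, so its cost changes sign, and I must verify that the equality $h(v) = h(u) + c'(u,v)$ satisfied on the shortest path exactly cancels the sign flip to yield $c(v,u) + h_{i-1}^t(v) - h_{i-1}^t(u) = 0 \ge 0$. This requires being careful that $\bar{H}$ was built by reversing $S_{i-2}^t$ rather than $S_{i-1}^t$, so that the augmenting path lives in $\bar{H}$ (equivalently in $G_{i-2}^t$) and its edges are precisely the ones whose $h$-distances are tight; I would make this explicit by noting that the sole edge of $G_{i-1}^t$ not covered by an edge or reversed edge of $\bar{H}$ is the single new edge entering $t$, which by construction of the SSP step also satisfies the tight shortest-path equality and hence reweights to a non-negative value. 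Once every case is checked against the appropriate (in)equality for $h(\cdot)$, the inequality $c(u,v) + h_{i-1}^t(u) - h_{i-1}^t(v) \ge 0$ follows for all $(u,v) \in E(G_{i-1}^t)$.
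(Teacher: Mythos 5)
Your overall decomposition is the same as the paper's: split $E(G_{i-1}^t)$ into the edges that survive from $G_{i-2}^t$ (where the reweighted cost should be $\ge 0$ by a triangle inequality for $h$) and the edges obtained by reversing the augmenting path $\pi \in \Pi(s,t;G_{i-2}^t)$ (where the tight shortest-path equality makes the reweighted cost exactly $0$ after the sign flip). That skeleton is correct. However, there is a genuine gap in the first half. You derive the inequality $h(v) \le h(u) + c'(u,v)$ only from the correctness of Dijkstra on $\bar{H}$, which, as you yourself note, ``settles the claim for all edges lying in $\bar{H}$.'' But many edges of $G_{i-1}^t$ lie in $G_{i-2}^t$ and \emph{not} in $\bar{H}$ --- every edge of $E(G)\setminus E(H_{i-1})$ that is untouched by the reversals is of this kind --- and for those your argument offers nothing beyond the assertion that ``non-tree edges remain non-negative.'' That assertion does not follow: since $h$ is a distance function computed in the \emph{subgraph} $\bar{H}$, one only knows $h(v) \ge d_{G_{i-2}^t}(s,v)$ a priori, and $h(v)$ could strictly exceed $h(u)+c'(u,v)$ for an edge $(u,v)$ of $G_{i-2}^t$ outside $\bar{H}$ if $\bar{H}$ failed to contain a shortest $s$--$v$ path of $G_{i-2}^t$. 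Knowing $c'(u,v)\ge 0$ (your appeal to the inductive hypothesis on $h_{i-2}^t$) does not close this, because the quantity you need to bound is $c'(u,v)+h(u)-h(v)$, not $c'(u,v)$ alone.

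The missing ingredient is precisely the paper's Lemma~\ref{lemma:spt}: a shortest-path tree of $\bar{H}$ rooted at $s$ is also a shortest-path tree of $G_{i-2}^t$, i.e., the sparse graph $\bar{H}$ preserves all distances from $s$ in $G_{i-2}^t$. Once this is in place, $h$ is the true distance function of $G_{i-2}^t$, the triangle inequality $h(v)\le h(u)+c'(u,v)$ holds for \emph{every} edge of $G_{i-2}^t$, and the tight equality on the edges of $\pi$ is also immediate; your case analysis then goes through verbatim. But this preservation statement is the technically substantial part of the argument (its proof in the paper reuses the $\Delta(t,v)$ machinery of Lemmas~\ref{lemma:brown} and~\ref{lemma:feasible2}), and your proposal neither states nor proves it.
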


We conclude by observing that $\bar{H}$ can be computed in $O(in)$ and therefore the overall running time required to compute $T_i^t$ is $O(in + n \log n)$, as claimed. The overall running time of Algorithm~\ref{alg:1} is $O(p^2n^2+pn^2 \log n)$ (see Appendix~\ref{apx:running_time} for further details).
The next theorem follows from Theorem~\ref{thm:size_spanner}, Lemma~\ref{lm:algorithm_correctness}, and the above discussion.

\begin{theorem}\label{thm:main_algorithm}
Algorithm~\ref{alg:1} solves both single-source $p$-multipath preserver problem and shortest $p$ edge-disjoint paths problem in $O(p^2n^2+pn^2 \log n)$. Moreover, the size of the computed preserver is equal to $p(n-1)$, which is optimal.
\end{theorem}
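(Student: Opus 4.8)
The plan is to treat Theorem~\ref{thm:main_algorithm} as the assembly of three claims, two of which are already established. Correctness---that Algorithm~\ref{alg:1} outputs, for every $t \neq s$, a set $S_p^t$ inducing $p$ edge-disjoint paths of minimum total cost from $s$ to $t$, together with a single-source $p$-multipath preserver $H_p$ containing all of them---is precisely Lemma~\ref{lm:algorithm_correctness}, so I would invoke it directly. The size statement, that $H_p$ has $p(n-1)$ edges and that this is optimal, is Theorem~\ref{thm:size_spanner} combined with the matching in-degree lower bound (each non-source vertex must keep at least $p$ incoming edges to preserve $p$-edge-outconnectivity from $s$). The only genuine work that remains is therefore the running-time bound $O(p^2 n^2 + p n^2 \log n)$, and that is where I would concentrate.

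For the running time I would first dispose of the initialization: building the SPT $H_1$ and extracting the paths $S_1^t$ costs $O(m + n\log n) = O(n^2)$ and is absorbed by the final bound. I would then fix a phase $i \in \{2, \dots, p\}$ and split it into the preprocessing block (lines~\ref{ln:loop_T_start}--\ref{ln:loop_T_end}) and the Dijkstra-like main block (lines~\ref{ln:main_dijkstra_start}--\ref{ln:while_end}). In the preprocessing block, for each of the $n-1$ targets $t$ the graph $H_{i-1}^t$ has $O(in)$ edges and is built in $O(in)$ time, while the reverse SPT $T_{i-1}^t$ of line~\ref{ln:reweighted_dijkstra} is computed in $O(in + n\log n)$ time: the potentials $h_{i-1}^t$, obtained from those of the previous phase by a single Dijkstra run on $\bar H$, make every edge cost of $G_{i-1}^t$ (hence of its subgraph $H_{i-1}^t$) non-negative by Lemma~\ref{lemma:h_positive_edge_weights}, so plain Dijkstra replaces Bellman--Ford. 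Summed over all $t$, the preprocessing block costs $O(in^2 + n^2\log n)$ per phase.

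The main block I would read as a single run of Dijkstra on an auxiliary dense graph on $V(G)$ whose ``edge'' from an extracted vertex $q$ to a pending vertex $t$ carries weight $|\pi(q,t;T_{i-1}^t)|$, looked up in $O(1)$ from the precomputed tree. Lemma~\ref{lemma:positive} guarantees these weights are non-negative, which legitimizes extracting vertices in non-decreasing order of $|\pi_q|$ as in Dijkstra. Over the phase there are $n$ extractions and, since the relaxation loop scans every pending $t$ at each extraction, $O(n^2)$ constant-time relaxation checks; the heap operations contribute at most $O(n^2\log n)$ per phase, already within the target. The dominant cost is the per-extraction solution update: I would materialise the simple path $\pi_q$ (of length $O(n)$, as $G_{i-1}^q$ has no negative cycle) and build $S_i^q$ from $S_{i-1}^q$ in $O(in)$ time by Remark~\ref{remark:oplus} (line~\ref{ln:update_solution_q}), for $O(in^2)$ over the phase. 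Hence a phase costs $O(in^2 + n^2\log n)$, and $\sum_{i=2}^{p} O(in^2 + n^2\log n) = O(p^2 n^2 + p n^2\log n)$, which with the initialization yields the claimed bound.

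I expect the preprocessing block to be the crux. The naive route recomputes a reverse SPT in a graph that may contain negative edge costs, forcing Bellman--Ford at $\Theta(in^2)$ per target and $\Theta(p^2 n^3)$ overall; the entire improvement rests on carrying the re-weighting potentials incrementally across phases and certifying their non-negativity after each augmentation via Lemma~\ref{lemma:h_positive_edge_weights}. The secondary subtlety is recognizing that the main loop is dense-graph Dijkstra over precomputed pairwise shortcut weights, rather than over the edges of any single residual network, which is exactly what caps the relaxation cost at $O(n^2)$ per phase.
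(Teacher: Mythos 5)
Your proposal is correct and follows essentially the same route as the paper: correctness via Lemma~\ref{lm:algorithm_correctness}, size and optimality via Theorem~\ref{thm:size_spanner}, and the running time by charging $O(in^2 + n^2\log n)$ to phase $i$ (preprocessing via the incrementally maintained potentials of Lemma~\ref{lemma:h_positive_edge_weights} plus a dense Dijkstra-like sweep with $O(n^2)$ relaxations and $O(in)$-time solution updates per extraction). The only cosmetic difference is that the paper uses an array-based priority queue with $O(1)$ decrease-key rather than a heap, but either choice stays within the stated bound.
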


\section{Extensions and variants}
\label{sec:extensions}

In this section we show how to extend all our results to more general versions of the two problems in which the input graph $G$ is not necessarily $p$-edge-outconnected from $s$. More precisely, we denote by $\lambda(t)$ the maximum number of edge-disjoint paths from $s$ to $t$ in $G$. We want to find, for each vertex $t \in V(G)\setminus\{s\}$,  $\sigma(t):=\min\{\lambda(t),p\}$ edge-disjoint paths of minimum total cost from $s$ to $t$ in $G$. 

Moreover, we show how to use Algorithm~\ref{alg:1} to approximate the problem of computing a set of $p$ edge-disjoint paths where the cost of the path with maximum cost is minimized. We show that the approximation factor achieved by our algorithm is optimal.

Finally, for the sake of completeness, we also describe the graph transformation already discussed in~\cite{DBLP:journals/networks/SuurballeT84} if we are interested in finding paths that are vertex-disjoint rather than edge-disjoint.

\subparagraph{Extensions to general versions of our problems.}
W.l.o.g., we can assume that $p < n$ as $G$ contains at most $n-1$ edge-disjoint paths from $s$ to any vertex $t \in V(G)\setminus\{s\}$.

We transform the input graph $G$ into another graph $G'$ that is $p$-edge-outconnected from $s$.
To construct $G'$, we take a copy of $G$ and augment it by adding a complete directed graph $C$ on $p$ new ``dummy vertices'' $v_1, v_2, \dots, v_p$, all edges in $\{ s\} \times \{ v_1, v_2, \dots, v_p \}$, and all edges in $\{v_1, v_2, \dots, v_p\} \times (V(G) \setminus \{s\})$, where the cost of all the new edges is some large value $M > c(E(G))$. We observe that each edge of cost $M$ is incident to at least one dummy vertex. Furthermore, there are $p$ edge-disjoint paths from $s$ to any other vertex of the vertex of the graph, so $H'$ is $p$-edge-outconnected from $s$. As $p < n$, the graph $G'$ still contains $O(n)$ vertices.
We run Algorithm~\ref{alg:1} on $G'$ to compute all the sets $S_i^t$, for each $t\in V(G)\setminus s$ and $i\leq p$ (Lemma~\ref{lm:algorithm_correctness}) in $O(p^2n^2 +pn\log n)$ time. The solution to our problem for $t$ is given by $S_{\sigma(t)}^t$, where we can find the value of $\sigma(t)$ as the largest index $i$ for which $c(S_i^t)<M$. 

Concerning the problem of finding a subgraph $H$ of $G$ such that $S^t_{\sigma(t)} \subseteq E(H)$ for every $t \in V(G)\setminus\{s\}$, we first compute a single-source $p$-multipath preserver $H'$ of $G'$ in $O(p^2n^2 + pn \log n)$ time using Algorithm~\ref{alg:1}. The graph $H$ is obtained from $H'$ by deleting all the dummy vertices and, consequently, all the edges (each of cost $M$) that are incident to the dummy vertices. We observe that $H$, being a subgraph of $G$, does not contain edges of cost $M$.

\begin{theorem}
\label{thm:extension_connectivity}
For every $t \in V(G)\setminus\{s\}$, $S_{\sigma(t)}^t \subseteq E(H)$. Moreover, the size of $H$ is equal to $\sum_{t \in V(G)\setminus\{s\}}\sigma(t)$, which is optimal.
\end{theorem}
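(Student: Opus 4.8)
\textbf{Proof plan for Theorem~\ref{thm:extension_connectivity}.}

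The plan is to prove the theorem in two parts: first the containment $S_{\sigma(t)}^t \subseteq E(H)$, and then the size-optimality statement. For the containment, I would start by relating the solutions on $G'$ to the desired solutions on $G$. The key observation is that, because $M > c(E(G))$, any collection of $\sigma(t)$ edge-disjoint paths from $s$ to $t$ of minimum total cost in $G'$ must entirely avoid the dummy vertices whenever fewer than $p$ such paths are forced through them. Concretely, I would argue that for each $i \le \sigma(t) = \min\{\lambda(t), p\}$, there exist $i$ edge-disjoint paths from $s$ to $t$ in $G$ (using only original edges), so their total cost is $< M$; conversely any path using a dummy vertex incurs cost $\ge M$. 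Hence the minimum-cost solution $S_i^t$ computed on $G'$ uses only edges of $G$ for every $i \le \sigma(t)$, which justifies both the identification of $S_{\sigma(t)}^t$ with a genuine solution in $G$ and the criterion ``$\sigma(t)$ is the largest index $i$ with $c(S_i^t) < M$.''

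Next I would invoke the nesting property guaranteed by Lemma~\ref{lm:algorithm_correctness} together with property (iii) of the algorithm's correctness: the preserver $H'$ computed on $G'$ satisfies $S_i^t \subseteq E(H')$ for all $i \le p$ and all $t$. In particular $S_{\sigma(t)}^t \subseteq E(H')$. Since I have just argued that $S_{\sigma(t)}^t$ consists entirely of edges of $G$ (none of cost $M$, none incident to a dummy vertex), deleting the dummy vertices and their incident edges to form $H$ does not remove any edge of $S_{\sigma(t)}^t$. This yields $S_{\sigma(t)}^t \subseteq E(H)$, as required.

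For the size statement, I would reason about the in-degree of each non-source vertex. By Theorem~\ref{thm:size_spanner}, $H'$ is a single-source $p$-multipath preserver of $G'$ in which every non-source vertex has in-degree exactly $p$. When passing from $H'$ to $H$, each vertex $t \in V(G) \setminus \{s\}$ loses precisely the incoming edges that originate at a dummy vertex. The plan is to show that the number of surviving (non-dummy) incoming edges at $t$ equals $\sigma(t)$: since $S_{\sigma(t)}^t$ contributes $\sigma(t)$ incoming edges to $t$ that all lie in $G$, the in-degree of $t$ in $H$ is at least $\sigma(t)$; for the matching upper bound I would argue that among the $p$ incoming edges of $t$ in $H'$, exactly $p - \sigma(t)$ come from dummy vertices, which follows from the fact that the $i$-th edge added to $t$ during phase $i > \sigma(t)$ must be a dummy edge (as $\lambda(t)$ caps the number of internally edge-disjoint $G$-routes). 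Summing over all $t$ gives $|E(H)| = \sum_{t \neq s} \sigma(t)$.

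\textbf{Main obstacle.} I expect the delicate point to be the tight control of in-degrees in $H$, specifically proving that exactly $\sigma(t)$ (and not more) of $t$'s incoming edges in $H'$ avoid the dummy vertices. This requires arguing that once $i$ exceeds $\lambda(t)$, every additional augmenting path in the SSP-style construction is forced to route through a dummy vertex, so that the newly-added edge $e_t$ entering $t$ in phase $i > \sigma(t)$ is necessarily a cost-$M$ dummy edge. Establishing this cleanly depends on the menger-type characterization of $\lambda(t)$ combined with the cost separation $M > c(E(G))$; the optimality of the size then reduces to the observation that any feasible $H$ must provide at least $\sigma(t)$ edge-disjoint incoming routes, forcing in-degree $\ge \sigma(t)$ at each $t$.
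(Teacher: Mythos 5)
Your containment argument ($S_{\sigma(t)}^t \subseteq E(H)$ via the nesting $S_i^t\subseteq E(H_i')\subseteq E(H_p')$ and the observation that $S_{\sigma(t)}^t$ contains no cost-$M$ edge) and your lower bound on $|E(H)|$ coincide with the paper's proof. The genuine gap is in the upper bound, exactly at the point you yourself flag as the main obstacle: from ``once $i$ exceeds $\lambda(t)$ every augmenting path is forced to route through a dummy vertex'' you infer that ``the newly-added edge $e_t$ entering $t$ in phase $i>\sigma(t)$ is necessarily a cost-$M$ dummy edge.'' This inference does not hold. A path may enter a dummy vertex $v_j$ from $s$, leave it towards some vertex $u\neq t$, and then reach $t$ along (possibly zero-cost) edges of $G$; such a path does route through a dummy vertex, yet it enters $t$ with an original edge of $G$, so the edge added to $H'$ at $t$ in that phase would not be a dummy edge. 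Your proposed tools (the Menger-type characterization of $\lambda(t)$ plus the separation $M>c(E(G))$) only force a dummy vertex to appear \emph{somewhere} on the path, not adjacent to $t$, so they do not close the gap.

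The paper closes precisely this hole with a global exchange argument on the final solution $S_p^t=\{\pi_1,\dots,\pi_p\}$, ordered so that $c(\pi_1)\le\dots\le c(\pi_p)$. It first shows $c(\pi_i)\ge M$ for every $i>\sigma(t)$ (otherwise one would get $\sigma(t)+1$ edge-disjoint paths inside $G$), then that any path touching a dummy vertex costs at least $2M$ (every edge incident to a dummy vertex costs $M$), and finally that if some $\pi_i$ had cost strictly greater than $2M$ it could be shortcut to the two-edge path $s\to v_j\to t$ through its first dummy vertex; doing this simultaneously for all such paths preserves edge-disjointness and strictly decreases the total cost, contradicting the optimality of $S_p^t$. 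Hence $c(\pi_i)=2M$ for all $i>\sigma(t)$, which pins these $p-\sigma(t)$ paths down to ones entering $t$ directly from a dummy vertex; combined with $S_p^t\subseteq E(H')$ and the fact that $t$ has in-degree exactly $p$ in $H'$, this yields exactly $p-\sigma(t)$ cost-$M$ edges entering $t$ in $H'$ and therefore in-degree $\sigma(t)$ in $H$. To complete your phase-by-phase version you would need to supply an argument of this strength (or some other mechanism, e.g.\ exploiting the algorithm's tie-breaking on the number of edges outside $E(H_{i-1})$) showing that the augmenting path's \emph{last} edge is the dummy edge, not merely that a dummy vertex is visited.
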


\subparagraph{Computing edge-disjoint paths with minimum maximum cost.}

We now consider a variant of the shortest $p$ edge-disjoint paths problem in which we have a different objective function: 
we want to find, for a given source vertex $s$  and every $t \in V(G)\setminus \{s\}$, $p$ edge-disjoint paths from $s$ to $t$ such that the cost of the path with maximum cost is minimized. 
More formally, we want to find, for each $t\in V(G)\setminus \{s\}$, a set $\bar{S}^t_p$ of $p$ edge-disjoint paths from $s$ to $t$ that minimize $\max_{\pi\in \bar{S}^t_p} c(\pi)$. We call this problem the minimum bottleneck $p$ edge-disjoint paths problem.

We observe that, for each $t$, the paths induced by a solution $S^t_p$ for the shortest $p$ edge-disjoint path problem guarantees an approximation factor of $p$. Indeed, $c(S^t_p)\le \sum_{\pi\in \bar{S}^t_p} c(\pi)\le p \cdot \max_{\pi\in \bar{S}^t_p} c(\pi)$. In the next theorem we show that this approximation factor is optimal, unless $\mathrm{P}=\mathrm{NP}$.

\begin{theorem}
\label{thm:max}
There is no polynomial-time algorithm that approximates the minimum bottleneck $p$ edge-disjoint paths problem to within a factor smaller that $p$, unless $\mathrm{P}=\mathrm{NP}$.
\end{theorem}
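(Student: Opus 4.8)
The plan is to prove an inapproximability result via a reduction from a known NP-hard problem. The statement claims that approximating the minimum bottleneck $p$ edge-disjoint paths problem within any factor smaller than $p$ is NP-hard. The natural strategy is a \emph{gap-introducing reduction}: I would construct instances where the optimal bottleneck cost is either ``small'' or ``large'', with the ratio between the two cases being exactly $p$ (or approaching $p$), so that any algorithm beating the factor $p$ could distinguish the two cases and thereby decide the underlying NP-hard problem.

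First I would choose a suitable source NP-hard problem. Since the objective involves finding $p$ edge-disjoint paths and minimizing a maximum, a promising candidate is a disjoint-paths-style problem or a partition/packing problem. A clean approach is to reduce from the problem of deciding whether a graph contains $p$ edge-disjoint paths from $s$ to $t$ each of cost $0$ (equivalently, routing through a designated ``free'' subnetwork), versus instances where at least one of the $p$ paths is forced to use a costly edge. The key design goal is: in YES-instances, all $p$ edge-disjoint paths can be routed with bottleneck cost some value $B$ (possibly $0$ or a small positive value), while in NO-instances, \emph{every} collection of $p$ edge-disjoint paths must contain a path of cost at least $pB$ (or the bottleneck is forced up by a factor of $p$). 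The factor $p$ should emerge naturally from the combinatorial structure — for instance, by arranging that the total cost that must be ``absorbed'' is $pB$ but can be concentrated on a single path in the bad case while being spread evenly ($B$ each) in the good case.

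The key steps, in order, would be: (1) fix the source NP-hard decision problem and state it precisely; (2) describe the polynomial-time construction of the graph $G$, the source $s$, the edge costs, and the target vertex (or vertices), making explicit how the parameter $p$ enters; (3) prove the forward direction (YES-instance $\Rightarrow$ bottleneck $\le B$), by exhibiting the $p$ edge-disjoint paths explicitly; (4) prove the backward direction (NO-instance $\Rightarrow$ bottleneck $\ge pB$, or $> (p-\varepsilon)B$ for the strict factor), arguing that edge-disjointness forces at least one expensive path; (5) conclude that a $(p-\varepsilon)$-approximation algorithm would separate the two cases and hence solve the NP-hard problem in polynomial time. I would state the gap carefully so that the claimed threshold ``smaller than $p$'' is matched exactly, i.e.\ the ratio of NO-cost to YES-cost is $p$ in the limit.

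The main obstacle I expect is the backward direction together with engineering the gap to be \emph{exactly} $p$ rather than merely bounded away from $1$. Getting the inapproximability threshold to match the trivial $p$-approximation precisely requires that the reduction tightly couples edge-disjointness with cost concentration: I must ensure that in NO-instances the combinatorial obstruction genuinely forces a \emph{single} path to carry cost $pB$ (so the bottleneck is $pB$), while no cheaper redistribution among $p$ disjoint paths is possible. This typically hinges on a capacity/disjointness argument showing that the costly edges form a bottleneck cut of size constraining the routing. A secondary subtlety is handling the single-source-to-all-targets formulation versus the single-pair formulation of the underlying hard problem; I would address this by making the hardness instance a single target $t$ and noting that solving the all-targets version trivially solves the single-pair version. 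I would also double-check that the constructed $G$ is $p$-edge-outconnected from $s$ (so the problem instance is well-posed) in both YES- and NO-instances, adding cheap auxiliary edges if needed without disturbing the gap.
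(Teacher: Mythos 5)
Your proposal correctly identifies the right \emph{shape} of the argument --- a gap-introducing reduction in which YES-instances admit $p$ edge-disjoint paths of bottleneck cost $B$ while NO-instances force bottleneck cost $pB$ --- and this is indeed what the paper does (with $B=1$). However, the proposal stops at the level of a plan: it never commits to a source problem, never gives a construction, and never proves either direction. The entire mathematical content of the theorem lies precisely in the two things you defer: (a) choosing the right NP-hard problem, and (b) engineering a gadget in which edge-disjointness provably forces the cost to concentrate on a single path in the NO case. As written, there is no proof to check.

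Concretely, the paper reduces from the 2~directed~paths problem (2DP) of Fortune, Hopcroft, and Wyllie: given $G$ and terminals $s_1,t_1,s_2,t_2$, decide whether there are edge-disjoint paths $s_1 \to t_1$ and $s_2 \to t_2$. The key construction you are missing is a \emph{serial chain} of $p-1$ copies $G_1,\dots,G_{p-1}$ of $G$ with zero-cost internal edges, cost-$1$ edges $(s,s_1^1)$, $(t_2^{p-1},t)$, and $(t_2^i,s_1^{i+1})$ for each $i$, plus cost-$0$ edges $(s,s_2^i)$ and $(t_1^i,t)$. In a YES-instance the two disjoint paths inside each copy let you ``stagger'' the $p$ routes so that each crosses exactly one cost-$1$ edge (bottleneck $1$); in a NO-instance the only way to realize $p$ edge-disjoint $s$--$t$ paths routes $p-1$ of them for free and forces the last one to thread through all $p-1$ copies in series, crossing all $p$ cost-$1$ edges (bottleneck $p$). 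Your worry about whether the NO case ``genuinely forces a single path to carry cost $pB$'' is exactly the crux, and it is resolved only by this specific chaining structure together with the combinatorial fact that in a NO-instance of 2DP any $s_2^i\to t_1^i$ and $s_1^i \to t_2^i$ routing inside a copy blocks the cheap crossing pattern. Without exhibiting such a construction and verifying both directions, the claim is unproven; the generic ``spread vs.\ concentrate'' intuition does not by itself yield the factor $p$.
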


\subparagraph{Vertex-disjoint paths.}
As also shown by Suurballe and Tarjan~\cite{DBLP:journals/networks/SuurballeT84}, all our results can be extended to the case in which the $p$ paths of minimum total cost from $s$ to $t \in V(G)\setminus\{s\}$ must be pairwise vertex-disjoint via the following linear time reduction. We construct a graph $G'$ by replacing each vertex $v\in V(G)$ with a pair of vertices $v^-,v^+$ that are connected through an edge $(v^-,v^+)$ with cost $0$, and by adding to $E(G')$ an edge $(u^+,v^-)$  of cost $c(u,v)$ for each edge $(u,v)\in E(G)$. We observe that $G'$ still has $O(n)$ vertices. Although $G'$ may not be $p$-edge-outconnected from $s$ (the in-degree of each vertex $v^+$ is equal to 1), we can solve the problem in $O(p^2n^2+pn \log n)$ time using Algorithm~\ref{alg:1}, via the graph transformation that adds $p$ dummy vertices, as described in the previous paragraph.

\bibliographystyle{plainurl}
\bibliography{bibliography}

\begin{thebibliography}{10}

\bibitem{spanner_survey}
Abu~Reyan Ahmed, Greg Bodwin, Faryad~Darabi Sahneh, Keaton Hamm, Mohammad
  Javad~Latifi Jebelli, Stephen~G. Kobourov, and Richard Spence.
\newblock Graph spanners: {A} tutorial review.
\newblock {\em Comput. Sci. Rev.}, 37:100253, 2020.
\newblock \href {https://doi.org/10.1016/j.cosrev.2020.100253}
  {\path{doi:10.1016/j.cosrev.2020.100253}}.

\bibitem{network_flows}
Ravindra~K. Ahuja, Thomas~L. Magnanti, and James~B. Orlin.
\newblock {\em Network flows - theory, algorithms and applications}.
\newblock Prentice Hall, 1993.

\bibitem{AlthoferDDJS93}
Ingo Alth{\"{o}}fer, Gautam Das, David~P. Dobkin, Deborah Joseph, and
  Jos{\'{e}} Soares.
\newblock On sparse spanners of weighted graphs.
\newblock {\em Discret. Comput. Geom.}, 9:81--100, 1993.
\newblock \href {https://doi.org/10.1007/BF02189308}
  {\path{doi:10.1007/BF02189308}}.

\bibitem{BaswanaCHR20}
Surender Baswana, Keerti Choudhary, Moazzam Hussain, and Liam Roditty.
\newblock Approximate single-source fault tolerant shortest path.
\newblock {\em {ACM} Trans. Algorithms}, 16(4):44:1--44:22, 2020.
\newblock \href {https://doi.org/10.1145/3397532} {\path{doi:10.1145/3397532}}.

\bibitem{BaswanaCR18}
Surender Baswana, Keerti Choudhary, and Liam Roditty.
\newblock Fault-tolerant subgraph for single-source reachability: General and
  optimal.
\newblock {\em {SIAM} J. Comput.}, 47(1):80--95, 2018.
\newblock \href {https://doi.org/10.1137/16M1087643}
  {\path{doi:10.1137/16M1087643}}.

\bibitem{BiloG0P16}
Davide Bil{\`{o}}, Luciano Gual{\`{a}}, Stefano Leucci, and Guido Proietti.
\newblock Multiple-edge-fault-tolerant approximate shortest-path trees.
\newblock In Nicolas Ollinger and Heribert Vollmer, editors, {\em 33rd
  Symposium on Theoretical Aspects of Computer Science, {STACS} 2016, February
  17-20, 2016, Orl{\'{e}}ans, France}, volume~47 of {\em LIPIcs}, pages
  18:1--18:14. Schloss Dagstuhl - Leibniz-Zentrum f{\"{u}}r Informatik, 2016.
\newblock \href {https://doi.org/10.4230/LIPIcs.STACS.2016.18}
  {\path{doi:10.4230/LIPIcs.STACS.2016.18}}.

\bibitem{BiloGLP18}
Davide Bil{\`{o}}, Luciano Gual{\`{a}}, Stefano Leucci, and Guido Proietti.
\newblock Fault-tolerant approximate shortest-path trees.
\newblock {\em Algorithmica}, 80(12):3437--3460, 2018.
\newblock \href {https://doi.org/10.1007/s00453-017-0396-z}
  {\path{doi:10.1007/s00453-017-0396-z}}.

\bibitem{Bodwin17}
Greg Bodwin.
\newblock Linear size distance preservers.
\newblock In Philip~N. Klein, editor, {\em Proceedings of the Twenty-Eighth
  Annual {ACM-SIAM} Symposium on Discrete Algorithms, {SODA} 2017, Barcelona,
  Spain, Hotel Porta Fira, January 16-19}, pages 600--615. {SIAM}, 2017.
\newblock \href {https://doi.org/10.1137/1.9781611974782.39}
  {\path{doi:10.1137/1.9781611974782.39}}.

\bibitem{BodwinGPW17}
Greg Bodwin, Fabrizio Grandoni, Merav Parter, and Virginia~Vassilevska
  Williams.
\newblock Preserving distances in very faulty graphs.
\newblock In Ioannis Chatzigiannakis, Piotr Indyk, Fabian Kuhn, and Anca
  Muscholl, editors, {\em 44th International Colloquium on Automata, Languages,
  and Programming, {ICALP} 2017, July 10-14, 2017, Warsaw, Poland}, volume~80
  of {\em LIPIcs}, pages 73:1--73:14. Schloss Dagstuhl - Leibniz-Zentrum
  f{\"{u}}r Informatik, 2017.
\newblock \href {https://doi.org/10.4230/LIPIcs.ICALP.2017.73}
  {\path{doi:10.4230/LIPIcs.ICALP.2017.73}}.

\bibitem{BodwinW16}
Greg Bodwin and Virginia~Vassilevska Williams.
\newblock Better distance preservers and additive spanners.
\newblock In Robert Krauthgamer, editor, {\em Proceedings of the Twenty-Seventh
  Annual {ACM-SIAM} Symposium on Discrete Algorithms, {SODA} 2016, Arlington,
  VA, USA, January 10-12, 2016}, pages 855--872. {SIAM}, 2016.
\newblock \href {https://doi.org/10.1137/1.9781611974331.ch61}
  {\path{doi:10.1137/1.9781611974331.ch61}}.

\bibitem{ChechikGP12}
Shiri Chechik, Quentin Godfroy, and David Peleg.
\newblock Multipath spanners via fault-tolerant spanners.
\newblock In Guy Even and Dror Rawitz, editors, {\em Design and Analysis of
  Algorithms - First Mediterranean Conference on Algorithms, MedAlg 2012,
  Kibbutz Ein Gedi, Israel, December 3-5, 2012. Proceedings}, volume 7659 of
  {\em Lecture Notes in Computer Science}, pages 108--119. Springer, 2012.
\newblock \href {https://doi.org/10.1007/978-3-642-34862-4\_8}
  {\path{doi:10.1007/978-3-642-34862-4\_8}}.

\bibitem{cormen}
Thomas~H. Cormen, Charles~E. Leiserson, Ronald~L. Rivest, and Clifford Stein.
\newblock {\em Introduction to Algorithms, 3rd Edition}.
\newblock {MIT} Press, 2009.
\newblock URL: \url{http://mitpress.mit.edu/books/introduction-algorithms}.

\bibitem{erd1965extremal}
Paul Erd\H{o}s.
\newblock Extremal problems in graph theory.
\newblock In {\em Theory of Graphs and its Applications}. Academic Press, New
  York, 1965.

\bibitem{erickson_mincost_flow}
Jeff Erickson.
\newblock {\em Algorithms}.
\newblock 2019.
\newblock Chapter G: \emph{Minimum-Cost Flows} in the \emph{Extended Dance
  Remix} available online.
\newblock URL: \url{http://jeffe.cs.illinois.edu/teaching/algorithms/}.

\bibitem{DBLP:journals/tcs/FortuneHW80}
Steven Fortune, John~E. Hopcroft, and James Wyllie.
\newblock The directed subgraph homeomorphism problem.
\newblock {\em Theor. Comput. Sci.}, 10:111--121, 1980.
\newblock \href {https://doi.org/10.1016/0304-3975(80)90009-2}
  {\path{doi:10.1016/0304-3975(80)90009-2}}.

\bibitem{Gabow95}
Harold~N. Gabow.
\newblock A matroid approach to finding edge connectivity and packing
  arborescences.
\newblock {\em J. Comput. Syst. Sci.}, 50(2):259--273, 1995.
\newblock \href {https://doi.org/10.1006/jcss.1995.1022}
  {\path{doi:10.1006/jcss.1995.1022}}.

\bibitem{GavoilleGV10}
Cyril Gavoille, Quentin Godfroy, and Laurent Viennot.
\newblock Multipath spanners.
\newblock In Boaz Patt{-}Shamir and T{\'{\i}}naz Ekim, editors, {\em Structural
  Information and Communication Complexity, 17th International Colloquium,
  {SIROCCO} 2010, Sirince, Turkey, June 7-11, 2010. Proceedings}, volume 6058
  of {\em Lecture Notes in Computer Science}, pages 211--223. Springer, 2010.
\newblock \href {https://doi.org/10.1007/978-3-642-13284-1\_17}
  {\path{doi:10.1007/978-3-642-13284-1\_17}}.

\bibitem{GavoilleGV11}
Cyril Gavoille, Quentin Godfroy, and Laurent Viennot.
\newblock Node-disjoint multipath spanners and their relationship with
  fault-tolerant spanners.
\newblock In Antonio~Fern{\'{a}}ndez Anta, Giuseppe Lipari, and Matthieu Roy,
  editors, {\em Principles of Distributed Systems - 15th International
  Conference, {OPODIS} 2011, Toulouse, France, December 13-16, 2011.
  Proceedings}, volume 7109 of {\em Lecture Notes in Computer Science}, pages
  143--158. Springer, 2011.
\newblock \href {https://doi.org/10.1007/978-3-642-25873-2\_11}
  {\path{doi:10.1007/978-3-642-25873-2\_11}}.

\bibitem{GuptaK17}
Manoj Gupta and Shahbaz Khan.
\newblock Multiple source dual fault tolerant {BFS} trees.
\newblock In Ioannis Chatzigiannakis, Piotr Indyk, Fabian Kuhn, and Anca
  Muscholl, editors, {\em 44th International Colloquium on Automata, Languages,
  and Programming, {ICALP} 2017, July 10-14, 2017, Warsaw, Poland}, volume~80
  of {\em LIPIcs}, pages 127:1--127:15. Schloss Dagstuhl - Leibniz-Zentrum
  f{\"{u}}r Informatik, 2017.
\newblock \href {https://doi.org/10.4230/LIPIcs.ICALP.2017.127}
  {\path{doi:10.4230/LIPIcs.ICALP.2017.127}}.

\bibitem{Parter15}
Merav Parter.
\newblock Dual failure resilient {BFS} structure.
\newblock In Chryssis Georgiou and Paul~G. Spirakis, editors, {\em Proceedings
  of the 2015 {ACM} Symposium on Principles of Distributed Computing, {PODC}
  2015, Donostia-San Sebasti{\'{a}}n, Spain, July 21 - 23, 2015}, pages
  481--490. {ACM}, 2015.
\newblock \href {https://doi.org/10.1145/2767386.2767408}
  {\path{doi:10.1145/2767386.2767408}}.

\bibitem{ParterP16}
Merav Parter and David Peleg.
\newblock Sparse fault-tolerant {BFS} structures.
\newblock {\em {ACM} Trans. Algorithms}, 13(1):11:1--11:24, 2016.
\newblock \href {https://doi.org/10.1145/2976741} {\path{doi:10.1145/2976741}}.

\bibitem{ParterP18}
Merav Parter and David Peleg.
\newblock Fault-tolerant approximate {BFS} structures.
\newblock {\em {ACM} Trans. Algorithms}, 14(1):10:1--10:15, 2018.
\newblock \href {https://doi.org/10.1145/3022730} {\path{doi:10.1145/3022730}}.

\bibitem{PelegS89}
David Peleg and Alejandro~A. Sch{\"{a}}ffer.
\newblock Graph spanners.
\newblock {\em J. Graph Theory}, 13(1):99--116, 1989.
\newblock \href {https://doi.org/10.1002/jgt.3190130114}
  {\path{doi:10.1002/jgt.3190130114}}.

\bibitem{DBLP:journals/networks/SuurballeT84}
J.~W. Suurballe and Robert~Endre Tarjan.
\newblock A quick method for finding shortest pairs of disjoint paths.
\newblock {\em Networks}, 14(2):325--336, 1984.
\newblock \href {https://doi.org/10.1002/net.3230140209}
  {\path{doi:10.1002/net.3230140209}}.

\end{thebibliography}

\clearpage
\appendix

\section{Proofs omitted from Section~\ref{sec:algorithm}}

\begin{proof}[Proof of Lemma~\ref{lemma:suitable_subgraph}]
 Recall that, $H_{i-1}^t$ is defined as the graph obtained from $H_{i-1}$ where the edges in $S_{i-1}^t$ are reversed and contains all the edges $E(G) \setminus S_{i-1}^t$ entering in $t$.
 
 By definition, $\pi[q(\pi):t]$ consists in a sequence of edges (possibly reversed) in $H_{i-1}$ and one edge in  $E(G) \setminus E(H_{i-1})$ entering in $t$. Since by inductive hypothesis $S_{i-1}^t \subseteq E(H_{i-1})$ and  $S_{i-1}^t$ is the set of reversed edges in $G_{i-1}^t$, then $\pi[q(\pi):t]$ exists in $H_{i-1}^t$.
\end{proof}

\begin{proof}[Proof of Lemma~\ref{lm:algorithm_correctness}]
Consider a vertex $t \in V(G) \setminus \{s\}$. By Lemma~\ref{lemma:inequality2} and Lemma~\ref{lemma:exists}, we have that
at the end of phase $i$, $\pi_t \in \Pi(s,t;G_{i-1}^t)$.
Then, by the inductive hypothesis and by Remark~\ref{remark:oplus}, the set $S_i^t$ contains $i$-edge disjoint paths from $s$ to $t$ of minimum total cost in $G$, as desired.
Moreover, since phase $i$ constructs $H_i$ by augmenting $H_{i-1}$ with the last edge of every $\pi_t$, as shown in Section~\ref{sec:size}, we have that $H_i$ is a single-source $i$-multipath preserver of $G$.

    It remains to prove that, at the end of phase $i$ of Algorithm~\ref{alg:1}, all edges in $S_i^t$ are in $H_i$. 

    For any vertex $t$, let $\pi_t$ be the path computed in line~\ref{ln:update_path_q} of Algorithm~\ref{alg:1} during phase $i$.
    Notice that, by construction (see Remark~\ref{remark:oplus}), each edge in $S^t_i$ belongs to at least one of $S_{i-1}^t$ and $E(\pi_t) \cap E(G)$.
    Since we already know that $S^t_{i-1} \subseteq E(H_{i-1}) \subseteq E(H_i)$, we only need to show that $E(\pi_t) \cap E(G) \subseteq E(H_i)$.

    We consider all paths $\pi_t$ computed by Algorithm~\ref{alg:1} during phase $i$, and we prove the above property by induction on the number of edges $\ell$ of $\pi_t$. 

    The base case $\ell=0$ is trivially true since any such path contains no edges.
    Consider now a path $\pi_t$ with $\ell \ge 1$ edges.
    Since $t \neq s$, $\pi_t$ has been computed in line~\ref{ln:update_path_q} as the concatenation of a path $\pi_{\rho(t)}$ with $\pi' = \pi(\rho(t), t; T^t_{i-1})$.
    The path $\pi'$ is entirely contained in $T_{i-1}^t \subseteq H^t_{i-1}$. By construction of $H^t_{i-1}$, the only edges of $G$ that are in $E(H^t_{i-1}) \setminus E(H_i)$ enter $t$.
    Let $e_t = (u,t)$ be the last edge $\pi'$ (this edge always exists since $\rho(t) \neq t$). 
    By the above observation we have that $E(\pi'[s:u]) \cap E(G) \subseteq E(H_{i-1}) \subseteq E(H_i)$, while $e_t$ is added to $H_i$ by line~\ref{ln:add_edge_to_H_i}.
    Finally, $\pi_{\rho(t)}$ satisfies $E(\pi_{\rho(t)}) \cap E(G) \subseteq E(H_i)$  by inductive hypothesis, since $\pi_{\rho(t)}$ has less edges than $\pi_t$.
\end{proof}

\subsection{Proof of Lemma~\ref{lemma:h_positive_edge_weights}}

Before proving Lemma~\ref{lemma:h_positive_edge_weights}, we need a technical lemma showing that the distances from $s$ in $\bar{H}$ and $G_{i-2}^t$ coincide.
 
\begin{lemma}
\label{lemma:spt}
Let $T $ be a shortest path tree rooted at $s$ of $\bar{H}$ then, $T$ is also a shortest path tree rooted at $s$ of $G_{i-2}^t$.
\end{lemma}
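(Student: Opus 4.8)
The plan is to exploit the containment $\bar{H} \subseteq G_{i-2}^t$ and reduce the statement to a distance equality. First I would observe that, being a subgraph of the residual network of a min-cost flow, $\bar{H}$ inherits from $G_{i-2}^t$ the absence of negative-cost cycles, so a shortest path tree $T$ of $\bar{H}$ rooted at $s$ is well defined and realizes the distances $d_{\bar{H}}(s,\cdot)$. Since $T \subseteq \bar{H} \subseteq G_{i-2}^t$, to prove that $T$ is also a shortest path tree of $G_{i-2}^t$ it suffices to show $d_{\bar{H}}(s,v) = d_{G_{i-2}^t}(s,v)$ for every $v$: the inequality $d_{G_{i-2}^t}(s,v) \le d_{\bar{H}}(s,v)$ is immediate from the containment, and the reverse inequality forces every vertex reachable from $s$ in $G_{i-2}^t$ to be reached by $T$ at the correct distance. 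Equivalently, I would show that the labels $d_{\bar{H}}(s,\cdot)$ are feasible node potentials for the whole of $G_{i-2}^t$, i.e.\ $d_{\bar{H}}(s,y) \le d_{\bar{H}}(s,x) + c(x,y)$ for every $(x,y) \in E(G_{i-2}^t)$. For the edges already in $\bar{H}$ this holds by definition of shortest path distance, so the only edges left to check are those in $E(G_{i-2}^t) \setminus E(\bar{H})$. A direct comparison of the two edge sets (recalling that $S_{i-2}^t \subseteq E(H_{i-1})$, so reversing $S_{i-2}^t$ affects both graphs identically) shows that these are exactly the \emph{forward} edges $E(G) \setminus E(H_{i-1})$, i.e.\ the edges of $G$ missing from the preserver.

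It is instructive to settle the target $t$ itself first, since it already displays the mechanism. The shortest $s$-$t$ path $\pi$ in $G_{i-2}^t$ is precisely the augmenting path that the successive shortest paths algorithm would use to pass from $S_{i-2}^t$ to $S_{i-1}^t$ (Remark~\ref{remark:oplus}); its forward edges are newly inserted into the flow and hence belong to $S_{i-1}^t \setminus S_{i-2}^t \subseteq E(H_{i-1})$, while its reversed edges are reversals of edges of $S_{i-2}^t$. Both kinds of edges therefore lie in $\bar{H}$, so $\pi \subseteq \bar{H}$ and $d_{\bar{H}}(s,t) = d_{G_{i-2}^t}(s,t)$. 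For a general vertex $v$ the task is the potential inequality above for a forward edge $(x,y) \notin E(H_{i-1})$. I would argue by contradiction, picking a shortest $s$-$v$ path $P$ in $G_{i-2}^t$ that uses the fewest forward edges outside $H_{i-1}$ and showing that, if it uses any, one can reroute it into $\bar{H}$ without increasing its cost, contradicting minimality. The rerouting is where the machinery of Section~\ref{sec:size} re-enters: by Lemma~\ref{lemma:brown} (applied at level $i-1$) the set $\Delta(t,v)$ induces $i-2$ edge-disjoint min-cost $t$-$v$ paths in $G_{i-2}^t$ whose real edges lie in $S_{i-2}^v \subseteq E(H_{i-1})$, and since $G_{i-2}^t$ has no negative cycle these paths can be used, exactly as in the proofs of Lemma~\ref{lemma:feasible} and Lemma~\ref{lemma:feasible2}, to substitute a forbidden forward edge by an equal-cost detour made of preserver edges and reversals of $S_{i-2}^t$.

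The main obstacle I anticipate is precisely this general case. The difficulty is an index/residual mismatch: $\bar{H}$ is built by reversing the flow $S_{i-2}^t$ directed to $t$, whereas the preserver's guarantee for $v$ (a cheap $s$-$v$ path of $G_{i-2}^v$ with real edges in $H_{i-1}$, Lemma~\ref{lemma:shortest_path_in_H_i} at level $i-1$) naturally lives in the $v$-residual $G_{i-2}^v$ rather than in $G_{i-2}^t$. Bridging the residuals $G_{i-2}^t$ and $G_{i-2}^v$ is exactly what $\Delta(t,v)$ is designed for, but combining it with the no-negative-cycle property to produce an equal-cost reroute that stays inside $\bar{H}$ requires the careful cost bookkeeping carried out in Lemma~\ref{lemma:feasible} and Lemma~\ref{lemma:feasible2}. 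I expect the bulk of the proof effort to concentrate there, while the reduction of the first paragraph and the special case $v=t$ are routine.
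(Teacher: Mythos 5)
Your plan lands on the same argument the paper uses: the crux in both cases is to import, for each vertex $v$, the preserver-guaranteed shortest path of $G_{i-2}^v$ (whose real edges lie in $E(H_{i-1})$) into the residual $G_{i-2}^t$ via the $\Delta(t,v)$ machinery of Lemma~\ref{lemma:feasible2}, observing that the imported path lives entirely in $\bar{H}$. Your reduction to distance equality / feasible potentials and your minimal-counterexample framing are cosmetically different from the paper's direct comparison of $\pi(s,v;T)$ with a shortest path $\pi\in\Pi(s,v;G_{i-2}^t)$, but they collapse to the same work. One step of your sketch, however, is mis-described in a way that matters: there is no local ``equal-cost detour'' around a single forbidden forward edge $(x,y)\notin E(H_{i-1})$ built from $\Delta(t,v)$ --- those edges only bridge the two residuals and say nothing about replacing an arbitrary non-preserver edge. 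What actually happens is a \emph{global} replacement of the entire prefix of the offending path by the transferred preserver path, and this requires \emph{two} transfers, not one: (a) the forward transfer of $\pi_v\in\Pi(s,v;G_{i-2}^v)$ into $G_{i-2}^t$ (Lemma~\ref{lemma:feasible2}), which yields a path $\pi'\subseteq\bar{H}$ with $\pi'\preceq\pi_v$; and (b) a reverse transfer showing that the true shortest path $\pi$ of $G_{i-2}^t$ also exists in $G_{i-2}^v$ (a Lemma~\ref{lemma:feasible}-style argument, which the paper repeats almost verbatim in the second half of its proof), so that $\pi_v\preceq\pi$ and hence no cost is lost in the chain $\pi(s,v;T)\preceq\pi'\preceq\pi_v\preceq\pi$. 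Your write-up clearly contains (a) and correctly predicts that the bookkeeping of Lemmas~\ref{lemma:feasible} and~\ref{lemma:feasible2} is where the effort concentrates, but step (b) --- without which you only bound $d_{\bar{H}}(s,v)$ by $|\pi_v|$ rather than by $d_{G_{i-2}^t}(s,v)$ --- is the piece you would still need to make explicit to close the argument.
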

\begin{proof}

By contradiction, if $T$ is not a shortest path tree of $G_{i-2}^t$, it is because there exists some node $v$ in $G_{i-2}^t$ for which every path $\pi \in \Pi(s,v;G_{i-2}^t)$ contains some edge from $E(G) \setminus E(\bar{H})$. 
Fix a $\pi \in \Pi(s,v;G_{i-2}^t)$ and assume w.l.o.g.\ that $\pi$ contains only one edge from $E(G) \setminus E(\bar{H})$ and that this edge enters in $v$.

We now show that $\pi(s,v;T) \preceq \pi $. 
Consider $\pi_v \in \Pi(s,v;G_{i-2}^v)$ computed by Algorithm \ref{alg:1} during phase $i-1$, and observe that for each $(u,v) \in \pi_v$ either $(u,v) \in E(H_{i-1})$ or $(v,u) \in E(H_{i-1})$. By Lemma \ref{lemma:feasible2}, either $\pi_v$ exists in $G_{i-2}^t$ or there exists a path $\pi_v'$ in $G_{i-2}^t$, obtained from $\pi_v$ by substituting a subpath in $\pi_v$ with a path containing only edges from $\Delta(t,v)$ w.r.t.\ $G_{i-2}^t$ and $G_{i-2}^v$ and such that $\pi_v' \prec \pi_v$ . Let $\pi' $ be the existing path in $G_{i-2}^t$ between $\pi_v'$ and $\pi_v$. By construction, $\pi'$ is a path that consists only in edges from $\bar{H}$, thus $\pi(s,v;T) \preceq \pi'$. 

To conclude the proof, we need to show that $\pi$ exists in $G_{i-2}^v$. Similarly to the proof of Lemma \ref{lemma:feasible}, if $\pi$ does not exists in $G_{i-2}^v$, it traverses at least one edge in $\Delta(t,v)$ w.r.t.\ $G_{i-2}^t$ and $G_{i-2}^v$. Let $(x,y)$ be the last edge traversed by $\pi$ that belongs in $\Delta(t,v)$.
Let $\delta$ be a path from $t$ to $y$ that traverses $(x,y)$ in the subgraph of $G_{i-2}^t$ induced by the edges in $\Delta(t,v)$.

Since, by definition of $\pi$, the edge $e$ of $\pi$ entering in $v$ is not in $\bar{H}$, we have $e \not\in \Delta(t,v)$.
Then, the subpath $\pi[y:v]$ of $\pi$ is not empty and, by our choice of $(x,y)$ does not traverse any edge in $\Delta(t,v)$.

By the suboptimality property of shortest paths, $\pi[y:v] \preceq \delta[y:v]$ and hence $c(\pi[y:v]) \leq c(\delta[y:v])$. If $c(\pi[y:v]) < c(\delta[y:v])$, we can replace $\delta[y:v]$ with $\pi[y:v]$ in $\delta$ to obtain a path $\delta'$ from $t$ to $v$ in $G_{i-2}^t$ with $c(\delta') < c(\delta)$.
This contradicts Lemma~\ref{lemma:brown} since it implies the existence of $i-2$ edge-disjoint paths from $t$ to $v$ in $G_{i-2}^t$ with a total cost smaller than $c(\Delta(t,v))$.

If $c(\pi[y:v]) = c(\delta[y:v])$, we can replace $\pi[y:v]$ with $\delta[y:v]$ in $\pi$ to obtain a path $\pi''$ from $s$ to $v$ in $G_{i-2}^t$ satisfying $c(\pi'') = c(\pi)$. 
Since all edges (or their reverse) of $E(\delta[y:v]) \subseteq \Delta(t,v)$ are in $H_{i-2}$, $\pi[y:v]$ contains more edges in $E(G) \setminus E(H_{i-2})$ than $\delta[y:v]$, thus $\pi'' \prec \pi$.
This is a contradiction since, by the suboptimality property of shortest paths and by our choice of $\pi \in \Pi(s,v; G_{i-2}^t)$, $\pi$ must be a shortest path from $s$ to $v$ in $G_{i-2}^t$ w.r.t.\ $\preceq$.

Then knowing that $\pi$ exists also in $G_{i-2}^v$, it holds that $\Pi(s,v;T) \preceq \pi'\preceq \pi$.
\end{proof}

\begin{proof}[Proof of Lemma~\ref{lemma:h_positive_edge_weights}]
    Let $c'(u,v)$ denote the cost of edge $(u,v)$ in $G_{i-2}^t$, when the graph is re-weighted according to $h_{i-2}^t$. Notice that, by hypothesis, $c'(u,v)$ is always non-negative. Recall that $h(v)$ is the distance from $s$ to $v$ in $\bar{H}$ w.r.t.\ $c'$ and that, by Lemma~\ref{lemma:spt}, $h(v)$ is also the distance from $s$ to $v$ in $G_{i-2}^t$ w.r.t.\ the cost function $c'$.

Thus we have that, for each edge $(u,v) \in E(G_{i-2}^t)$,  $h(u) + c'(u,v) \ge h(v)$ implying that
$c(u,v) + h_{i-1}^t(u) - h_{i-1}^t(v) = c'(u,v) + h(u) - h(v) \ge 0$.
In particular, if $(u,v) \in E(G_{i-2}^t)$ belongs to a shortest path (w.r.t.\ $c'$) from $s$ to $v$ in $G_{i-2}^t$ then, $h(u) + c'(u,v) = h(v)$ and
$c(u,v) + h_{i-1}^t(u) - h_{i-1}^t(v) = c'(u,v) + h(u) - h(v) = 0$.

By definition, $G_{i-1}^t$ is obtained from $G_{i-2}^t$ by reversing $\pi \in \Pi(s,t;G_{i-2}^t)$.

For each $(u,v) \in E(G_{i-1}^t) \cap E(G_{i-2}^t), c(u,v) + h_{i-1}^t(u) - h_{i-1}^t(v) \geq 0$ and for each $(u,v) \in E(G_{i-1}^t) \setminus E(G_{i-2}^t)$ we have that $(v,u) \in \pi$ then, $ c(u,v)  + h_{i-1}^t(u) - h_{i-1}^t(v) = -(c(v,u) - h_{i-1}^t(u) + h_{i-1}^t(v)) = 0$.
\end{proof}

\subsection{Running time of Algorithm~\ref{alg:1}.}
\label{apx:running_time}

\begin{proof}[Proof of Theorem~\ref{thm:main_algorithm}]
We can ignore lines~\ref{ln:h1} and \ref{ln:path_h1} since they require time $O(n^2)$. We therefore focus on an iteration $i\ge 2$ of the outer loop (i.e., on phase $i$).

The discussion in Section~\ref{sec:time_complexity_algorithm} shows that the loop at lines~\ref{ln:loop_T_start}--\ref{ln:loop_T_end} requires time $O(in^2 + n^2 \log n)=O(pn^2+n^2 \log n)$.
Observe that line~\ref{ln:update_path_q} can be implemented in time proportional to the number of edges of $\pi_q$, which is at most $n-1$, and that line~\ref{ln:update_solution_q} requires time at most $O(|S_{i-1}^q| + n) = O(pn)$.
We implement the priority queue $Q$ using a data structure that supports decrease-key operations in constant-time (e.g., an array). 
Since we perform $O(n)$ extract-min operations, and $O(n^2)$ decrease-key operations, we have that the loop at lines~\ref{ln:main_dijkstra_start}--\ref{ln:while_end} requires time $O(in^2)=O(pn^2)$. 

Thus, the overall time complexity of the algorithm is $O(p^2n^2+pn^2 \log n)$.
\end{proof}

\section{Proofs omitted from Section~\ref{sec:extensions}}

\begin{proof}[Proof of Theorem~\ref{thm:extension_connectivity}]
By the algorithmic construction of the single-source $p$-multipath preserver, we have that $H'=H_p'$ contains $H_i'$, for every $i \leq p$. Since the edges of  $S_{\sigma(t)}^t$ are also edges of $H_{\sigma(t)}'$ (Lemma~\ref{lm:algorithm_correctness}), it follows that $S_{\sigma(t)}^t \subseteq E(H_p')$, and thus $S_{\sigma(t)}^t \subseteq E(H)$, as $S_{\sigma(t)}^t$ has no edge of cost $M$.

The lower bound of $\sum_{t \in V(G)\setminus\{s\}}\sigma(t)$ on the size of any feasible solution to the problem comes from the fact that the in-degree of each vertex $t$ must be at least $\sigma(t)$. 

We now prove that the size of $H$ matches the lower bound by showing that the in-degree of each vertex $t \in V(G)\setminus\{s\}$ equals $\sigma(t)$. Using the fact that the in-degree of $t$ in $H'$ is exactly equal to $p$ (Theorem~\ref{thm:main_algorithm}), it is enough to show that there are $p-\sigma(t)$ edges of cost $M$ that are entering $t$ in $H'$. 

Consider the solution $S^t_p$ that contains $p$ edge-disjoint paths $\pi_1,\dots,\pi_t$ from $s$ to $t$ in $G'$ of minimum total cost. W.l.o.g., we assume that $c(\pi_1)\leq \dots \leq c(\pi_p)$. We claim that for each $i$ with $\sigma(t) < i \leq p$, $\pi_i$ enters $t$ with an edge of cost $M$. To show this it is enough to observe two things. From the one hand,  $c(\pi_i)$ must have a cost of at least $M$ as otherwise we would have $\sigma(t)+1$ edge-disjoint paths from $s$ to $t$ in $G$ of total cost of at most $c(E(G))<M$, thus contradicting the assumption that there are at most $\sigma(t)=\lambda(t)$ edge-disjoint paths from $s$ to $t$ in $G$. On the other hand, every path from $s$ to $t$ in $G'$ of cost of at least $M$ has a cost that is actually lower bounded by $2M$. This is because any such path must pass through a dummy vertex which has only edges of cost $M$ incident to it. As a consequence, $c(\pi_i) \geq 2M$ for every $\sigma(t)< i \leq p$.

To complete the proof, it is enough to notice that each path of cost equal to $2M$ from $s$ to $t$ passes through a single dummy vertex and enters in $t$ with an edge of cost $M$. As there are $p$ dummy vertices, there are also $p$ edge-disjoint paths from $s$ to $t$ of cost $2M$ each. 
This implies that each path $\pi_i$ from $s$ to $t$ of cost strictly larger than $2M$ can be replaced by a path of cost exactly equal to $2M$ using shortcuts (i.e., the direct edge from the first dummy vertex traversed in $\pi$ to $t$). If we do this simultaneously for all the paths $\pi_1,\dots,\pi_p$ of total cost strictly larger than $2M$, we obtain a new set of paths $\pi_1',\dots,\pi_p'$ that are still pairwise edge-disjoint and such that $\sum_{i=1}^p c(\pi_i') < \sum_{i=1}^p c(\pi_i)$. Therefore, by the optimality of $S_p^t$, $c(\pi_i)=2M$ for every $\sigma(t) < i \leq p$. As a consequence, each $\pi_i$, with $\sigma(t) < i \leq p$, enters in $t$ with an edge of cost $M$. Therefore, $p-\sigma(t)$ edges out the $p$ edges entering $t$ in $H'$ are of cost $M$ each. Hence, the degree of $t$ in $H$ is equal to $\sigma(t)$.
\end{proof}

\begin{proof}[Proof of Theorem~\ref{thm:max}]
We prove the statement for a given pair of nodes $s$ and $t$. We reduce from the 2 directed paths problem (2DP): Given a directed graph $G=(V(G),E(G))$ and four vertices $s_1,t_1,s_2,t_2 \in V(G)$, decide if there exist two edge disjoint paths, one from $s_1$ to $t_1$ and one from $s_2$ to $t_2$. The 2DP problem is NP-Complete~\cite{DBLP:journals/tcs/FortuneHW80}.

Starting from the input graph $G$ of 2DP, we define a graph $G'$ which consists in $p-1$ copies  $G_1,\ldots G_{p-1}$ of $G$ and two new vertices $s$ and $t$ and we set the cost of each edge in every copy to $0$. We denote by $s_j^i$ and $t_j^i$, the nodes $s_j$ and $t_j$ in the $i$-th copy of $G$, respectively, for each $1 \leq i\leq p-1$ and $j=1,2$. 
Node $s$ has $p-1$ edges of cost $0$ toward nodes $s^i_2$, for each $1 \leq i\leq p-1$, and one edge of cost $1$ toward node $s^1_1$. Node $t$ has $p-1$ edges of cost $0$ from nodes $t^i_1$, for each $1 \leq i\leq p-1$, and one edge of cost $1$ from $t^{p-1}_2$. Moreover, there is an edge $(t^i_2,s^{i+1}_1)$ of cost $1$, for each $1 \leq i\leq p-2$, see Figure~\ref{fig:max} for an illustration. 

We first show that, if in $G$ there are two edge-disjoint paths, one from $s_1$ to $t_1$ and one from $s_2$ to $t_2$, then in $G'$ there are $p$ edge-disjoint paths from $s$ to $t$ each of them with cost $1$. For each $1 \leq i\leq p-1$, let us denote by $\pi^i_1$ and $\pi^i_2$ the two disjoint paths in $G_i$ from $s_1^i$ to $t_1^i$ and from $s_2^i$ to $t_2^i$, respectively.
The first of the $p$ edge-disjoint paths from $s$ to $t$ in $G'$ starts from $s$, goes to $s^1_1$, follows path $\pi_1^1$ and then goes from $t^1_1$ to $t$. The total cost of this path is $1$.
A second path starts from $s$ goes to $s^{p-1}_2$, follows path $\pi^{p-1}_2$ from $s^{p-1}_2$ to $t^{p-1}_2$ in $G_{p-1}$ and then goes from $t^{p-1}_2$ to $t$. The total cost this path is $1$. The remaining $p-2$ are constructed in this way: Each path $i$, with $1\leq i\leq p-2$, starts from $s$ goes to $s^i_2$, follows path $\pi^{i}_2$ from $s^i_2$ to $t^i_2$ in $G_i$ and then crosses edge $(t^i_2,s^{i+1}_1)$. In $G_{i+1}$, it follows path $\pi_1^{i+1}$ from $s^{i+1}_1$ to $t^{i+1}_1$ and finally crosses edge $(t^{i+1}_1,t)$. The cost of each of these paths is $1$. By construction these $p$ paths are edge-disjoint.

Now we show that, if in $G$ there are not two edge disjoint paths, one from $s_1$ to $t_1$ and one from $s_2$ to $t_2$, then in $G'$ any $p$ edge disjoint paths from $s$ to $t$ contain a path of cost $p$. We can assume w.l.o.g. that there are 2 edge-disjoint paths in $G$, one from $s_1$ to $t_2$ and one from $s_2$ to $t_1$.
In $G'$ there is only one possible set of $p$ edge-disjoint paths, which is made of $p-1$ paths of cost $0$ and one path of cost $p$. The first $p-1$ paths are composed as follows: for $1 \leq i \leq p-1$, each of these paths starts from $s$ and goes to node $s^i_2$ in $G_i$ through edge $(s,s^i_2)$, it follows a path $\pi^i_1$ from $s^i_2$ to $t^{i}_1$ (in $G_i$) disjoint from a path $\pi^i_2$ between $(s^i_1,t^i_2)$ (in $G_i$), and then reaches $t$ by edge $(t^i_1,t)$. The last path starts from $s$ and by crossing edge $(s,s^1_1)$ of cost $1$, follows $\pi^i_2$ to reach node $t^1_2$. At this point, it keeps moving along all copies $G_i$ of $G$ by using edges $(t^i_2,s^{i+1}_1)$ of cost $1$ and by using path $\pi^i_2$ to reach $t^i_2$ from $s^i_1$. Finally, the last edge crossed is $(t^{p-1}_2,t)$. The total cost of this path is $p$.

It follows that an algorithm that approximates the minimum bottleneck $p$ edge-disjoint paths problem to within a factor smaller that $p$ can be used to solve 2DP.
\end{proof}

\begin{figure}[t]
    \centering
    \includegraphics[scale=0.75]{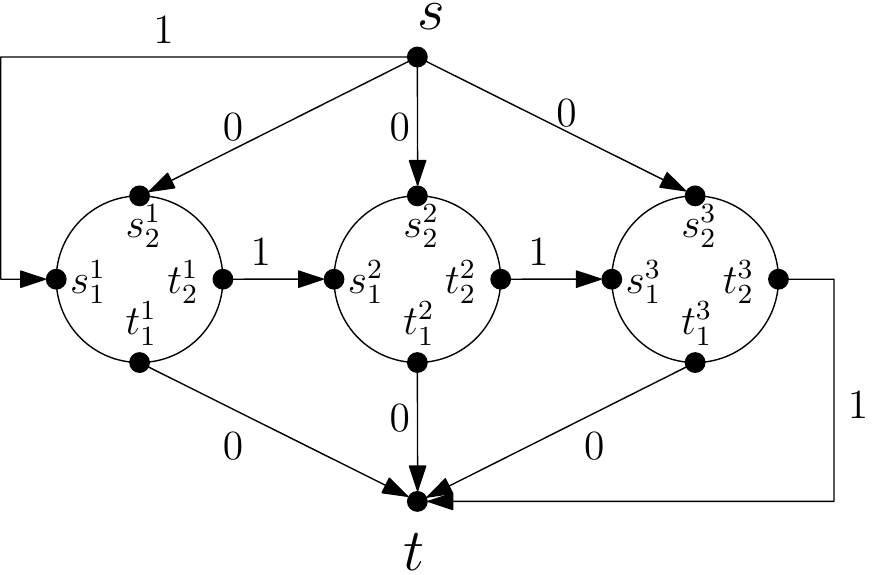}
    \caption{\small Reduction used in Theorem~\ref{thm:max}.}
    \label{fig:max}
\end{figure}

\end{document}